\documentclass[preprint,12pt]{elsarticle}

\usepackage{amssymb}
\usepackage{amsmath}
\usepackage{amsthm}
\usepackage{graphicx}
\usepackage{amsmath}
\usepackage{amssymb}
\usepackage{booktabs}
\usepackage{array}
\usepackage{float} 
\usepackage{tabularx} 
\usepackage{longtable}
\usepackage{array}
\usepackage{multirow}
\usepackage{caption} 
\renewcommand{\arraystretch}{1.2} 
\usepackage{algorithm}
\usepackage{algorithmic}
\usepackage{longtable}
\usepackage{multicol} 
\usepackage{array}
\usepackage{hyperref}
\usepackage{makecell}
\usepackage{subcaption}
\newtheorem{proposition}{Proposition}
\newtheorem{theorem}{Theorem}

\newtheorem{definition}{Definition}
\newtheorem{example}[theorem]{Example}
\usepackage{float}   
\usepackage{paracol}
\usepackage{longtable}
\usepackage{caption}
\usepackage{csquotes}
\usepackage{multicol}
\usepackage{color}
\usepackage{longtable}
\usepackage{pdflscape}   
\usepackage{booktabs}    
\usepackage{enumerate,enumitem}
\begin{document}

\begin{frontmatter}

\title{{Cellular Automata based Resource Efficient Maximally Equidistributed Pseudo-Random Number Generators} }




\author{Bhuvaneswari A}
\ead{abhuvanesh15@gmail.com}
\affiliation{organization={Department of Computer Science and Engineering, National Institute of Technology},
            city={Tiruchirappalli},
            postcode={620015}, 
            state={Tamil Nadu},
            country={India}}
\author{Kamalika Bhattacharjee\corref{mycorrespondingauthor}}
\cortext[mycorrespondingauthor]{Corresponding author}
 \ead{kamalika.it@gmail.com}
\affiliation{organization={Department of Information Technology, Indian Institute of Engineering Science and Technology, Shibpur},
            city={Howrah},
            postcode={711103}, 
            state={West Bengal},
            country={India}}



\vspace{-2em}
\begin{abstract}
 An equidistribution is a theoretical quality criteria that measures the uniformity of a linear pseudo-random number generator (PRNG). In this work, we first show that all existing linear cellular automaton (CA) based pseudo-random number generators (PRNGs) are weak in the equidistribution characteristic. Then we propose a list of light-weight combined CA-based PRNGs with time spacing ($2 \leq s \leq 10$) using linear maximal length cellular automata of degree $31 \leq k \leq 128$ (close to computer word size). We show that these PRNGs achieve maximal period as well as satisfy the maximal equidistribution property. Finally, we show that these combined maximal length CA-based PRNGs pass almost all the empirical testbeds, with speed and performance comparable to the Mersenne Twister.

\end{abstract}

\begin{keyword}
Maximal Length Cellular Automata  \sep Light-weight Combined PRNG  \sep Period \sep Equidistribution \sep Time spacing \sep Statistical Tests
\end{keyword}

\end{frontmatter}

\section{Introduction}
\label{section1}
Pseudo-random Number Generators (PRNGs) are designed to produce sequences of random numbers with long period length, high efficiency, and good theoretical quality. Because of these characteristics, they are widely used in various applications such as simulations, cryptography, and gambling, etc.
A PRNG is a deterministic algorithm that starts with a random seed value and produces a sequence of numbers, known as the pseudo-random numbers. The primary characteristics of a PRNG are large period, reproducibility, uniformity, and efficiency. In addition to these characteristics, certain theoretical quality merits need to be satisfied, such as \emph{equidistribution} that measures the uniformity in each of the possible dimensions
\cite{F26}. The PRNGs should also pass all or the majority of benchmark empirical testbeds that determine the randomness quality.

Majority of the current-day PRNGs are based on the linear feedback shift registers (LFSRs) where there are underlying \emph{primitive polynomials} (modulo 2) responsible for the large period and bitwise implementation. Numerous PRNGs exist based on LFSRs such as Tausworthe \cite{taus9}, Mersenne Twister \cite{mersenne19}, GFSR and Twisted GFSR \cite{tgfsr20}, WELL \cite{well18}, and Xorshift generators \cite{xor26}, etc., that produce the sequence of pseudo-random numbers based on linear recurrence modulo 2. These are referred to as linear PRNGs. These types of PRNGs are implemented easily, can be characterized by matrix algebra and the uniformity property can also be measured by equidistribution. The classical LFSR-based PRNG is called a Tausworthe generator. It generates the random numbers (Hereafter, the term \enquote{random} shall denote \enquote{pseudo-random} only) efficiently when the polynomial contains a smaller number of non-zero coefficients (\(N_1\)). However, it fails to satisfy the equidistribution property and many basic statistical tests. 
Also, since the polynomial contains fewer \(N_1\), only a small percentage of the bits are changed at each step, remaining all dominated by zeros throughout several cycles. In cryptography, it is known as low diffusion capacity problem \cite{F26}, which is undesirable for randomness. The reason for taking such polynomials is the lack of primitive polynomials of large degree that satisfy this high diffusion property.

To address these limitations, two approaches have been taken in the literature. First, use a combined generator \cite{PRNG2} which combines multiple pseudo-random sequences generated by multiple primitive polynomials of relatively smaller degree using the exclusive OR (XOR) operation. This type of combined PRNGs with carefully selected parameters not only achieves good \emph{theoretical figures of merit}, but it also produces a long period length suitable for randomness. Another approach is to use \emph{tempering} -- an output transformation function -- to manipulate the bits generated by the large degree linear recurrence to achieve better equidistribution \cite{combined14}.
However, both of these approaches work well when the component(s) linear recurrence(s) is (are) of sufficiently large. Whereas, several modern-day applications require \emph{light-weight} PRNGs suitable to be implemented inside FPGA platform in hardware as well as to be efficiently run and fit inside the word size of a multicore CPU or GPUs. These light-weight PRNGs have restriction on resource utilization, which makes them vulnerable to security flaws \cite{LightweightFlaw2010} resulting from poor randomness quality. The latest light-weight PRNGs target to exploit the chaotic properties of logistic maps, etc. \cite{lightweightChaotic2020, LightweightLogistic2025}, but that makes them useless to be tested for theoretical \emph{figures of merit}. 

On this backdrop, Cellular Automaton (CA), a natural model of computation and long acclaimed as a source of randomness in physical system \cite{ca21}, comes as a great alternative. A CA uses simple local rules to update the state of each cell parallely, which results in global dynamics. Because of the locality, inherent parallelism and simplicity, Cellular Automata (CAs) have been very popular for hardware implementation \cite{maxCA33}. These benefits have prompted researchers to use CAs as PRNGs \cite{PRNG1}. For this purpose, chaotic \emph{maximal length} CAs, a variant of the simplest form of the CAs, known as Elementary Cellular Automata (ECAs) have mostly been used \cite{maxCA33,parallel24,linear16}. Similar to LFSRs, for maximal length CAs also, there exist underlying primitive polynomials modulo 2 responsible for a period length $2^{k}-1$ where $k$ is the size of the CA. These CAs have also been utilized for designing light-weight and multiple stream PRNGs \cite{parallel24,parallel25}. However, although these PRNGs are very much resource efficient, they still lack in the randomness quality as measured by the benchmark statistical tests \cite{parallel24}. Further, theoretical figures of merit for these CA-based PRNGs are also unknown.

So, in this work, our target is to design resource-efficient PRNGs using the maximal length CAs. Ref. \cite{CA4} and \cite{CA7} provide good resources to get maximal length CAs, which are to be utilized in this work. To obtain larger period and maximal equidistribution, more than one maximal length CAs will be combined in the PRNGs. To make it light-weight, each component CA is restricted to be less than or equal to 128 bits and the number of components is restricted to two. All these maximal length CAs are chaotic and efficient in terms of both hardware and memory requirement. But, being chaotic, these component CAs have self-similar patterns. To break these patterns and systematically improve the randomness quality, we use time-spacing.
We show that by using these CAs we can build PRNGs which are resource efficient, fast, having large period, yet are maximally equidistributed and pass the majority of tests in all benchmark testbeds.



\section{PRNGs and Quality Criteria}
\label{section2.1}
A pseudo-random number generator (PRNG) can be mathematically defined as follows \cite{F26}:
\begin{definition}
PRNG $G$ is a structure that consists of \( (S, \mu, f, U, g) \) : where \( S \) is a finite collection of states,  \( \mu \) is the probability distribution on \( S \) for the starting state, known as the seed, \( f: S \rightarrow S \) is the transition function to update state, \( U \) is the output space, and \( g: S \rightarrow U \) is the output function.  
\end{definition}

Based on $\mu$, an initial state $s_0 \in S$ is selected as seed. Initially, the output is $u_0 = g(s_0)$. The new state is calculated as $s_i = f(s_{i-1})$ for each step $i \geq 1$, and the associated output is $u_i = g(s_i)$. This sequence $(u_i)_{i \geq 0}$ is referred to as the pseudo-random sequence produced by $G$.

\subsection{General framework of linear PRNG}
\label{section2.2}
The general framework of the linear PRNG based on the matrix linear recurrence over $\mathbb{F}_2$ is represented by the following equations \cite{F26}. 
\begin{align}\label{eq:prng}
x_n &= A x_{n-1} \\
y_n &= B x_{n} \\
u_n &= \sum_{l=1}^w y_{n,l-1} 2^{-l} = .\, y_{n,0} y_{n,1} y_{n,2}...
\end{align}

In those equations, $x_n$ is the $k$-bit state vector at step $n$ and $y_n$ is the $w$-bit output vector at step $n$. Both $k$ and $w$ are positive integers. $A$ is the \(  k\times k \)  transition or \emph{characteristic} matrix and $B$ is the \(w\times k \) output transformation or \emph{tempering} matrix. At step $n$, the output random number obtained is the real number \( u_n \in [0,1) \). Any linear PRNG can be fitted into this framework by selecting the proper matrices $A$ and $B$. 

An alternative representation of this sequence is given in Ref.~\cite{period5}. Here, a pseudo-random sequence $u_i \in [0,1), i = 1, 2, \cdots$ is defined as:
\begin{align}\label{eq:laurent}
    u_i = \sigma({f_i(x)}/{M(x)}) & \text{ where }
    f_i(x) = (g(x)f_{i-1}(x) +h(x)) \pmod{M(x)}
\end{align}
Here, $g(x), h(x), M(x)$ and $f_i(x)$ are polynomials in GF\{2, x\} and $\sigma$ is a mapping from GF\{2, x\} to the real filed. In case of Tausworthe generators, $M(x)$ is the characteristics polynomial of degree $k$ which is primitive modulo two, $h(x) = 0$, $g(x) = (x^s \pmod{M(x)})$, with $0 < s < 2^k - 1$ and $\gcd( s, 2^k- 1) = 1$. Such polynomial representation is often helpful to theorize the characteristics of the PRNG.

\subsection{Combined PRNG}
\label{section2.3}
A combined PRNG combines more than one distinct recurrence (component) of the form Equation \ref{eq:prng} as follows. Suppose we have the $j$ components, $j = 1,2,\ldots, J$. For the generator $j$, let $A_j$ be the $k_j \times k_j$ transition matrix and $B_j$ be the $w \times k_j$ output transformation matrix. The $k_j$ bit state vector at step $n$ is $x_{j,n}$. The output of step $n$ of this combined generator is defined by the following equations \cite{combined14}:
\begin{align}\label{eq:combinedprng}
y_n &= B_1 x_{1,n} \oplus B_2 x_{2,n} \oplus \cdots \oplus B_j x_{j,n} \\
u_n &= \sum_{\ell=1}^w y_{n,\ell}\, 2^{-\ell}
\end{align}

Here, $\oplus$ represents the bitwise exclusive-or (XOR) operation. The combined generator is similar to the generator given by \autoref{eq:prng}, with parameters $k = k_1 + \cdots + k_j$, $A = \operatorname{diag}(A_1, \ldots, A_J)$, and $B = (B_1, \ldots, B_J)$. Let $P_j(z)$ represent the characteristic polynomial of $A_j$ for each component $j$. Therefore, the characteristic polynomial of $A$ is $P(z) = P_1(z) \cdots P_J(z)$. The combined PRNG cannot achieve the \emph{maximal period} of $2^k - 1$ since this polynomial is reducible.
The choice of which components to combine is crucial, because, depending on the component selection, it results in a large period and better theoretical properties. Suppose we choose the parameters such that each component $i$ with degree $k_i$ has a maximal period $2^{k_i}-1$ and the period lengths of the chosen components are relatively prime. Then, the period of this combined PRNG is the least common multiple (LCM) of the periods of the individual components 
, as stated in Proposition \ref{prop1} \cite{period5}.

\begin{proposition} [\cite{period5}]\label{prop1}
A combined pseudo-random sequence defined as
$
U_n = u_n^{(1)} \text{ XOR } \cdots \text{ XOR } u_n^{(J)}
$, $n = 1, 2, \ldots$
has a period of  $\prod_{j=1}^J \left( 2^{\deg(M^{(j)})} - 1 \right)$ if for each sequence $u_n^{(j)}$, with characteristic polynomial $M^{(j)}$, the periods of the sequences, $2^{\deg(M^{(j)})} - 1$, $j = 1, \ldots, J$, are pairwise co-prime. 
\end{proposition}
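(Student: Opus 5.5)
The plan is to reduce everything to the periods of the individual components and then use an elementary fact about periodic sequences. In the representation of Equation \ref{eq:laurent}, each component sequence $u_n^{(j)}$ is generated by a linear recurrence whose characteristic polynomial $M^{(j)}$ is primitive of degree $k_j=\deg(M^{(j)})$. The first step is to record that, for any nonzero seed, $u_n^{(j)}$ is purely periodic with exact period $p_j = 2^{k_j}-1$. The reason is that the state sequence $f_i(x)$ is multiplication by $g(x)=x^s \bmod M^{(j)}$ in the field $\mathrm{GF}(2)[x]/(M^{(j)})$. Since $x$ generates the multiplicative group and $\gcd(s,2^{k_j}-1)=1$, the element $g(x)$ also has order $p_j$. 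So the state returns to itself after exactly $p_j$ steps and not before.

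The state-to-output map is injective on the orbit, so the output sequence has exactly the same period. Concretely, the $k_j$ leading bits of the Laurent expansion of $f_i/M^{(j)}$ determine $f_i$. Hence if the outputs $u_{n+t}^{(j)}=u_n^{(j)}$ for all $n$, then the states satisfy $f_{n+t}=f_n$, and therefore $p_j \mid t$.

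The second step is to show that the XOR $U_n$ is periodic with period $L=\operatorname{lcm}(p_1,\dots,p_J)$. This is immediate, since every component repeats after $L$ steps. Under the pairwise coprimality hypothesis, $L=\prod_j p_j$, which is the claimed value.

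The main obstacle is the converse: showing that no smaller $t$ is a period of $U_n$. Cancellation in the XOR could a priori create a shorter period. I would handle this through the combined linear structure rather than by comparing output values. Take the combined state $x_n=(x_{1,n},\dots,x_{J,n})$ with $A=\mathrm{diag}(A_1,\dots,A_J)$ and characteristic polynomial $P=\prod_j M^{(j)}$. Suppose $U_{n+t}=U_n$ for all $n$. Then the bit sequence (say, the first output bit) annihilated by $P$ is also annihilated by $z^t-1$, and hence by $\gcd(P(z),z^t-1)$.

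The $M^{(j)}$ are distinct irreducibles, because their degrees give coprime periods. So the sequence decomposes uniquely as a sum of component sequences, each annihilated by its own $M^{(j)}$. The component with nonzero seed therefore has minimal polynomial exactly $M^{(j)}$. It follows that $M^{(j)} \mid z^t-1$, i.e.\ $p_j=\mathrm{ord}(M^{(j)})$ divides $t$, for every $j$. Pairwise coprimality then gives $\prod_j p_j \mid t$.

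In writing this up, I would state explicitly the implicit hypothesis that every component is seeded nonzero. If some component were zero, the period would be the product over the nonzero components only. I would also note that the decomposition argument is exactly where distinctness of the $M^{(j)}$ is used.
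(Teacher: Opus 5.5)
Your proof is correct. The paper does not prove Proposition~\ref{prop1} itself; it quotes it from \cite{period5}. The only in-paper argument of this kind is the proof of Proposition~\ref{prop5}, which follows the route of \cite{period5}. By the Chinese remainder theorem, $\sum_j f_i^{(j)}/M^{(j)} = Z_i/\prod_j M^{(j)}$, where $Z_i$ evolves by multiplication by a single polynomial $A(x)$ modulo $\prod_j M^{(j)}$. So the combination is one generator of the form of Equation~\ref{eq:laurent} with a reducible modulus, and its period is the multiplicative order of $A(x)$, i.e.\ the $\operatorname{lcm}$ of the component orders $2^{k_j}-1$.

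You instead work with the minimal polynomial of a single output-bit sequence and split it along the pairwise coprime factors $M^{(j)}$. The CRT route gives the structural picture, one linear generator with characteristic polynomial $\prod_j M^{(j)}$, which the paper reuses for the time-spaced generators. However, as carried out in the proof of Proposition~\ref{prop5}, it only fixes the period of the internal state $Z_i$. Since the output keeps only $L$ bits and $\sum_j k_j$ typically exceeds $L$, that argument does not by itself rule out a shorter output period caused by truncation or XOR cancellation. Your argument supplies exactly this lower bound, for any word size. It also makes the hidden hypotheses explicit: all seeds nonzero, and the $M^{(j)}$ distinct.

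Two small repairs are needed. First, the injectivity claim in your first step (the $k_j$ leading bits determine $f_i$) requires $L\ge k_j$. You do not actually need it, since your last step already gives exactness from the first output bit alone. You should, however, say why that bit sequence is not identically zero for a nonzero seed: with an irreducible characteristic polynomial, the orbit of a nonzero state spans the whole state space, so a nonzero linear functional cannot vanish along it. Second, for $s>1$ the transition ``multiply by $x^s$ modulo $M^{(j)}$'' does not have characteristic polynomial $M^{(j)}$. Its characteristic polynomial is the minimal polynomial of $\alpha_j^s$, with $\alpha_j$ a root of $M^{(j)}$. That polynomial is again primitive of degree $k_j$ and order $2^{k_j}-1$, so your argument goes through unchanged with it in place of $M^{(j)}$.
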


\subsection{Theoretical Figures of Merit}
\label{section2.4}
The most important theoretical property of a PRNG is its generated numbers are uniformly distributed over any dimension. This uniformity of the output values can be evaluated using a procedure called \emph{equidistribution}. Consider the generator's output values as $\Psi_t = \{ (u_0, \dots, u_{t-1}):x_0 \in \mathbb{F}_2 \}$. The uniformity of a $\mathbb{F}_2$-linear generator is evaluated by using $\Psi_I = \{ (u_{i_1}, \dots, u_{i_t}): x_0 \in \mathbb{F}_2 \}$, where $I = \{ i_1, \dots, i_t \}$ is a fixed ordered group of non-negative integers with $0 \le i_1 < \mathbb{F}_2$ \cite{F26}. 
For $I = \{ 0, \dots, t-1 \}$, the initial set $\Psi_t = \Psi_I$ is returned. 

Let \(\mathbf{q} = (q_1, \ldots, q_t)\) be an arbitrary vector of non-negative integers. 
Divide the unit hypercube \([0, 1)^t\) into \(2^{q_j}\) equal-length intervals along each axis \(j = 1, \ldots, t\). This partitions \([0, 1)^t\) into \(2^{q_1 + \cdots + q_t}\) rectangular boxes of equal size and shape. If a set \(\Psi_I\) contains exactly \(2^q\) points in each box, where \(q\) 
satisfies $k - q = q_1 + \cdots + q_t,$ then \(\Psi_I\) is said to be \(\mathbf{q}\)-equidistributed.  This is possible only if \(q_1 + \cdots + q_t \leq k\). This \(\mathbf{q}\)-equidistribution can be used to measure the uniformity of the linear PRNG. It is verified by constructing a corresponding binary matrix and checking its rank.
The following steps are used to generate the binary matrix ($\mathcal{B}$) \cite{F26}. 

\begin{itemize}[noitemsep,topsep=0pt]
\item Let \(j=1,2,…k\)
\item Begin the generator for \(j=1,2,…k\) assuming that $x_0= e_j$ is the initial state and that $e_j$ is the unit vector, which has a 1 at position $j$ and zeros
\item Run the generator for $t$ time steps for every $j$ value 
\item Obtain the output $u_n$ from every step 
\item Extract the $q_1$ most significant bits of the output at step 1, $q_2$ at step 2, and $q_t$ at step $t$ 
\item These extracted bits make up the $j^{th} $column of the binary matrix. Repeat this for every column \(j \) 
\end{itemize}

\subsubsection{\((t,\ell)\)-equidistribution}
For \(\mathbf{q}\)-equidistribution, if the value of \(k\) is large, it is very difficult to compute. In such instances, take only a smaller class of vectors \(\mathbf{q}\): those for which all coordinates \(q_j\) are equal to a certain constant \(l \geq 1\). That is, consider only the \(l\) most significant bits of each coordinate, which divides the unit hypercube \([0,1)^t\) into \(2^{tl}\) cubic boxes. If \(\Psi_I\) is \(\mathbf{q}\)-equidistributed for \(\mathbf{q} = (l, \ldots, l)\), then \(\Psi_I\) is said to be \(t\)-distributed with \(l\)-bit accuracy. This \((t,l)\)-equidistribution can be verified using the following Proposition \ref{prop2} for individual generators \cite{{ME3}}. 

\begin{proposition} [\cite{ME3}]\label{prop2}
The sequence is \((t,\ell)\)-equidistributed if and only if the matrix 
\(\mathcal{B}_{t,\ell,s}\) has (full) rank \(t\ell\). 
If \(\ell_t = \lfloor k/t \rfloor < k/t \leq L\), 
then the sequence is also collision-free \(CF(t)\) if and only if the matrix 
\(\mathcal{B}_{t,\ell_t+1,s}\) has rank \(k\).
\end{proposition}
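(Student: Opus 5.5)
We prove both parts of Proposition \ref{prop2} by treating the map from the initial state to the truncated output block as a linear map over $\mathbb{F}_2$. Fix $t$ and $\ell$, and let $\mathcal{B}=\mathcal{B}_{t,\ell,s}$ be the $t\ell\times k$ binary matrix constructed by the procedure above. Its $j$-th column consists of the $\ell$ most significant output bits at each of the $t$ steps when the generator starts from $x_0=e_j$. In the framework of \autoref{eq:prng} these bits are rows of $BA^{n}$ (with time spacing $s$, rows of $BA^{sn}$), so they are $\mathbb{F}_2$-linear functions of $x_0$. By linearity, for an arbitrary seed $x_0=\sum_j x_{0,j}e_j$ the vector of truncated coordinates $(\lfloor 2^\ell u_{i_1}\rfloor,\dots,\lfloor 2^\ell u_{i_t}\rfloor)$, written in binary, is exactly $\mathcal{B}x_0$. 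Hence the box of $[0,1)^t$ (cubic boxes of side $2^{-\ell}$) containing the point produced from seed $x_0$ is indexed bijectively by the vector $\mathcal{B}x_0\in\mathbb{F}_2^{t\ell}$.

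\textbf{Part (i): $(t,\ell)$-equidistribution.} As $x_0$ ranges over all $2^k$ states, the number of points of $\Psi_I$ in the box indexed by $y$ is $|\{x_0:\mathcal{B}x_0=y\}|$. This number is $0$ if $y\notin\operatorname{Im}\mathcal{B}$, and otherwise equals $|\ker\mathcal{B}|=2^{k-\operatorname{rank}\mathcal{B}}$, since each nonempty fibre is a coset of the kernel.
\begin{itemize}[noitemsep,topsep=0pt]
\item If $\operatorname{rank}\mathcal{B}=t\ell$, then $\mathcal{B}$ is surjective, so every box contains exactly $2^{k-t\ell}=2^q$ points with $q=k-t\ell$. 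This is precisely $(t,\ell)$-equidistribution.
\item If $\operatorname{rank}\mathcal{B}<t\ell$, some box index lies outside the image and that box is empty, so equidistribution fails.
\end{itemize}
This proves the first equivalence.

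\textbf{Part (ii): collision-freeness.} Here we apply the same reasoning with $\ell_t+1$ bits. Since $\lfloor k/t\rfloor<k/t$, we have $t(\ell_t+1)>k$, so there are more boxes than points and we cannot ask for equal counts. Collision-free $CF(t)$ means that each box contains at most one point. The hypothesis $k/t\le L$ ensures that $\ell_t+1$ bits per coordinate are actually available from the output, so the matrix $\mathcal{B}_{t,\ell_t+1,s}$ is well defined. Collision-freeness then means the map $x_0\mapsto\mathcal{B}_{t,\ell_t+1,s}\,x_0$ is injective. By the fibre count above, this holds exactly when the kernel is trivial, i.e.\ when $\operatorname{rank}=k$ (full column rank), which gives the second equivalence.

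\textbf{Main point of care.} The step needing the most attention is the identification of the box index with $\mathcal{B}x_0$. Two things must be checked:
\begin{itemize}[noitemsep,topsep=0pt]
\item The columns built from the unit vectors $e_j$ really describe every state by superposition. This relies only on linearity of $f$ and $g$ over $\mathbb{F}_2$, including for combined generators, where $A=\operatorname{diag}(A_j)$ and the XOR of components is still linear.
\item The $\ell$ leading bits of $u_n$ determine its interval of length $2^{-\ell}$, which is immediate from $u_n=\sum_l y_{n,l-1}2^{-l}$.
\end{itemize}
Finally, we note that $\Psi_I$ is taken over all $2^k$ states, including the zero state, exactly as the definition requires.
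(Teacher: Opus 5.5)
Your proof is correct and follows the standard argument behind this result; the paper gives no proof of its own and defers to the cited source. The $t\ell$ truncated output bits are the linear image $\mathcal{B}_{t,\ell,s}x_0$ of the seed, with every nonempty fibre a coset of the kernel, so $(t,\ell)$-equidistribution is surjectivity (rank $t\ell$) and $CF(t)$ is injectivity of $x_0\mapsto\mathcal{B}_{t,\ell_t+1,s}x_0$ (rank $k$). One convention is worth stating explicitly. $\Psi_I$ must be counted as a multiset over all $2^k$ seeds, including $x_0=0$. Your fibre count already uses this, and the collision-free equivalence genuinely needs it.
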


The maximal value of \(l\) for which this characteristic holds is termed as the \emph{resolution} of \(\Psi_I\), and is denoted by \(l_I\). It cannot go beyond \(l_t^{*} = \min(L, \lfloor k/t \rfloor)\). The \emph{resolution gap} for \(\Psi_I\) is defined as \(\Lambda_I= l_t^{*} - l_I\). The resolution gap in dimension \(t\) is defined as \(\Lambda_t = \ell_t^{*} - \ell_t\), and the dimension gap in resolution \(\ell\) is defined as \(\Delta_\ell = t_\ell^{*} - t_\ell\).

These concepts apply not just to irreducible polynomials but also to reducible polynomials (combined generators) \cite{ME3}. In the case of a combined generator, compute the binary matrix for every component. Then, combining the matrices using vertical juxtaposition, we got an equidistribution matrix (\(\widetilde{\mathcal{B}}_{t, l, s}\)). For $j$ components, \(\widetilde{\mathcal{B}}_{t, l, s}\) is the juxtaposition of \(\mathcal{B}_1, \mathcal{B}_2, \ldots, \mathcal{B}_J\), where \(\mathcal{B}_1\) provides the first \(k_1\) columns, \(\mathcal{B}_2\) provides the next \(k_2\) columns, and so on, with \(\mathcal{B}_J\) providing the final \(k_J\) columns. The \((t, l)\)-equidistribution of this combined generator is stated in Proposition \ref{prop3} \cite{ME3}.

\begin{proposition} [\cite{ME3}]\label{prop3}
The sequence is \((t, l)\)-equidistributed if and only if the matrix \(\widetilde{\mathcal{B}}_{t, l, s}\) has full rank \(tl\). If \(lt = \lfloor k / t \rfloor < k / t \leq L\), then the sequence is also \(\text{CF}(t)\) if and only if the matrix \(\widetilde{\mathcal{B}}_{t, lt + 1, s}\) has rank \(k\).
\end{proposition}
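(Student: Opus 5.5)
The plan is to read Proposition \ref{prop3} as a statement about a linear map over $\mathbb{F}_2$ and its image, and to argue by counting fibres. The combined generator is itself a linear generator of the form \eqref{eq:prng}, with state $x_0=(x_{1,0},\dots,x_{J,0})\in\mathbb{F}_2^k$, $A=\operatorname{diag}(A_1,\dots,A_J)$ and $B=(B_1,\dots,B_J)$. Fix $t$, $l$ and the time spacing $s$. Consider the map $\phi:\mathbb{F}_2^k\to\mathbb{F}_2^{tl}$ that sends $x_0$ to the concatenation of the $l$ most significant bits of $y_0, y_s, \dots, y_{(t-1)s}$.

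First I will show that $\phi$ is $\mathbb{F}_2$-linear and that its matrix is exactly $\widetilde{\mathcal{B}}_{t,l,s}$. Linearity holds because $y_n = BA^{n}x_0$ and truncating to the leading bits is a coordinate projection. The matrix claim uses the block-diagonal form of $A$. Taking the unit vector $e_j$ inside the $i$-th block as the seed runs only component $i$, with all other components at zero. So the $j$-th column of $\phi$'s matrix is the corresponding column of $\mathcal{B}_i$, and placing these side by side gives the juxtaposition $\widetilde{\mathcal{B}}_{t,l,s}$ described before the proposition. This identification is the step needing most care, since it depends on the XOR combination (the $\oplus$ in \eqref{eq:combinedprng}) being linear in the joint state.

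Second, the boxes of the $(t,l)$ partition of $[0,1)^t$ correspond bijectively to vectors $v\in\mathbb{F}_2^{tl}$. The number of points of $\Psi_I$ in box $v$, counted with multiplicity over all $2^k$ seeds, is $|\phi^{-1}(v)|$. For a linear map this count is $2^{k-\mathrm{rank}}$ when $v\in\operatorname{im}\phi$ and $0$ otherwise. Hence every box receives exactly $2^{k-tl}$ points if and only if $\phi$ is surjective, that is, if and only if $\mathrm{rank}\,\widetilde{\mathcal{B}}_{t,l,s}=tl$. This gives the first assertion.

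For the collision-free part, take $l=l_t+1$, so that $tl>k$ and the partition has more boxes than there are points. $\mathrm{CF}(t)$ means each box contains at most one point, which is the same as $\phi$ being injective. By rank–nullity, $\phi$ is injective exactly when $\mathrm{rank}\,\widetilde{\mathcal{B}}_{t,l_t+1,s}=k$. The assumption $k/t\le L$ guarantees that $l_t+1$ bits are actually available in the $w$-bit output, so the matrix is well defined.

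The main obstacle is the multiplicity and indexing convention. $\Psi_I$ should be understood as a multiset over all seeds, including the zero seed, so that the fibre count is exact. The columns of the equidistribution matrix for the time-spaced outputs must also be taken from $A^{is}$. Once these conventions are fixed, the rest of the argument is linear algebra.
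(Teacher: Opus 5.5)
Your proof is correct and matches the standard linear-algebra argument of \cite{ME3}, which the paper imports without proof: the leading output bits are a linear image of the joint seed, the block-diagonal transition makes the matrix the juxtaposition of the component matrices, and $(t,l)$-equidistribution and $\mathrm{CF}(t)$ are surjectivity and injectivity of that map, i.e.\ rank $tl$ and rank $k$. One convention you leave implicit is harmless: the paper's matrix reads outputs from step $1$ (or step $s$) rather than step $0$, but each component matrix $T_j$ is invertible (its primitive characteristic polynomial has constant term $1$), so this shift only multiplies $\widetilde{\mathcal{B}}_{t,l,s}$ on the right by an invertible matrix and leaves both the rank and the multiset of points unchanged.
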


\subsubsection{Maximal Equidistribution}
\label{section2.5}

Suppose a unit hypercube with \(t\) dimensions is divided into \(2^{t\ell}\) equally sized cubic cells. A series of \(t\)-dimensional points, or vectors, are generated by the generator and placed into the cells based on their respective values. The parameter \(\ell_t\) represents the resolution in dimension \(t\), defined as the largest \(\ell\) for which each cell contains the same number of points. The generator is referred to as maximally equidistributed (ME) when it reaches the maximum value of \(\ell_t\) across all dimensions \cite{ME3}.

To determine whether the sequence is maximally equidistributed, it is not necessary to compute $l_t$ for $ t = 1, 2, \ldots, k$. The following Proposition \ref{prop4} states that a maximal period sequence is maximally equidistributed if and only if $\Lambda_t$ = 0 for all $t \in \Phi_1 \cup \Phi_2$ \cite{ME3}. 

\begin{proposition} [\cite{ME3}]\label{prop4}
A maximal period sequence is ME if and only if \(\Lambda_t = 0\) for all \(t \in \Phi_1 \cup \Phi_2\). It is also ME if and only if \(\Delta_l = 0\) for all \(l \in \Psi_1 \cup \Psi_2\). Here $L$ is the word size of a computer,
\begin{align}\label{eq:phi}
\scriptsize
\Phi_1 &= \left\{ \max\left( 2, \left\lfloor \frac{k}{L} \right\rfloor \right), \dots, \left\lfloor \sqrt{k} \right\rfloor \right\} \\
\Phi_2 &= \left\{ t = \left\lfloor \frac{k}{\ell} \right\rfloor \,\middle|\, \ell = 1, \dots, \left\lfloor \sqrt{k} \right\rfloor \right\}\\
\Psi_{1} &= \{1, \ldots, \lfloor \sqrt{k} \rfloor \} \\
\Psi_{2} &= \{\ell = \lfloor k/t \rfloor \mid 
t = \max(2, \lfloor k/L \rfloor), \ldots, \lfloor \sqrt{k-1} \rfloor \}
\end{align}
\end{proposition}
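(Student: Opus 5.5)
The plan is to use two structural facts about the resolution function and combine them. Write $\ell_t$ for the resolution in dimension $t$ and $t_\ell$ for the largest dimension in which the sequence is $(t,\ell)$-equidistributed. Put $\ell_t^{*}=\min(L,\lfloor k/t\rfloor)$ and $t_\ell^{*}=\lfloor k/\ell\rfloor$.

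\textbf{Monotonicity.} Both $\ell_t$ and $t_\ell$ are non-increasing.
\begin{itemize}
\item In $t$: $(t,\ell)$-equidistribution implies $(t-1,\ell)$-equidistribution, because projecting away one coordinate merges $2^\ell$ boxes into one. In matrix terms, deleting the last $\ell$ rows of $\mathcal{B}_{t,\ell,s}$ (or $\widetilde{\mathcal{B}}_{t,\ell,s}$) keeps full row rank, by Proposition~\ref{prop2}/\ref{prop3}.
\item In $\ell$: $(t,\ell)$ implies $(t,\ell')$ for every $\ell'\le\ell$, by deleting rows in the same way.
\end{itemize}
In addition, the maximal period hypothesis gives $\ell_t=\ell_t^{*}$ for all $t\le\max(1,\lfloor k/L\rfloor)$ and $t=1$. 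The reason is that a maximal period sequence visits every nonzero state, so the first $t$ outputs are equidistributed whenever $tL\le k$. Thus only $t\ge\max(2,\lfloor k/L\rfloor)$ needs checking.

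\textbf{Reduction of the first equivalence.} ME means $\Lambda_t=0$ for all $t$; the forward direction is trivial, so only sufficiency needs work. Fix $t$ with $\lfloor\sqrt k\rfloor<t\le k$ and set $\ell=\lfloor k/t\rfloor\le\sqrt k$, so that $\ell^{*}_t=\ell$. Let $t'=\lfloor k/\ell\rfloor\in\Phi_2$. Then $t\le t'$, and $\lfloor k/t'\rfloor=\ell$ by the standard floor identity $\lfloor k/\lfloor k/\lfloor k/t\rfloor\rfloor\rfloor=\lfloor k/t\rfloor$.
\begin{itemize}
\item Since $\Lambda_{t'}=0$, the sequence is $(t',\ell)$-equidistributed.
\item By monotonicity in $t$ it is then $(t,\ell)$-equidistributed, so $\ell_t\ge\ell=\ell_t^{*}$.
\end{itemize}
For $t\le\lfloor\sqrt k\rfloor$, either $t\in\Phi_1$ directly or $t<\max(2,\lfloor k/L\rfloor)$, in which case the maximal period argument applies. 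The case where $L$ caps $\ell_t^{*}$ is handled the same way, since the cap only lowers the target.

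\textbf{The second equivalence.} This is the dual argument with the roles of $t$ and $\ell$ swapped. For $\ell>\lfloor\sqrt k\rfloor$, set $t=\lfloor k/\ell\rfloor<\sqrt k$ and pass to $\ell'=\lfloor k/t\rfloor\in\Psi_2$ using the same floor identity. Then $\Delta_{\ell'}=0$ together with monotonicity in $\ell$ gives $\Delta_\ell=0$.

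One can also check that ``$\Lambda_t=0$ for all $t$'' and ``$\Delta_\ell=0$ for all $\ell$'' are equivalent. Both say that the set $\{(t,\ell)\}$ of achieved pairs equals the full staircase $\{t\ell\le k,\ \ell\le L\}$, again by monotonicity.

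\textbf{Main obstacle.} I expect the hard part to be the boundary bookkeeping rather than any deep step:
\begin{itemize}
\item verifying the floor identity;
\item verifying that $t'$ lands inside the range allowed by $\Phi_2$ (respectively $\Psi_2$, whose upper limit $\lfloor\sqrt{k-1}\rfloor$ needs separate care when $k$ is a perfect square);
\item handling the interaction with the word-size cap $L$ near $t\approx k/L$.
\end{itemize}
I would first try to verify these by writing $k=q\ell+r$ and comparing directly. For the perfect-square edge case of $\Psi_2$, I would check $\ell=\sqrt k$ separately; it is already covered by $\Psi_1$.
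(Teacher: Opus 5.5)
The paper gives no proof of its own here; it quotes the result from the literature, so I am judging your proposal on its merits. Your core reduction is correct and is the standard argument. It consists of monotonicity in $t$ and in $\ell$, plus the identity $\lfloor k/\lfloor k/\lfloor k/t\rfloor\rfloor\rfloor=\lfloor k/t\rfloor$, which lifts every $t>\lfloor\sqrt k\rfloor$ to $t'=\lfloor k/\lfloor k/t\rfloor\rfloor\in\Phi_2$; your handling of the cap $L$ in the first equivalence is also right.

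\textbf{The step that fails.} You claim that maximal period gives $\ell_t=\ell_t^*$ for all $t\le\lfloor k/L\rfloor$. Visiting every nonzero state makes the \emph{state} $x_0$ equidistributed. But $(t,L)$-equidistribution needs the linear map $x_0\mapsto(y_0,\dots,y_{t-1})$ to have rank $tL$, and that depends on $A$ and $B$ jointly. Concretely, take any maximal length CA with $k\ge 2L$ whose output is its first $L$ cells. By the three-neighbourhood rule, $y_0$ and $y_1$ depend only on cells $0,\dots,L$ of $x_0$, so $\mathcal{B}_{2,L,s}$ has rank at most $L+1<2L$. The sequence has period $2^k-1$ and $2L\le k$, yet it is not $(2,L)$-equidistributed. (For Tausworthe generators the same can happen once $s\neq L$, since the $tL$ output bits are then not consecutive.)

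\textbf{The repair} uses only your own tools. Let $t_0=\lfloor k/L\rfloor$. For $t<t_0$ one has $\ell_t^*=L=\ell_{t_0}^*$, and $t_0$ is itself checked: it lies in $\Phi_1$ if $t_0\le\lfloor\sqrt k\rfloor$; otherwise $L\le\lfloor\sqrt k\rfloor$ and $t_0\in\Phi_2$. So $\ell_t\ge\ell_{t_0}=L$ by monotonicity, which is exactly why $\Phi_1$ starts at $\max(2,\lfloor k/L\rfloor)$. The maximal-period hypothesis is genuinely needed only at $t=1$ when $k<2L$. There $\ell_1^*=\min(L,k)>\ell_2^*$, so no checked dimension reaches it. Even then it works only because the leading $\min(L,k)$ output bits are independent coordinates of the state: consecutive bits of the m-sequence for Tausworthe, selected cells for the CA generators of this paper. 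For an arbitrary output matrix $B$, maximal period alone does not give $\ell_1=\ell_1^*$, so you should state this as the property being used.

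\textbf{The dual argument} inherits the same issue and has one more at the cap. If $k/2<\ell\le\min(L,k)$, then $t=\lfloor k/\ell\rfloor=1$ and $\ell'=k$, which is never in $\Psi_1\cup\Psi_2$ for $k\ge 2$, since $\Psi_2$ starts at $t=2$. These $\ell$ therefore rest on $\ell_1=\min(L,k)$, not on $\Psi_2$. At the other end, $\ell'=\lfloor k/t\rfloor$ can exceed $L$ when $t=\max(2,\lfloor k/L\rfloor)$; for example $k=100$, $L=30$, $\ell=30$ gives $t=3$ and $\ell'=33$, where $\Delta_{\ell'}$ is undefined. Use $\ell''=\min(L,\lfloor k/t\rfloor)$ instead, which still satisfies $\ell\le\ell''$ and $\lfloor k/\ell''\rfloor=t$; equivalently, $\Psi_2$ must be read with this cap. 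Your perfect-square worry, by contrast, is vacuous: $t<\sqrt k$ forces $t^2\le k-1$, so $t\le\lfloor\sqrt{k-1}\rfloor$ automatically.
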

Here, the resolution in dimension $t$ ($l_t$) denotes the highest value of $l\leq L$ for which the sequence is $(t,l)$-equidistributed, where $l_t\leq l_{t}^{*}$ and $l_{t}^{*}$=$\min(L, \lfloor k/t \rfloor)$.

\section{Maximal Length CAs and Linear CA-based PRNGs}
\label{section2.6}
A Cellular Automaton (CA) is a mathematical model that contains a grid of cells where each cell may take any of the states from a finite state set $\mathcal{S}$. Each cell updates its state using the current cell's state and states of its \emph{neighbors} using a local rule (${R}$). These neighborhood combinations are called \emph{Rule Min Term} (RMT).  A collection of states at each time step $t$ is called a \emph{configuration}. All cells are updated in parallel by the local rule(s); so the CA goes from one configuration to the next configuration. 


In this work, we consider finite Elementary Cellular Automaton (ECA), which is a one-dimensional, 2-state 3-neighborhood CA -- each cell depends on the states of its left neighbor, itself, and its right neighbor to update its state. The CA has $n$ cells under null boundary condition; that is, the cells are numbered from $0$ to $n-1$ and the neighbors of the terminal cells are set to zero ($0$). So, there are $2^3 =8$ RMTs possible with the neighborhood combinations from 000 to 111. There are totally $2^{2^3}=256$ ECA rules, numbered from $0$ to $255$. For instance, ECA rules 90 and 150 are defined as:
\[
\begin{aligned}
\text{Rule 90:} \quad & S_i^{t+1} = S_{i-1}^{t} \oplus S_{i+1}^t, \\[6pt]
\text{Rule 150:} \quad & S_i^{t+1} = S_{i-1}^t \oplus S_i^t \oplus S_{i+1}^t.
\end{aligned}
\]
where $S_i^t$ is the state of cell $i$ at the $t^{th}$ time step.



In a CA, if each cell updates its state by using the same rule $R$, it is called a \emph{uniform} CA. Otherwise, it is called a \emph{non-uniform} CA. In an $n$-cell non-uniform CA, instead of a single rule $R$, a rule vector \(\mathcal{R} = \langle R_0, R_1, \ldots, R_{n-1} \rangle\) is used, where the $i^{th}$ uses rule $R_i$ to update its state. 
In a CA, if the transition function (local rule) is a linear map, then the CA is said to be a \emph{linear} CA. Here, the configuration space may be considered as a vector space.
For example, ECA rules 90 and 150 are linear CAs. An notable type of linear non-uniform CAs are called the \emph{maximal length} CAs. An important characteristic of this CA is that it provides the maximal period length of \(2^n-1\). It has been shown that, a linear CA can be a maximal length CA if and only if the rule vector includes only the rules 90 and 150, and the characteristic polynomial is primitive over GF(2) \cite{CA7}. Additionally, only a certain combination of rules 90 and 150 over the null boundary condition produces the maximal length CAs.
Linear maximal length CAs can also be characterized by linear algebra. Any linear maximal length CA can be represented by $n \times n$ \emph{characteristic matrix}. 
\begin{definition}\label{def:tmatrix}
	The characteristic matrix \textit{T} is a matrix of order $n$ $\times$ $n$ that is defined as follows: 
	\[
	T[i,j] = 
	\begin{cases}
		1,& \text{if } \text{cell i depends on cell j}\\
		0,              & \text{otherwise}
	\end{cases}
	\]
\end{definition}

For instance, the characteristics matrix of a 5-cell maximal length CA with \(\mathcal{R}\)= $ \langle150, 90, 90, 90, 90 \rangle $ is shown below. The characteristic polynomial of this matrix is primitive --  the CA has a cycle (period) of length $2^5-1=31$. 
\[
T =
\begin{bmatrix}
1 & 1 & 0 & 0 &0 \\
1 & 0 & 1 & 0 &0\\
0 & 1 & 0 & 1 &0\\
0 & 0 & 1 & 0 &1\\
0 & 0 & 0 & 1 &0\\
\end{bmatrix}
\]

Several one-dimensional linear maximal length CAs of size \(n = 1 \text{ to } 500\) are represented in \cite{CA4}, which uses rule 150 for a maximum of 2 cell positions, and all remaining positions use the same rule 90. In Ref \cite{CA7} two almost uniform maximal length CAs, named as $CA (90')$ and $CA (150')$ of different sizes are introduced, which also provides a maximal length of $2^{n}-1$ when the size $n$ is a Sophie Germain prime \cite{CA7}. The $CA (90')$ indicates that all the cells use rule 90, except the first cell, which uses rule 150. Similarly, $CA (150')$ means that all the cells use rule 150 except the first, which uses rule 90. So, except the first cell, all the remaining positions use the same rule.

\subsection{Linear CA-based PRNGs}
\label{section3}


Since linear CAs can be efficiently implemented using linear algebraic operations and linear maximal length CAs have similar properties like LFSRs, they are utilized in the design of PRNGs \cite{maxCA33,parallel24}. The linear maximal length CA based PRNGs were introduced as an efficient model for hardware implementation \cite{maxCA33}. In this paper, the 32-bit maximal length CA \(\mathcal R_1 = \langle 90,150,90,90,90,150,150,90,90,90,90,90,150,90,90,150,150,90,150,150,\\150,90,150,150,150,150,90,150,90,150,90,150 \rangle\) is used for generating pseudo-random numbers for Built-In Self-Test (BIST). In \cite{linear16}, linear $CA(150')$ with size $n$=1409 (named here as $\mathcal{R}_2$) is used for designing a PRNG that has speed and randomness test results comparable to the Mersenne twister. In \cite{parallel24}, a multiple-stream parallel PRNG is designed utilizing 35-bit $CA(150')$ (named here as $\mathcal{R}_3$); it generates a 32-bit random number with only 3 bits of waste, but it failed in most of the statistical testbeds. Then, considering the current computer word size, a 64-bit multiple stream parallel PRNG is developed using a 64-bit maximal length which uses rule 150 at $3^{rd}$ and $5^{th}$ cells and rule 90 at other cells (named here as $\mathcal{R}_4$) \cite{parallel25}; it also passes nearly all the tests in testbeds after utilizing tempering. However, although the linear CA-based PRNGs provide a good period length and speed; the theoretical figures of merit (equidistribution) of these linear CA-based PRNGs has never been examined before. So, we test them in the next section.

\subsection{Equidistribution of Exisiting Linear CA-based PRNGs}
Here, we first take these linear maximal length CAs of sizes 32 ($\mathcal{R}_1$), 35 ($\mathcal{R}_3$), 64 ($\mathcal{R}_4$), and 1409 ($\mathcal{R}_2$) and analyze the theoretical properties (equidistribution) using the steps discussed in Section~\ref{section2.4}. To analyze the equidistribution of this type of PRNGs, we first fit the CA-based PRNG into the framework described in Section \ref{section2.2}. Suppose the rule vector is of size $k$. Then, matrix $A$ is the transition matrix $T$ of size $k \times k$ for the rule vector following Definition~\ref{def:tmatrix} and matrix $B$ is $w$ rows of the $k \times k$ identity matrix.

For example, for a rule vector \(\mathcal R_1 \) of size 32, we test the maximal equidistribution based on Proposition~\ref{prop3}. According to this proposition, if for all $t$-values $(t,l)$ equidistribution is satisfied, then only we say that it is maximally equidistributed. Here $t \in \{2, 3, 4, 5, 6, 8, 10, 16, 32\}$. Among these $t$ values, for only $t=32$, it achieves the $ (t,l)$ equidistribution. Therefore, this linear CA-based PRNG with size 32 is not maximally equidistributed. Following \autoref{table1} gives the result of this equidistribution for the rule vector \(\mathcal R1 \). Similarly, the remaining CAs $\mathcal{R}_2, \mathcal{R}_3$ and $\mathcal{R}_4$ also fail to satisfy the maximal equidistribution.

\begin{scriptsize}
\begin{longtable}{|>{\centering\arraybackslash}p{1cm}|
                        >{\centering\arraybackslash}p{2cm}|
                        >{\centering\arraybackslash}p{2cm}|
                        >{\centering\arraybackslash}p{2cm}|
                        >{\centering\arraybackslash}p{4cm}|}
\caption{Equidistribution Test of linear CA-based PRNG for the Rule vector \(\mathcal R1 \) 
\label{table1}}\\
\hline
\textbf{t} & \textbf{$l_t^*=\min(L, \lfloor k/t \rfloor)$} & \textbf{$l_t \ (l_t \le l_t^*)$} & \textbf{Rank} & \textbf{Equidistribution} \\
\hline
\endfirsthead

\hline
\textbf{t} & \textbf{$l_t^*=\min(L, \lfloor k/t \rfloor)$} & \textbf{$l_t \ (l_t \le l_t^*)$} & \textbf{Rank} & \textbf{Equidistribution} \\
\hline
\endhead

\hline \multicolumn{5}{r}{{}} \\
\endfoot

\hline
\endlastfoot

2  & 16 & 16 & 18 & not $(t,l)$-equi-distributed \\
3  & 10 & 10 & 13 & not $(t,l)$-equi-distributed \\
4  & 8 & 8 & 12 & not $(t,l)$-equi-distributed \\
5  & 6 & 6 & 11 & not $(t,l)$-equi-distributed \\
6  & 5 & 5 & 11 & not $(t,l)$-equi-distributed \\
8  & 4  & 4  & 12 & not $(t,l)$-equi-distributed \\
10 & 3  & 3  & 13 & not $(t,l)$-equi-distributed \\
16 & 2  & 2  & 17 & not $(t,l)$-equi-distributed \\
32 & 1  & 1  & 32 & $(t,l)$-equi-distributed \\
\end{longtable}
\vspace{-1em}
\end{scriptsize}
As a result, all of these existing linear CA-based PRNGs do not achieve maximal equidistribution. To address this issue, in the next sections, our target is to design CA-based PRNGs which are light-weight but still satisfy the theoretical quality criteria essential for a good PRNG. 


\section{Source of Component CAs in Combined CA-based PRNG}
\label{section4}
The primary goal of our work is to develop CA-based PRNGs that achieve large period and maximal equidistribution. As discussed in Section~\ref{section1}, there are two established ways for LFSR based generators to achieve this requirement -- (1) use of proper output transformation function or \emph{tempering} or (2) design a combined generator with proper parameters. Now, tempering is an external function and will require extra computation units (hardware). Also, as in the case for CAs $\mathcal{R}_2$ or $\mathcal{R}_4$, tempering may not provide equidistribution if the underlying primitive polynomial is not of very large degree.

Since we want our design to be hardware efficient, light-weight and completely based on the properties of CA, we prefer not to use any output transformation or \emph{tempering}. So, our choice is to design combined generators with linear maximal length CAs that achieve the maximal equidistribution. To make sure the PRNGs are light-weight and resource efficient, we restrict our component CAs' number of cells $n$ to be up to $128$ so that only $128$ bits are sufficient. To obtain a long period, we select the components of the combined CA-based PRNG that have a maximal period of $(2^n-1)$ and periods that are pairwise relatively prime. 

As already mentioned, a CA is easily implementable on hardware. A cell can be represented by one bit of memory or a flip-flop (FF) with a combinational circuit for the rule that updates its state. For example, hardware implementation of the maximal length CA \(\mathcal{R}\)= $ \langle150, 90, 90, 90, 90 \rangle $ is represented in \autoref{figure1}.
\begin{figure}[!h]  
    \centering
    \fbox{%
        \includegraphics[width=0.5\textwidth]{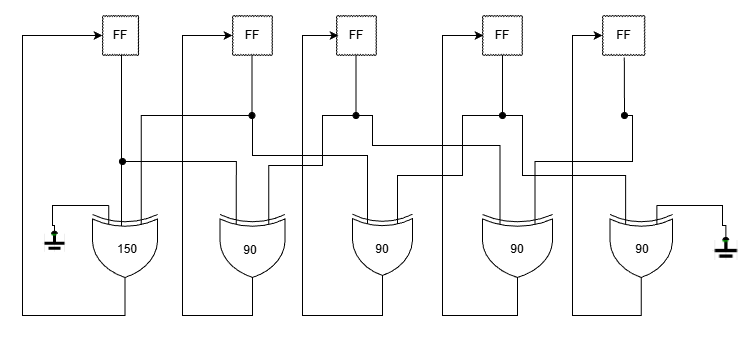}%
    }
\caption{Hardware implementation of a maximal length CA $ \langle150, 90, 90, 90, 90 \rangle $}
\label{figure1}
\vspace{-1em}
\end{figure}
Here, we can see that, rule 90 needs only one 2-input xor gate and 150 needs one 3-input xor gate. So, in the chosen CA, if almost all the cells have the same rule, the same hardware module for a cell can be duplicated making the design more cost-effective. 

Now, as we want to make our generator resource-efficient, we look for such maximal length CAs where majority of the cells use the same rule. In literature, there are two resources available that provide such a list of maximal length CAs given in Ref. \cite{CA4} and Ref. \cite{CA7}. In Ref. \cite{CA4} a list of linear maximal length CAs up to degree 500 is given where only two cells use rule 150 and all other cells use the same rule 90. Whereas, Ref. \cite{CA7} gives two imperfect strategies to generate primitive polynomials of degree $n$ using CA (90') and CA (150'). It has been claimed that, these CAs where only the first cell uses rule 150 (resp. rule 90) and all other cells use rule 90 (resp. 150) are able to generate a period of $2^{n}-1$ when $n$ is a Sophie-Germain prime.  These kind of maximal length CAs are highly efficient for hardware implementation, because most of the cells follow the same rules, with only one or two exceptions in specific positions. 


However, the most important question is whether these CAs have the potential to be a good source of randomness for designing a good PRNG. It has been shown that, a CA can be a good source of randomness only when it is \emph{chaotic} (when defined over infinite lattice) and satisfies some desirable properties like no bias towards any particular state, there is \emph{flow of information} on both directions and very large period \cite{KamalikaPhDThesis}. Since, we are taking maximal length CAs, automatically they satisfy the large period criteria. Moreover, the rules $90$ and $150$ are \emph{balanced} rules since the update function is not biased to generate any state. So, next, we need to see if the CAs have flow of information and are chaotic. In Ref \cite{chaotic23}, a method has been designed to calculate the flow of information in both left and right directions and predict the chaotic behavior of non-uniform CAs using a parameter, called the $P$-parameter. 

To predict the chaotic property by $P$-parameter, information flow is calculated in two parts. First, the probability of how much the neighbors of a cell are affected by a minor change in that cell is calculated; this process is called \emph{information propagation}. Next, it assesses the effect on the neighbors, which may in turn affect the original cell. It is called \emph{information cooking}. The $p$-parameter is an ordered pair $(p_1, p_2)$. In abstract terms, if there is always an information flow over the grid, then, $p_1$ is the minimum of maximum values of information propagation in both directions for each cell. Similarly, for calculating $p_2$, for each cell, first find the minimum of the maximum values between information propagation and information cooking in each of the directions. The minimum of these values over all cells is $p_2$. If both $p_1,p_2$ are high, the CA is considered as chaotic and $(0.75, 0.5)$ are considered as threshold values for chaos. owever, since our CAs are finite to be used as PRNGs, we cannot directly apply this parameter. 

To address this and get an idea of the behavior when defined over infinite lattice, we find the information propagation and information cooking considering any cell having rule 90 or rule 150 replicating over the grid and calculate $P$-parameter value. For rule 90, there is always information propagation as well as information cooking in each of the directions, so probability is 100\%. However, for rule 150, there is always information propagation in each direction, but no information cooking. Therefore, whenever a rule vector uses both rule 90 and 150, the value of $P$-parameter is $(p_1,p_2)=(1,1)$. For example, take the 5-cell linear maximal length CA with \(\mathcal{R}\)= $ \langle 150, 90, 90, 90, 90\rangle $. For this CA, the parameter value is $(p_1,p_2)=(1,1)$; therefore, it satisfies the chaotic property. Since all our linear maximal length CAs use both rule 90 and rule 150, all of them satisfy the chaotic property. So, due to the characteristic of efficient hardware implementation and chaos, these maximal length CAs are used to design light-weight combined CA-based PRNGs, which are discussed in the following sections.

\section{Combined Maximal Length CA-based PRNGs: Phase I}
\label{section4.1}
A combined PRNG usually consists of more than one component. Here, each maximal length CA is considered as a component. Initially, two maximal length CAs are taken  whose periods are relatively prime. Both CAs are evolved for $t$ time steps. At each and every time step, both CA configurations are combined by using the XOR operation (using either left or right padding), and we get a new random sequence. This sequence is called a combined random number sequence. This process is represented in Algorithm~\ref{alg:1}.

\begin{algorithm}[!h]
\caption{Combined two-component maximal length CA-based PRNG}\label{alg:1}
\scriptsize{
\begin{algorithmic}[1]
\REQUIRE Two maximal length CAs $CA_1,CA_2$ with relatively prime periods
\REQUIRE Number of time steps $t$ and Size of each CA rule vector $k_j$
\REQUIRE Seed (configurations)  $S_1^{(0)}\in\{0,1\}^{k_1}$, $S_2^{(0)}\in\{0,1\}^{k_2}$
\ENSURE $U$ \COMMENT{Combined random sequence}
\STATE $U \gets [\;]$
\STATE Initialize each maximal length  CA $CA_j$ with a random configuration (seed)$S_j$ of length $k_j$

\FOR{$n = 1$ \TO $t$}
    \STATE $u_1^{(n)} \gets \textsc{NextState}(CA_1)$
    \STATE $u_2^{(n)} \gets \textsc{NextState}(CA_2)$
    \STATE $u_n \gets u_1^{(n)} \oplus u_2^{(n)}$ 
    \STATE Append $u_n$ to $U$
\ENDFOR

\RETURN $U$
\end{algorithmic}}
\end{algorithm}

Let's consider $k$ to be the size of a maximal length CA with a period $\rho = 2^k - 1$, and let $u_n^j$ denote a random sequence generated by the $j^{th}$ maximal length CA. The combined random sequence is represented by: 
\[
u = u_n^1 \oplus u_n^2 \oplus \dots \oplus u_n^j
\]

In this work, we focus on two components, so $j$=2, so $u_n = u_n^1 \oplus u_n^2$, where $u_n^1$ is the sequence generated from the first CA and $u_n^2$ is the sequence generated from the second CA. Additionally, each component in the combined CA-based PRNG has the period of $\rho_j=2^{k_j}-1$ and period length of two components are relatively prime, so the period of this two component combined CA-based PRNG is equal to $\mathrm{LCM}(2^{k_1} - 1, 2^{k_2} - 1) = (2^{k_1} - 1)(2^{k_2} - 1)$.  Then, we test equidistribution of these combined PRNGs.

\begin{center}
\scriptsize
\setlength{\tabcolsep}{6pt}
\renewcommand{\arraystretch}{0.9}
\vspace{-1.5em}
\begin{longtable}{|c|c|c|c|c||c|c|c|c|c|}
\caption{Characteristics of Component Maximal Length CAs}
\label{table4mca}\\
\hline
\textbf{k} & \textbf{Maximal Length CA} & \textbf{k/2} & $\mathbf{N_1}$ & $\mathbf{N_1/k}$ & \textbf{k} & \textbf{Maximal Length CA} & \textbf{k/2} & $\mathbf{N_1}$ & $\mathbf{N_1/k}$\\  \hline
\endfirsthead

\hline
\textbf{k} & \textbf{Maximal Length CA} & \textbf{k/2} & $\mathbf{N_1}$ & $\mathbf{N_1/k}$ & \textbf{k} & \textbf{Maximal Length CA} & \textbf{k/2} & $\mathbf{N_1}$ & $\mathbf{N_1/k}$\\ 
\hline
\endhead

29 & 1 & 14 & 11 & 0.38 & 79 & 9 & 39 & 11 & 0.14 \\ \hline
30 & 1 & 15 & 9 & 0.30 & 80 & 1,71 & 40 & 31 & 0.39 \\ \hline
31 & 11 & 15 & 5 & 0.16 & 81 & 1 & 40 & 25 & 0.30 \\ \hline
32 & 1,15 & 16 & 11 & 0.34 & 82 & 1,69 & 41 & 33 & 0.40 \\ \hline
33 & 1 & 16 & 11 & 0.33 & 83 & 1 & 41 & 27 & 0.32 \\ \hline
34 & 1,19 & 17 & 11 & 0.32 & 84 & 36 & 42 & 39 & 0.46 \\ \hline
35 & 1 & 17 & 13 & 0.37 & 85 & 1,46 & 42 & 27 & 0.32 \\ \hline
36 & 6 & 18 & 16 & 0.44 & 86 & 1 & 43 & 31 & 0.36 \\ \hline
37 & 9 & 18 & 17 & 0.46 & 87 & 13 & 43 & 15 & 0.17 \\ \hline
38 & 7 & 19 & 13 & 0.34 & 88 & 5 & 44 & 41 & 0.47 \\ \hline
39 & 1 & 19 & 13 & 0.33 & 89 & 1 & 44 & 29 & 0.33 \\ \hline
40 & 8 & 20 & 17 & 0.42 & 90 & 1 & 45 & 31 & 0.34 \\ \hline
41 & 1 & 20 & 17 & 0.41 & 91 & 15 & 45 & 19 & 0.21 \\ \hline
42 & 19 & 21 & 23 & 0.55 & 92 & 3,71 & 46 & 45 & 0.49 \\ \hline
43 & 3 & 21 & 17 & 0.39 & 93 & 33 & 46 & 21 & 0.22 \\ \hline
44 & 4,26 & 22 & 21 & 0.47 & 94 & 42 & 47 & 25 & 0.26 \\ \hline
45 & 9 & 22 & 17 & 0.37 & 95 & 1 & 47 & 13 & 0.14 \\ \hline
46 & 2,10 & 23 & 19 & 0.41 & 96 & 6 & 48 & 35 & 0.36 \\ \hline
47 & 13 & 23 & 9 & 0.19 & 97 & 1,82 & 48 & 27 & 0.28 \\ \hline
48 & 15 & 24 & 19 & 0.40 & 98 & 8 & 49 & 35 & 0.36 \\ \hline
49 & 1,10 & 24 & 21 & 0.43 & 99 & 13 & 49 & 25 & 0.25 \\ \hline
50 & 11 & 25 & 27 & 0.54 & 100 & 1,67 & 50 & 27 & 0.27 \\ \hline
51 & 1 & 25 & 17 & 0.33 & 101 & 1,20 & 50 & 33 & 0.32 \\ \hline
52 & 2,29 & 26 & 17 & 0.32 & 102 & 33 & 51 & 35 & 0.34 \\ \hline
53 & 1 & 26 & 21 & 0.40 & 103 & 15 & 51 & 21 & 0.20 \\ \hline
54 & 9 & 27 & 25 & 0.46 & 104 & 2,40 & 52 & 27 & 0.26 \\ \hline
55 & 17 & 27 & 11 & 0.20 & 105 & 1 & 52 & 31 & 0.29 \\ \hline
56 & 4,14 & 28 & 25 & 0.45 & 106 & 30 & 53 & 45 & 0.42 \\ \hline
57 & 9 & 28 & 17 & 0.29 & 107 & 19 & 53 & 25 & 0.23 \\ \hline
58 & 17 & 29 & 25 & 0.43 & 108 & 1,35 & 54 & 37 & 0.34 \\ \hline
59 & 4,15 & 29 & 17 & 0.28 & 109 & 1,4 & 54 & 41 & 0.38 \\ \hline
60 & 2,38 & 30 & 21 & 0.35 & 110 & 13 & 55 & 37 & 0.34 \\ \hline
61 & 1,10 & 30 & 27 & 0.44 & 111 & 27 & 55 & 13 & 0.12 \\ \hline
62 & 5 & 31 & 19 & 0.30 & 112 & 2,5 & 56 & 43 & 0.38 \\ \hline
63 & 31 & 31 & 3 & 0.04 & 113 & 1 & 56 & 23 & 0.20 \\ \hline
64 & 3,5 & 32 & 21 & 0.32 & 114 & 22 & 57 & 43 & 0.37 \\ \hline
65 & 1 & 32 & 13 & 0.20 & 115 & 41 & 57 & 19 & 0.16 \\ \hline
66 & 1,19 & 33 & 21 & 0.31 & 116 & 16 & 58 & 37 & 0.32 \\ \hline
67 & 15 & 33 & 17 & 0.25 & 117 & 33 & 58 & 27 & 0.23 \\ \hline
68 & 8 & 34 & 23 & 0.34 & 118 & 30 & 59 & 37 & 0.31 \\ \hline
69 & 1 & 34 & 23 & 0.33 & 119 & 1 & 59 & 19 & 0.16 \\ \hline
70 & 1,37 & 35 & 17 & 0.24 & 120 & 3,73 & 60 & 37 & 0.30 \\ \hline
71 & 17 & 35 & 15 & 0.21 & 121 & 45 & 60 & 19 & 0.15 \\ \hline
72 & 6,55 & 36 & 29 & 0.40 & 122 & 14 & 61 & 35 & 0.29 \\ \hline
73 & 9 & 36 & 25 & 0.34 & 123 & 51 & 61 & 15 & 0.12 \\ \hline
74 & 1 & 37 & 29 & 0.39 & 124 & 21 & 62 & 33 & 0.27 \\ \hline
75 & 7 & 37 & 19 & 0.25 & 125 & 13 & 62 & 19 & 0.15 \\ \hline
76 & 2,22 & 38 & 33 & 0.43 & 126 & 40 & 63 & 21 & 0.16 \\ \hline
77 & 3,44 & 38 & 29 & 0.38 & 127 & 15 & 63 & 11 & 0.08 \\ \hline
78 & 1,41 & 39 & 19 & 0.24 & 128 & 1,29 & 64 & 27 & 0.21 \\ \hline

\end{longtable}\vspace{-1.5em}
\end{center}

\subsection{Considering at maximum 2 cells with Rule 150 \cite{CA4}}
We first take the maximal length CAs of Ref. \cite{CA4} with degrees between $32$ and $64$ where at maximum two cells have rule 150 and all other cells have rule 90. \autoref{table4mca} lists these CAs. Here, $k$ indicates the size or the degree of the polynomial of the corresponding linear maximal length CA. The corresponding rule vector is represented by the positions of cells using rule 150. For example, a maximal length CA of size 32 contains 1 and 15; it means that the cell numbered 2 and 15 only use rule 150, and the remaining cells use rule 90. In this way, we initially take all the two-component combinations from this table whose period lengths are relatively prime.  There are 306 possible combinations in total. We find that, even though every combination achieves a large period close to $2^{k_1 + k_2 }$, none of them attain the maximal equidistribution.

Since none of two-component combined CA-based PRNG  satisfy the maximally equidistributed property, we also increase the number of components from two to three to see if there is any improvement. We select the 3-component combinations $(k_1, k_2,k_3)$ within the range of $32 \leq k_1, k_2,k_3 \leq 64$. A total of 1433 pairwise relatively prime combinations have been found. These combinations also have a period length close to $2^{k_1+k_2+k_3}$, but we again observe that maximal equidistribution is not satisfied.  Therefore, these two-component and three-component combined CA-based PRNGs attain a large period that is close to $2^{k_1 + k_2 + \dots + k_j}$, but fail to satisfy the maximally equidistribution property necessary as theoretical figures of merit for randomness. So, we do not increase the number of components any further. 

\begin{example}\label{ex:1}
 Consider two maximal length CAs of sizes 31$(k_1)$ and 32$(k_2)$  respectively; the periods are $\rho_1 = 2^{31} - 1$ and $\rho_2 = 2^{32} - 1$. The corresponding rule vectors are \(\mathcal R1\)=$ \langle90, 90, 90, 90, 90, 90, 90, 90, 90, 90, 150, 90, 90, 90, 90, 90,\\ 90, 90, 90, 90, 90, 90, 90, 90, 90, 90, 90, 90, 90, 90, 90 \rangle$ and \(\mathcal R2\)=$ \langle 150, 90, 90, 90, 90,\\ 90, 90, 90, 90, 90,90,90, 90, 90, 150, 90, 90, 90, 90, 90, 90, 90, 90, 90, 90, 90, 90, 90,\\ 90, 90, 90, 90 \rangle$. The periods of these two CAs are relatively prime since $\gcd(2^{31} - 1,\, 2^{32} - 1) = 1$; therefore, combining both CAs configuration achieves a large period of approximately $2^{63}$. Then to test the equidistribution let us first compute the values of $\phi_1$ and $\phi_2$ using Equation~\ref{eq:phi}; here $\Phi_1 = \{2,3,4,5,6,7\}$ and $\Phi_2= \{9,10,12,15,21,31,63\}$. Therefore, $t= \Phi_1 \cup \Phi_2 = \{2,3,4,5,6,7,9,10,12,\\15,21,31,63\}$. However, out of all the $t$ values, only two $t$ values satisfy the $(t,l)$-equidistribution. So, this PRNG does not attain maximal equidistribution. These results are displayed in the following \autoref{table2}.
\end{example}

\begin{scriptsize}
\vspace{-1.5em}
\begin{longtable}{|>{\centering\arraybackslash}p{1cm}|
                        >{\centering\arraybackslash}p{3cm}|
                        >{\centering\arraybackslash}p{2cm}|
                        >{\centering\arraybackslash}p{2cm}|
                        >{\centering\arraybackslash}p{4cm}|}
\caption{Equidistribution of combined CA-PRNG (31,32)\label{table2}}\\
\hline
\textbf{t} & \textbf{$l_t^*=\min(L, \lfloor k/t \rfloor)$} & \textbf{$l_t \ (l_t \le l_t^*)$} & \textbf{Rank} & \textbf{Equidistribution} \\
\hline
\endfirsthead
\multicolumn{5}{c}{{\tablename\ \thetable{} -- continued from previous page}} \\
\hline
\textbf{t} & \textbf{$l_t^*=\min(L, \lfloor k/t \rfloor)$} & \textbf{$l_t \ (l_t \le l_t^*)$} & \textbf{Rank} & \textbf{Equidistribution} \\
\hline
\endhead
\hline 
\endfoot
\hline
\endlastfoot
2  & 31 & 31 & 39 & not $(t,l)$-equi-distributed \\
3  & 21 & 21 & 33 & not $(t,l)$-equi-distributed \\
4  & 15 & 15 & 29 & not $(t,l)$-equi-distributed \\
5  & 12 & 12 & 29 & not $(t,l)$-equi-distributed \\
6  & 10 & 10 & 26 & not $(t,l)$-equi-distributed \\
7  & 9  & 9  & 27 & not $(t,l)$-equi-distributed \\
9  & 7  & 7  & 29 & not $(t,l)$-equi-distributed \\
10 & 6  & 6  & 29 & not $(t,l)$-equi-distributed \\
12 & 5  & 5  & 31 & not $(t,l)$-equi-distributed \\
15 & 4  & 4  & 34 & not $(t,l)$-equi-distributed \\
21 & 3  & 3  & 45 & not $(t,l)$-equi-distributed \\
31 & 2  & 2  & 62 & $(t,l)$-equi-distributed \\
63 & 1  & 1  & 63 & $(t,l)$-equi-distributed \\

\end{longtable}\vspace{-1.5em}
\end{scriptsize}

\subsection{Considering $CA(90')$ or $CA(150')$}
Next, we increase the size of the CAs up to $128$ while taking $CA(90')$ and $CA(150')$ from Ref. \cite{CA7}. Such CAs of sizes close to 32, 64 and 128 are combined and analyzed for period length and equidistribution property. The maximal length $CA(90')$ and $CA(150')$ close to $128$ are of cell lengths $105, 113, 119$, close to 64 are of cell lengths 65, 69 and close to 32 are 26, 29, 35, 39. Here, we combine these CAs in the following three ways: $(i)$ $CA(90')$ with $CA(90')$ $(ii).$ $CA(150')$ with $CA(150')$ (iii). $CA(90')$ with $CA(150')$ and  $CA(150')$ with $CA(90').$

The two-component relatively prime combinations for the CA sizes close to 32, 64 and 128 for combined generator are displayed in \autoref{table4}). Here also, we observe that, the PRNGs achieve the period length close to $2^{k_1+k_2}$ where $k_1$ is the size of the first component and $k_2$ is the size of the second component. However, when tested for the maximal equidistribution property, here also, each of them fails to satisfy the maximal equidistribution. Therefore the combined PRNGs using $CA(90')$ and $CA(150')$ also attain the period close to the maximal, but does not achieve the maximally equidistributed characteristic. In the next section, we further improve the PRNGs to achieve this maximal equidistribution.

\begin{scriptsize}
\begin{longtable}{|>{\centering\arraybackslash}p{2cm}|%
                >{\centering\arraybackslash}p{3.4cm}|%
                >{\centering\arraybackslash}p{3.4cm}|%
                >{\centering\arraybackslash}p{3.4cm}|}
\caption{Combined PRNG combinations using $CA(90')$ and $CA(150')$}
\label{table4}\\
\hline
\multirow{2}{*}{\textbf{CA}} & 
\multicolumn{3}{c|}{\textbf{Combined PRNGs with Relatively Prime Combination}} \\ \cline{2-4}
& \textbf{$CA(90')$ with $CA(90')$ \newline $(k_1,k_2)$ : $(90',90')$} &
\textbf{$CA(150')$ with $CA(150')$ \newline $(k_1,k_2)$ : $(150',150')$} &
\textbf{$CA(90')$ with $CA(150')$ \newline $(k_1,k_2)$ : $(90',150')$} \\ \hline

Close to 32 &
(26,29), (26,35), (29,35), (29,39), (35,39) &
(26,29), (26,35), (29,35), (29,39), (35,39) &
(26,29), (29,26), (26,35), (35,26), (29,35), (35,29), (29,39), (39,29), (35,39), (39,35) \\ \hline

Close to 64 &
(65,69) &
(65,69) &
(65,69), (69,65) \\ \hline
Close to 128 &
(105,113), (113,119) &
(105,113), (113,119) &
(105,113), (113,105), (113,119), (119,113) \\ \hline
\end{longtable}
\vspace{-1.5em}
\end{scriptsize}

\section{Combined Maximal Length CA-based PRNGs: Phase II}
\label{section4.2}
One of the problems of chaotic CAs is they have tendency to generate self-similar and self-organized patterns like Sierpinski triangles and equilateral triangles \cite{KamalikaPhDThesis}. This makes the generator non-random. So, to achieve better randomness quality by breaking these symmetrical patterns, Ref. \cite{maxCA33} introduced the concept of site spacing.
This means instead of using the configuration of each time step, some steps are skipped. This can improve the random numbers quality and can give better equidistribution.

To develop such a PRNG, we take two maximal length CAs with period lengths that are relatively prime, and run for $t$ time steps. Now, instead of combining the configuration of the CAs at each time step, skip $s-1$ steps and combine the configuration at each $s^{th}$ step. For instance, if $s$ = 2, combine the sequence after each second step. This process is shown in Algorithm~\ref{algo:2}. 

\begin{algorithm}[H]
\caption{Combined two-component maximal length CA-based PRNG with time spacing}\label{algo:2}
\scriptsize
\begin{algorithmic}[1]
\REQUIRE Two maximal length CAs $CA_1, CA_2$ with relatively prime periods $\rho_1,\rho_2$
\REQUIRE Number of time steps $t$, step size $s$,  $s$ must satisfy $\gcd\!\left(s,\,\rho1 \rho_2\right) = 1$, and sizes of each CA rule vector $k_1, k_2$
\REQUIRE Initial configurations (seeds) $S_1^{(0)} \in \{0,1\}^{k_1}$, $S_2^{(0)} \in \{0,1\}^{k_2}$
\ENSURE $U$ \COMMENT{Combined random sequence}

\STATE $U \gets [\;]$
\STATE Initialize each CA $CA_j$ with its seed $S_j$

\FOR{$n = 1$ \TO $t$}
    \FOR{$i = 1$ \TO $s$}
        \STATE $u_1^{(n)} \gets \textsc{NextState}(CA_1)$
        \STATE $u_2^{(n)} \gets \textsc{NextState}(CA_2)$
    \ENDFOR
    \STATE $u_n \gets u_1^{(n)} \oplus u_2^{(n)}$ 
    \STATE Append $u_n$ to $U$
\ENDFOR

\RETURN $U$
\end{algorithmic}
\end{algorithm}

Since, configurations of each $s-1$ steps are skipped, in terms of the PRNG, each component can be represented by the matrix $A$ = $(T)^s$, where $T$ is the characteristic matrix for the component maximal length CA.
Consider $\rho_i$ and and $\rho_j$ be the period lengths of the first and second component. Then $\rho$ is the LCM of two component periods; that is, $\rho = \operatorname{lcm}(\rho_i, \rho_j)$. If $\gcd(s, \rho) = 1$, then this combined PRNG with time spacing achieves a large period close to $2^{k_1 + k_2}$. 
Otherwise, the period is $\frac{\rho}{\gcd(s, \rho)}$. These can be derived from the following proposition, similar to Proposition~\ref{prop1}. 

\begin{proposition}\label{prop5}
Let $k_1$ and $k_2$ be the 
degrees of two primitive polynomials (modulo two) 
with periods $\rho_1$ and $\rho_2$ such that $\gcd(\rho_1, \rho_2) = 1$ and $s>1$ is the time spacing size. If $\gcd(s, \rho) = 1$, then the period of the combined generator with $s$ time spacing is close to maximal, that is,
$\rho = \operatorname{lcm}(\rho_1, \rho_2) \approx 2^{k_1 + k_2}$.
Otherwise, 
the period is 
$\text{$\rho$} = \frac{\rho}{\gcd(s, \rho)}$
\end{proposition}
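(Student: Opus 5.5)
We argue through the orders of the state-transition maps. For component $j$, the characteristic matrix $T_j$ of the maximal length CA has a primitive characteristic polynomial of degree $k_j$. Hence $T_j$ has multiplicative order exactly $\rho_j = 2^{k_j}-1$ in $GL_{k_j}(\mathbb{F}_2)$, and every nonzero seed lies on one cycle of length $\rho_j$. The combined generator with time spacing $s$ has transition matrix $A = \operatorname{diag}(T_1^{s}, T_2^{s})$. Its output is $u_n = B_1 x_{1,n}\oplus B_2 x_{2,n}$ with $B_j$ identity projections; this is the framework of Section~\ref{section2.3} with $A_j = T_j^{s}$. So it suffices to compute the period of the state sequence under $A$ for a seed with both components nonzero. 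As in Proposition~\ref{prop1}, the output period then coincides with the state period, and we take that identification as given.

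\textbf{Step 1 (single component).} In a cyclic group of order $\rho_j$ generated by $T_j$, the element $T_j^{s}$ has order $\rho_j/\gcd(s,\rho_j)$. Applied to any nonzero seed, whose orbit under $T_j$ is a regular cycle of length $\rho_j$, the spaced sequence $x_{j,n} = T_j^{sn}x_{j,0}$ has period exactly $\rho_j/\gcd(s,\rho_j)$.

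\textbf{Step 2 (combination).} The period of the pair $(x_{1,n},x_{2,n})$ is the least $N$ such that both $\frac{\rho_1}{\gcd(s,\rho_1)}\mid N$ and $\frac{\rho_2}{\gcd(s,\rho_2)}\mid N$. It is therefore the lcm of these two quantities. Since $\gcd(\rho_1,\rho_2)=1$, the two quantities are coprime, so the lcm is their product. Using coprimality again, $\gcd(s,\rho_1)\gcd(s,\rho_2) = \gcd(s,\rho_1\rho_2) = \gcd(s,\rho)$. The period is therefore
\[
\frac{\rho_1\rho_2}{\gcd(s,\rho_1)\gcd(s,\rho_2)} = \frac{\rho}{\gcd(s,\rho)}, \qquad \rho=\operatorname{lcm}(\rho_1,\rho_2)=\rho_1\rho_2 .
\]
When $\gcd(s,\rho)=1$ this equals $\rho=(2^{k_1}-1)(2^{k_2}-1)\approx 2^{k_1+k_2}$, which is the first claim. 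The "otherwise" case is the general formula itself.

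\textbf{Main obstacle.} The delicate point is passing from the state period to the period of the XORed output sequence. In general, the output of a linear generator could have a smaller period than its state for a special seed. To handle this, I would use the characteristic polynomial of $T_j^{s}$. Its roots are the $s$-th powers of the roots of the primitive polynomial. When $\gcd(s,\rho_j)=1$ these roots are again primitive elements, so $T_j^{s}$ has an irreducible primitive characteristic polynomial $P_j^{(s)}$. If $\gcd(s,\rho_j)>1$, it is a power of an irreducible polynomial of lower order. The output bit sequences satisfy the recurrence with polynomial $P_1^{(s)}P_2^{(s)}$, and $P_1^{(s)}$ and $P_2^{(s)}$ are distinct irreducibles because their degrees or orders differ. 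Any output coordinate whose minimal polynomial involves both factors then has period equal to the lcm of the factor orders, as in Proposition~\ref{prop1}. I would also note that the statement's "close to maximal" should be read as "equal to $\rho$", since $\rho < 2^{k_1+k_2}-1$.
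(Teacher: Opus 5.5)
Your proof is correct, but it takes a different route from the paper's.

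\textbf{The paper's route.} The paper stays in the polynomial representation of Equation~\ref{eq:laurent}. It writes the spaced components as $f_i=a(x)^s f_{i-s}\bmod m_1(x)$ and $g_i=b(x)^s g_{i-s}\bmod m_2(x)$, and sets $Z_i=m_2f_i+m_1g_i \bmod m_1m_2$. Using the Chinese-remainder lift $A(x)=n_1m_2a^s+n_2m_1b^s$, it shows $Z_{i+s}=A(x)Z_i$. So the spaced combination is a single generator of the form of Equation~\ref{eq:laurent} with reducible modulus $m_1m_2$. It then states the period without computing the order of $A(x)$ modulo $m_1m_2$, and phrases the hypothesis as ``$s$ coprime to $m_1(x)m_2(x)$''.

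\textbf{Your route.} You compute orders directly:
\begin{itemize}
\item $T_j^s$ has order $\rho_j/\gcd(s,\rho_j)$ in $\langle T_j\rangle$, and every nonzero seed attains it.
\item The pair period is the lcm of the two component periods, and coprimality makes it a product.
\item $\gcd(s,\rho_1)\gcd(s,\rho_2)=\gcd(s,\rho)$ then gives $\rho/\gcd(s,\rho)$.
\end{itemize}
This is exactly the computation the paper leaves implicit. Under $\mathbb{F}_2[x]/(m_1m_2)\cong\mathbb{F}_2[x]/(m_1)\times\mathbb{F}_2[x]/(m_2)$, the order of $A(x)$ is the lcm of the orders of $a^s$ and $b^s$.

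\textbf{What each buys.} Your argument is more elementary and actually proves the ``otherwise'' formula. The paper's argument buys the identification of the spaced combined generator with a Tausworthe-type generator of modulus $m_1m_2$, which is how combined-generator theory is usually phrased.

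\textbf{The state-versus-output step.} You also raise this question, which the paper never addresses even though $B=(B_1,B_2)$ is not injective. Your sketch is right in substance, but two details need adjusting.
\begin{itemize}
\item When $\gcd(s,\rho_j)>1$, $P_j^{(s)}$ may be a proper power. The distinct irreducibles you need are the minimal polynomials $q_j$ of $T_j^s$ (the radicals of $P_j^{(s)}$), whose orders $\rho_j/\gcd(s,\rho_j)$ are coprime.
\item You do not need a single coordinate whose minimal polynomial involves both factors. Argue on the output vector instead. If $y$ has period $N$, then $(E^N-1)B_1x_1=(E^N-1)B_2x_2$, with $E$ the shift. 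This common sequence is annihilated by both $q_1(E)$ and $q_2(E)$. Since $\gcd(q_1,q_2)=1$ (except in the degenerate case $\rho_1\rho_2\mid s$, where both components are constant), it vanishes. Injectivity of the padding maps $B_j$ then forces each component period to divide $N$.
\end{itemize}
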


\begin{proof}
Consider two pseudo-random sequences that are defined by
\begin{equation*}\small
\begin{aligned}
f_i(x) &= a(x)^s\, f_{i-s}(x) \pmod{m_1(x)}, \\[4pt]
g_i(x) &= b(x)^s\, g_{i-s}(x) \pmod{m_2(x)},
\end{aligned}
\end{equation*}
 where $m_1(x)$ and $m_2(x)$ are the two primitive polynomials modulo two where the output is extracted at every $s^{th}$ step.
Then, following Equation~\ref{eq:laurent}, the combined sequence is derived from $V_i(x) = \frac{f_i(x)}{m_1(x)} + \frac{g_i(x)}{m_2(x)} \pmod{1}$.

Consider a polynomial $Z_i(x) = m_2(x)f_i(x) + m_1(x)g_i(x) \pmod{m_1(x)m_2(x)}$. Then, $V_i(x) = \frac{Z_i(x)}{m_1(x)m_2(x)}$.
Now, let $A(x)=n_1(x)m_2(x)(a(x)^s)+n_2(x)m_1(x)(b(x)^s)$
where 
$n_1(x)m_2(x)\equiv 1\pmod{m_1(x)}$ and 
$n_2(x)m_1(x)\equiv 1\pmod{m_2(x)}$.  Then, 
\begin{equation*}
\small
\begin{aligned}
A(x) Z_i(x)&= (n_1(x)m_2(x)(a(x)^s)+n_2(x)m_1(x)(b(x)^s))
              (m_2(x)f_i(x)+m_1(x)g_i(x))\\
           &=n_1(x)m_2^2(x)(a(x)^{2s}) f_{i-s}(x) + n_2(x)m_1^2(x)(b(x)^{2s}) g_{i-s}(x) \pmod{m_1(x)m_2(x)}\\
           &=m_2(x)(a(x)^s) f_i(x) + m_1(x)(b(x)^s) g_i(x) \pmod{m_1(x)m_2(x)}\\ 
           &=m_2(x)f_{i+s}(x)+m_1(x)g_{i+s}(x) \pmod{m_1(x)m_2(x)}\\
           &=Z_{i+s}(x) \pmod{m_1(x)m_2(x)}
\end{aligned}
\end{equation*}
which follows the format of Equation~\ref{eq:laurent}.
Hence, if $s$ is relatively prime to $m_1(x)m_2(x)$, the combined generator achieves the close to maximal period. That is, $\rho = \operatorname{lcm}(\rho_1, \rho_2) \approx 2^{k_1 + k_2}$ if $\gcd{(s, \rho)} =1$; otherwise, it is reduced to $\frac{\rho}{\gcd(s, \rho)}$. Hence, the proof.
\end{proof}

So, if our maximal length CA sizes are relatively prime and the time spacing size is relative prime to the product of the periods of each CA, then we shall achieve the close to maximal period in the combined sequence with time spacing. But, increasing the value of $s$ means more computation is to be done to produce the next number in the sequence and evidently increases the time complexity of the generator. So, to make the generator light-weight, in this study we set $s$ as $2 \leq s \leq 10$. 

Furthermore, this time, we also consider the number of non-zero coefficients $(N_1)$ in the characteristic polynomial of degree $k$ since the fraction $N_1/k$ of non-zero coefficients being close to $50\%$ can be used as a secondary figure of merit for a PRNG \cite{F26}. 
We also increase the maximum CA size to 128, that is taking each component size as $32 \leq k_i \leq 128$. 
Then, we test the equidistribution of this combined CA-based PRNG with time spacing. This type of combined CA-based PRNGs which satisfy the maximally equidistributed condition for certain $s$ values are discussed as follows.


\subsection{Considering at maximum 2 cells with Rule 150}
\label{section4.2.1}

\autoref{table4mca} lists the characteristics of all possible component CAs within degree 32 to 128 from Ref.~\cite{CA4}. In this table, $N_1$ denotes the number of non-zero coefficients and also displays the results of which maximal length CAs in the range s $32 \leq k_i \leq 128$ satisfy secondary figures of merit. Now, we combine CAs for both the case -- when $N_1$ value is close to half of the degree as well as when $N_1$ is not close to half of the degree. 

\subsubsection{Combine maximal length CA with $N_1$ close to half of the degree:}
A review of \autoref{table4mca} indicates that only five maximal length CAs (with CA sizes of 37, 42, 44, 50, and 92) have an $N_1$ value close to half of their degree. From these 5 maximal length CAs, the following two-component relatively prime combinations are possible: $(k_1, k_2) = (37, 42), \ (37, 44),\ (37, 50), \ \text{and}\\ \ (37, 92)$.
All these combinations achieve a large period for the $s$ values in the range $2 \leq s \leq 10$ that are relatively prime to the product of the individual components' period lengths. But, only 3 combinations attain the maximal equidistribution (ME). The following \autoref{table5} describes equidistribution results of these combinations. Here \emph{$s$ period} denotes for which $s$ values close to maximal period is achieved and \emph{$s$ ME} denotes for which $s$ values the generator attains ME. The column \emph{$\rho$ ME} indicates for which $s$ values the generator is both ME as well as attains close to maximal period. For example, for the combined PRNG ($k_1, k_2)$ with $k_1 = 37$ and $k_2 = 42$, the period is equal to $\mathrm{LCM}\left(2^{37} - 1,\, 2^{42}-1\right) \approx 2^{79}$ for $s=2,4,5,8,10$ and it satisfies maximal equidistribution for $s=8,10$. 

\begin{tiny}
\begin{table}[h!]
\vspace{-1.0em}
\centering
\caption{Combined PRNG Equidistribution}
\label{table5}
\resizebox{0.65\textwidth}{!}{
\begin{tabularx}{\textwidth}{|p{1cm}|p{1cm}|>{\centering\arraybackslash}X|>{\centering\arraybackslash}X|>{\centering\arraybackslash}X|}
\hline
$\mathbf{k_1}$ & $\mathbf{k_2}$ & \textbf{$s$ period $(\rho)$} & \textbf{$s$ ME} & \textbf{$\rho$ ME} \\ \hline
37 & 42 &  2, 4, 5, 8, 10  & 8, 10 & 8, 10\\ \hline
37 & 44 &  2, 4, 7, 8      & 7, 8  & 7,8 \\ \hline
37 & 50 &  2, 4, 5, 7, 8, 10 & 10  & 10 \\ \hline
37 & 92 &  2, 4, 7, 8      & -    & - \\ \hline
    \end{tabularx}}
    \vspace{-1.5em}
\end{table}
\end{tiny}

\subsubsection{Combine maximal length CA with $N_1$ not close to half of the degree:}
\label{section4.2.2}
Apart from the five maximal length CAs mentioned above, all other maximal length CAs in \autoref{table4mca} have $N_1$ values that are not close to half of the degree. Next, we identify all two-component combinations whose periods are relatively prime from those CAs; we obtain $2725$ combinations in total. All combinations give a large period for $s$ values relatively prime to $\rho$ as discussed above. Then we test the equidistribution -- out of the $2725$ possible combinations, $368$ combinations attain maximal equidistribution. These results are summarized in the following \autoref{table6equi}. Here, \emph{$s$ period}, \emph{$s$ ME} and \emph{$\rho$ ME} have the same meaning as before. Among these $368$ maximally equidistributed combinations, some combined PRNGs achieve maximal equidistribution and a period close to $2^{k_1+k_2}$ while others achieve maximal equidistribution but the period is not close to the maximal one; in that case, the PRNG has a period of $\frac{\rho}{\gcd(s, \rho)}$. 

\begin{center}
\scriptsize
\setlength{\tabcolsep}{6pt}
\renewcommand{\arraystretch}{0.9}
\vspace{-1.5em}
\begin{longtable}{|c|c|c|c|c||c|c|c|c|c|}
\caption{Equidistribution of Combined CA-based PRNGs with Time Spacing }
\label{table6equi}\\
\hline
\textbf{$k_1$} & \textbf{$k_2$} & \textbf{$s$ period $(\rho)$} & \textbf{$s$ ME} & \textbf{$\rho$ ME} & \textbf{$k_1$} & \textbf{$k_2$} & \textbf{$s$ period $(\rho)$} & \textbf{$s$ ME} & \textbf{$\rho$ ME} \\  \hline
\endfirsthead

\hline
\textbf{$k_1$} & \textbf{$k_2$} & \textbf{$s$ period $(\rho)$} & \textbf{$s$ ME} & \textbf{$\rho$ ME} & \textbf{$k_1$} & \textbf{$k_2$} & \textbf{$s$ period $(\rho)$} & \textbf{$s$ ME} & \textbf{$\rho$ ME} \\  \hline
\endhead

\hline
31&32&2,4,7,8&5,6,7,8,9,10&7,8 & 34&47&2,4,5,7,8,10&7,8,9,10&8,10\\\hline
31&33&2,3,4,5,6,8,9,10&8,9,10&8,9,10 & 34&49&2,4,5,7,8,10&7,8,9,10&8,10\\\hline
31&34&2,4,5,7,8,10&5,6,7,8,9,10&5,7,8,10 & 34&53&2,4,5,7,8,10&9,10&10\\\hline
31&35&2,3,4,5,6,7,8,9,10&8,9,10&8,9,10 & 34&55&2,4,5,7,8,10&8,9,10&8,10\\\hline
31&36&2,4,87,8,9,10&7,8,9,10&8 & 34&57&2,4,5,8,10&9,10&10\\\hline
31&38&2,4,5,7,8,10&7,8,9,10&8,10 & 34&59&2,4,5,7,8,10&10&10\\\hline
31&39&2,3,4,5,6,8,9,10&8,9,10&8,9,10 & 34&61&2,4,5,7,8,10&10&10\\\hline
31&40&2,4,7,8&8,9,10&8 & 34&63&2,4,5,8,10&10&10\\\hline
31&41&2,3,4,5,6,7,8,9,10&8,9,10&8,9,10 & 34&65&2,4,5,7,8,10&10&10\\\hline
31&43&2,3,4,5,6,7,8,9,10&7,9,10&7,9,10 & 35&36&2,4,8&9,10&-\\\hline
31&45&2,3,4,5,6,8,9,10&8,9,10&8,9,10 & 35&38&2,4,5,7,8,10&9,10&10\\\hline
31&46&2,4,5,7,8,10&7,8,9,10&7,10 & 35&43&2,3,4,5,6,7,8,9,10&9,10&9,10\\\hline
31&47&2,3,4,5,6,7,8,9,10&8,9,10&8,9,10 & 35&46&2,4,5,7,8,10&9,10&10\\\hline
31&48&2,4,8&8,9,10&8 & 35&47&2,3,4,5,6,7,8,9,10&8,9,10&8,9,10\\\hline
31&49&2,3,4,5,6,7,8,9,10&8,9,10&8,9,10 & 35&48&2,4,8&7,8,9,10&8\\\hline
31&50&2,4,5,7,8,10&10&8,10 & 35&52&2,4,7,8&9,10&-\\\hline
31&51&2,3,4,5,6,8,9,10&8,9,10&8,9,10 & 35&54&2,4,5,8,10&9,10&10\\\hline
31&52&2,4,7,8&8,9,10&8 & 35&57&2,3,4,5,6,8,9,10&9,10&9,10\\\hline
31&53&2,3,4,5,6,7,8,9,10&9,10&9,10 & 35&58&2,4,5,7,8,10&9,10&10\\\hline
31&54&2,4,5,8,10&9,10&10 & 35&59&2,3,4,5,6,7,8,9,10&9,10&9,10\\\hline
31&55&2,3,4,5,6,7,8,9,10&9,10&9,10 & 35&61&2,3,4,5,6,7,8,9,10&10&10\\\hline
31&57&2,3,4,5,6,8,9,10&9,10&9,10 & 35&62&2,4,5,7,8,10&10&10\\\hline
31&58&2,4,5,7,8,10&9,10&10 & 35&64&2,4,7,8&10&-\\\hline
31&59&2,3,4,5,6,7,8,9,10&10&10 & 35&66&2,4,5,8,10&10&10\\\hline
31&60&2,4,8&10&- & 35&67&2,3,4,5,6,7,8,9,10&10&10\\\hline
31&61&2,3,4,5,6,7,8,9,10&10&10 & 35&68&2,4,7,8&10&-\\\hline
32&33&2,4,8&9,10&- & 36&47&2,4,8&7,8,9,10&8\\\hline
32&43&2,4,7,8&6,7,8,9,10&7,8 & 36&41&2,4,8&9,10&-\\\hline
32&45&2,4,8&7,8,9,10&8 & 36&43&2,4,8&9,10&-\\\hline
32&47&2,4,7,8&7,8,9,10&7,8 & 36&49&2,4,8&9,10&-\\\hline
32&49&2,4,7,8&8,9,10&8 & 36&53&2,4,8&9,10&-\\\hline
32&51&2,4,8&9,10&- & 36&55&2,4,8&8,9,10&8\\\hline
32&53&2,4,7,8&9,10&- & 36&59&2,4,8&9,10&-\\\hline
32&55&2,4,7,8&9,10&- & 36&61&2,4,8&10&-\\\hline
32&57&2,4,8&9,10&- & 36&65&2,4,8&10&-\\\hline
32&59&2,4,7,8&10&- & 36&67&2,4,8&10&-\\\hline
32&61&2,4,7,8&10&- & 37&50&2,4,8&9,10&-\\\hline
32&63&2,4,8&10&- & 37&92&2,4,8&9,10&-\\\hline
33&34&2,4,5,8,10&9,10&10 & 38&39&2,4,5,8,10&10&10\\\hline
33&38&2,4,5,8,10&9,10&10 & 38&41&2,4,5,7,8,10&9,10&10\\\hline
33&40&2,4,8&9,10&- & 38&43&2,4,5,7,8,10&9,10&10\\\hline
33&43&2,3,4,5,6,8,9,10&9,10&9,10 & 38&45&2,4,5,8,10&8,9,10&10\\\hline
33&46&2,4,5,8,10&8,9,10&8,10 & 38&47&2,4,5,7,8,10&9,10&10\\\hline
33&47&2,3,4,5,6,8,9,10&8,9,10&8,9,10 & 38&49&2,4,5,7,8,10&9,10&10\\\hline
33&49&2,3,4,5,6,8,9,10&8,9,10&8,9,10 & 38&51&2,4,5,8,10&9,10&10\\\hline
33&50&2,4,5,8,10&8,9,10&8,10 & 38&53&2,4,5,7,8,10&9,10&10\\\hline
33&52&2,4,8&10&- & 38&55&2,4,5,7,8,10&8,9,10&8,10\\\hline
33&56&2,4,8&10&- & 38&59&2,4,5,7,8,10&9,10&10\\\hline
33&58&2,4,5,8,10&9,10&10 & 38&61&2,4,5,7,8,10&9,10&10\\\hline
33&59&2,3,4,5,6,8,9,10&9,10&9,10 & 38&63&2,4,5,8,10&10&10\\\hline
33&61&2,3,4,5,6,8,9,10&10&10 & 38&65&2,4,5,7,8,10&10&10\\\hline
33&62&2,4,5,8,10&10&10 & 38&67&2,4,5,7,8,10&10&10\\\hline
34&35&2,4,5,7,8,10&9,10&10 & 38&69&2,4,5,8,10&10&10\\\hline
34&39&2,4,5,8,10&9,10&10 & 39&40&2,4,8&9,10&-\\\hline
34&41&2,4,5,7,8,10&9,10&10 & 39&43&2,3,4,5,6,8,9,10&9,10&9,10\\\hline
34&43&2,4,5,7,8,10&8,9,10&8,10 & 39&46&2,4,5,8,10&9,10&10\\\hline
34&45&2,4,5,8,10&7,8,9,10&8,10 & 39&47&2,3,4,5,6,8,9,10&9,10&9,10\\\hline
39 & 49 & 2,3,4,5,6,8,9,10 & 9,10 & 9,10 & 45 & 47 & 2,3,4,5,6,8,9,10 & 8,9,10 & 8,9,10 \\ \hline
39 & 50 & 2,4,5,8,10 & 9,10 & 10 & 45 & 49 & 2,3,4,5,6,8,9,10 & 9,10 & 9,10 \\ \hline
39 & 55 & 2,3,4,5,6,8,9,10 & 9,10 & 9,10 & 45 & 52 & 2,4,8 & 9,10 & - \\ \hline
39 & 56 & 2,4,8 & 9,10 & - & 45 & 53 & 2,3,4,5,6,8,9,10 & 10 & 10 \\ \hline
39 & 58 & 2,4,5,8,10 & 9,10 & 10 & 45 & 56 & 2,4,8 & 9,10 & - \\ \hline
39 & 59 & 2,3,4,5,6,8,9,10 & 9,10 & 9,10 & 45 & 58 & 2,4,5,8,10 & 8,9,10 & 8,10 \\ \hline
39 & 61 & 2,3,4,5,6,8,9,10 & 9,10 & 9,10 & 45 & 59 & 2,3,4,5,6,8,9,10 & 9,10 & 9,10 \\ \hline
39 & 62 & 2,4,5,8,10 & 10 & 10 & 45 & 61 & 2,3,4,5,6,8,9,10 & 9,10 & 9,10 \\ \hline
39 & 67 & 2,3,4,5,6,8,9,10 & 10 & 10 & 45 & 62 & 2,4,5,8,10 & 9,10 & 10 \\ \hline
39 & 68 & 2,4,8 & 10 & - & 45 & 64 & 2,4,8 & 9,10 & - \\ \hline
40 & 41 & 2,4,7,8 & 9,10 & - & 45 & 67 & 2,3,4,5,6,8,9,10 & 10 & 10 \\ \hline
40 & 43 & 2,4,7,8 & 8,9,10 & 8 & 46 & 47 & 2,4,5,7,8,10 & 9,10 & 10 \\ \hline
40 & 47 & 2,4,7,8 & 9,10 & - & 46 & 49 & 2,4,5,7,8,10 & 10 & 10 \\ \hline
40 & 49 & 2,4,7,8 & 9,10 & - & 46 & 51 & 2,4,5,8,10 & 9,10 & 10 \\ \hline
40 & 51 & 2,4,8 & 9,10 & - & 46 & 53 & 2,4,5,7,8,10 & 10 & 10 \\ \hline
40 & 57 & 2,4,8 & 10 & - & 46 & 55 & 2,4,5,7,8,10 & 10 & 10 \\ \hline
40 & 59 & 2,4,7,8 & 9,10 & - & 46 & 57 & 2,4,5,8,10 & 10 & 10 \\ \hline
40 & 61 & 2,4,7,8 & 9,10 & - & 46 & 59 & 2,4,5,7,8,10 & 9,10 & 10 \\ \hline
40 & 63 & 2,4,8 & 10 & - & 46 & 63 & 2,4,5,8,10 & 9,10 & 10 \\ \hline
40 & 67 & 2,4,7,8 & 10 & - & 46 & 65 & 2,4,5,7,8,10 & 9,10 & 10 \\ \hline
41 & 43 & 2,3,4,5,6,7,8,9,10 & 9,10 & 9,10 & 46 & 67 & 2,4,5,7,8,10 & 10 & 10 \\ \hline
41 & 45 & 2,3,4,5,6,8,9,10 & 9,10 & 9,10 & 47 & 48 & 2,4,8 & 9,10 & - \\ \hline
41 & 46 & 2,4,5,7,8,10 & 9,10 & 10 & 47 & 49 & 2,3,4,5,6,7,8,9,10 & 9,10 & 9,10 \\ \hline
41 & 47 & 2,3,4,5,6,7,8,9,10 & 8,9,10 & 8,9,10 & 47 & 50 & 2,4,5,7,8,10 & 9,10 & 10 \\ \hline
41 & 48 & 2,4,8 & 8,10 & 8 & 47 & 51 & 2,3,4,5,6,8,9,10 & 9,10 & 9,10 \\ \hline
41 & 49 & 2,3,4,5,6,7,8,9,10 & 9,10 & 9,10 & 47 & 52 & 2,4,7,8 & 8,10 & 8 \\ \hline
41 & 50 & 2,4,5,7,8,10 & 9,10 & 10 & 47 & 53 & 2,3,4,5,6,7,8,9,10 & 9,10 & 9,10 \\ \hline
41 & 54 & 2,4,5,8,10 & 9,10 & 10 & 47 & 54 & 2,4,5,8,10 & 9,10 & 10 \\ \hline
41 & 55 & 2,3,4,5,6,7,8,9,10 & 9,10 & 9,10 & 47 & 55 & 2,3,4,5,6,7,8,9,10 & 9,10 & 9,10 \\ \hline
41 & 57 & 2,3,4,5,6,8,9,10 & 9,10 & 10 & 47 & 56 & 2,4,7,8 & 8,9,10 & 8 \\ \hline
41 & 58 & 2,4,5,7,8,10 & 9,10 & 10 & 47 & 60 & 2,4,8 & 9,10 & - \\ \hline
41 & 59 & 2,3,4,5,6,7,8,9,10 & 9,10 & 10 & 47 & 64 & 2,4,7,8 & 9,10 & - \\ \hline
41 & 61 & 2,3,4,5,6,7,8,9,10 & 9,10 & 9,10 & 47 & 68 & 2,4,7,8 & 10 & \\ \hline
41 & 62 & 2,4,5,7,8,10 & 10 & 10 & 47 & 72 & 2,4,8 & 10 & - \\ \hline
41 & 63 & 2,3,4,5,6,8,9,10 & 10 & 10 & 47 & 73 & 2,3,4,5,7,8,9,10 & 10 & - \\ \hline
41 & 64 & 2,4,7,8 & 10 & - & 47 & 57 & 2,3,4,5,6,8,9,10 & 9,10 & 9,10 \\ \hline
41 & 66 & 2,4,5,8,10 & 10 & 10 & 47 & 58 & 2,4,5,7,8,10 & 9,10 & 10 \\ \hline
41 & 67 & 2,3,4,5,6,7,8,9,10 & 10 & 10 & 47 & 59 & 2,3,4,5,6,7,8,9,10 & 9,10 & 9,10 \\ \hline
41 & 68 & 2,4,7,8 & 10 & - & 47 & 61 & 2,3,4,5,6,7,8,9,10 & 9,10 & 9,10 \\ \hline
43 & 45 & 2,3,4,5,6,8,9,10 & 9,10 & 9,10 & 47 & 62 & 2,4,5,7,8,10 & 9,10 & 10 \\ \hline
43 & 46 & 2,4,5,7,8,10 & 9,10 & 10 & 47 & 63 & 2,3,4,5,6,8,9,10 & 10 & 10 \\ \hline
43 & 47 & 2,3,4,5,6,7,8,9,10 & 9,10 & 9,10 & 47 & 65 & 2,3,4,5,6,7,8,9,10 & 10 & 10 \\ \hline
43 & 48 & 2,4,8 & 8,9,10 & 8 & 47 & 66 & 2,4,5,8,10 & 10 & 10 \\ \hline
43 & 49 & 2,3,4,5,6,7,8,9,10 & 9,10 & 9,10 & 47 & 67 & 2,3,4,5,6,7,8,9,10 & 10 & 10 \\ \hline
43 & 50 & 2,4,5,7,8,10 & 9,10 & 10 & 47 & 69 & 2,3,4,5,6,8,9,10 & 10 & 10 \\ \hline
43 & 52 & 2,4,7,8 & 7,9,10 & 7 & 47 & 70 & 2,4,5,7,8,10 & 10 & 10 \\ \hline
43 & 55 & 2,3,4,5,6,7,8,9,10 & 9,10 & 9,10 & 47 & 71 & 2,3,4,5,6,7,8,9,10 & 10 & 10 \\ \hline
43 & 56 & 2,4,7,8 & 10 & - & 48 & 49 & 2,4,8 & 9,10 & - \\ \hline
43 & 57 & 2,3,4,5,6,8,9,10 & 9,10 & 9,10 & 48 & 53 & 2,4,8 & 9,10 & - \\ \hline
43 & 58 & 2,4,5,7,8,10 & 8,10 & 8,10 & 48 & 55 & 2,4,8 & 9,10 & - \\ \hline
43 & 59 & 2,3,4,5,6,7,8,9,10 & 9,10 & 9,10 & 48 & 59 & 2,4,8 & 10 & - \\ \hline
43 & 60 & 2,4,8 & 10 & - & 48 & 61 & 2,4,8 & 9,10 & - \\ \hline
43 & 61 & 2,3,4,5,6,7,8,9,10 & 9,10 & 9,10 & 48 & 65 & 2,4,8 & 9,10 & - \\ \hline
43 & 62 & 2,4,5,7,8,10 & 10 & 10 & 48 & 71 & 2,4,8 & 10 & - \\ \hline
43 & 63 & 2,3,4,5,6,8,9,10 & 9,10 & 9,10 & 48 & 73 & 2,4,8 & 10 & - \\ \hline
43 & 66 & 2,4,5,8,10 & 10 & 10 & 48 & 77 & 2,4,8 & 10 & - \\ \hline
43 & 67 & 2,3,4,5,6,7,8,9,10 & 10 & 10 & 49 & 50 & 2,4,5,7,8,10 & 9,10 & 10 \\ \hline
43 & 68 & 2,4,7,8 & 10 & - & 49 & 51 & 2,3,4,5,6,8,9,10 & 10 & 10 \\ \hline 

49 & 52 & 2,4,7,8 & 9,10 & - & 55 & 56 & 2,4,7,8 & 9,10 & - \\ \hline
49 & 60 & 2,4,8 & 9,10 & - & 55 & 57 & 2,3,4,5,6,8,9,10 & 9,10 & 9,10 \\ \hline
49 & 68 & 2,4,7,8 & 10 & - & 55 & 59 & 2,3,4,5,6,7,8,9,10 & 10 & 10 \\ \hline
49 & 53 & 2,3,4,5,6,7,8,9,10 & 10 & 10 & 55 & 61 & 2,3,4,5,6,7,8,9,10 & 10 & 10 \\ \hline
49 & 54 & 2,4,5,8,10 & 10 & 10 & 55 & 62 & 2,4,5,7,8,10 & 10 & 10 \\ \hline
49 & 55 & 2,3,4,5,6,7,8,9,10 & 9,10 & 9,10 & 55 & 63 & 2,3,4,5,6,8,9,10 & 9,10 & 9,10 \\ \hline
49 & 57 & 2,3,4,5,6,8,9,10 & 10 & 10 & 55 & 64 & 2,4,7,8 & 9,10 & - \\ \hline
49 & 58 & 2,4,5,7,8,10 & 8,9,10 & 8,10 & 55 & 68 & 2,4,7,8 & 10 & - \\ \hline
49 & 59 & 2,3,4,5,6,7,8,9,10 & 9,10 & 9,10 & 55 & 69 & 2,3,4,5,6,8,9,10 & 10 & 10 \\ \hline
49 & 62 & 2,4,5,7,8,10 & 10 & 10 & 56 & 57 & 2,4,8 & 10 & - \\ \hline
49 & 65 & 2,3,4,5,6,7,8,9,10 & 10 & 10 & 56 & 59 & 2,4,7,8 & 10 & - \\ \hline
49 & 66 & 2,4,5,8,10 & 9,10 & 10 & 56 & 61 & 2,4,7,8 & 10 & - \\ \hline
49 & 67 & 2,3,4,5,6,7,8,9,10 & 10 & 10 & 56 & 67 & 2,4,7,8 & 10 & - \\ \hline
49 & 69 & 2,3,4,5,6,8,9,10 & 10 & 10 & 56 & 71 & 2,4,7,8 & 10 & - \\ \hline
50 & 51 & 2,4,5,8,10 & 9,10 & 10 & 57 & 58 & 2,4,5,8,10 & 9,10 & 10 \\ \hline
50 & 53 & 2,4,5,7,8,10 & 9,10 & 10 & 57 & 59 & 2,3,4,5,6,8,9,10 & 10 & 10 \\ \hline
50 & 57 & 2,4,5,8,10 & 9,10 & 10 & 57 & 61 & 2,3,4,5,6,8,9,10 & 10 & 10 \\ \hline
50 & 59 & 2,4,5,7,8,10 & 9,10 & 10 & 57 & 62 & 2,4,5,8,10 & 10 & 10 \\ \hline
50 & 61 & 2,4,5,7,8,10 & 9,10 & 10 & 57 & 70 & 2,4,5,8,10 & 10 & 10 \\ \hline
50 & 63 & 2,4,5,8,10 & 10 & 10 & 57 & 77 & 2,3,4,5,6,8,9,10 & 10 & 10 \\ \hline
50 & 67 & 2,4,5,7,8,10 & 10 & 10 & 58 & 59 & 2,4,5,7,8,10 & 10 & 10 \\ \hline
50 & 69 & 2,4,5,8,10 & 10 & 10 & 58 & 61 & 2,4,5,7,8,10 & 9,10 & 10 \\ \hline
50 & 71 & 2,4,5,7,8,10 & 10 & 10 & 58 & 63 & 2,4,5,8,10 & 10 & 10 \\ \hline
50 & 73 & 2,4,5,7,8,10 & 10 & 10 & 58 & 65 & 2,4,5,7,8,10 & 10 & 10 \\ \hline
51 & 55 & 2,3,4,5,6,8,9,10 & 9,10 & 9,10 & 58 & 69 & 2,4,5,8,10 & 10 & 10 \\ \hline
51 & 56 & 2,4,8 & 9,10 & - & 58 & 75 & 2,4,5,8,10 & 10 & 10 \\ \hline
51 & 58 & 2,4,5,8,10 & 9,10 & 10 & 58 & 77 & 2,4,5,7,8,10 & 10 & 10 \\ \hline
51 & 59 & 2,3,4,5,6,8,9,10 & 9,10 & 9,10 & 58 & 81 & 2,4,5,8,10 & 10 & 10 \\ \hline
51 & 67 & 2,3,4,5,6,8,9,10 & 9,10 & 9,10 & 59 & 61 & 2,3,4,5,6,7,8,9,10 & 10 & 10 \\ \hline
51 & 71 & 2,3,4,5,6,8,9,10 & 10 & 10 & 59 & 60 & 2,4,8 & 9,10 & - \\ \hline
52 & 53 & 2,4,7,8 & 9,10 & 8 & 59 & 62 & 2,4,5,7,8,10 & 10 & 10 \\ \hline
52 & 55 & 2,4,7,8 & 8,9,10 & - & 59 & 63 & 2,3,4,5,6,8,9,10 & 9,10 & 9,10 \\ \hline
52 & 57 & 2,4,8 & 9,10 & - & 59 & 64 & 2,4,7,8 & 10 & 10 \\ \hline
52 & 59 & 2,4,7,8 & 9,10 & - & 59 & 65 & 2,3,4,5,6,7,8,9,10 & 10 & 10 \\ \hline
52 & 61 & 2,4,7,8 & 9,10 & - & 59 & 66 & 2,4,5,8,10 & 10 & 10 \\ \hline
52 & 63 & 2,4,8 & 10 & - & 59 & 70 & 2,4,5,7,8,10 & 10 & 10 \\ \hline
52 & 67 & 2,4,7,8 & 10 & - & 59 & 72 & 2,4,8 & 10 & 10 \\ \hline
52 & 69 & 2,4,8 & 10 & - & 59 & 76 & 2,4,7,8 & 10 & 10 \\ \hline
52 & 71 & 2,4,7,8 & 10 & - & 59 & 77 & 2,3,4,5,6,7,8,9,10 & 10 & 10 \\ \hline
53 & 54 & 2,4,5,8,10 & 10 & 10 & 59 & 78 & 2,4,5,8,10 & 10 & 10 \\ \hline
53 & 55 & 2,3,4,5,6,7,8,9,10v & 910 & 9,10 & 60 & 61 & 2,4,8 & 9,10 & - \\ \hline
53 & 56 & 2,4,7,8 & 9,10 & - & 60 & 67 & 2,4,8 & 10 & - \\ \hline
53 & 57 & 2,3,4,5,6,8,9,10 & 10 & 10 & 60 & 71 & 2,4,8 & 10 & - \\ \hline
53 & 58 & 2,4,5,7,8,10 & 9,10 & 10 & 60 & 73 & 2,4,8 & 10 & - \\ \hline
53 & 59 & 2,3,4,5,6,7,8,9,10 & 9,10 & 9,10 & 61 & 63 & 2,3,4,5,6,8,9,10 & 10 & 10 \\ \hline
53 & 60 & 2,4,8 & 9,10 & - & 61 & 66 & 2,4,5,8,10 & 10 & 10 \\ \hline
53 & 61 & 2,3,4,5,6,7,8,9,10 & 10 & 10 & 61 & 77 & 2,3,4,5,6,7,8,9,10 & 10 & 10 \\ \hline
53 & 62 & 2,4,5,7,8,10 & 10 & 10 & 61 & 78 & 2,4,5,8,10 & 10 & 10 \\ \hline
53 & 63 & 2,3,4,5,6,8,9,10 & 10 & 10 & 62 & 63 & 2,4,5,8,10 & 10 & 10 \\ \hline
53 & 66 & 2,4,5,8,10 & 9,10 & 10 & 62 & 77 & 2,4,5,7,8,10 & 10 & 10 \\ \hline
53 & 67 & 2,3,4,5,6,7,8,9,10 & 10 & 10 & 63 & 65 & 2,3,4,5,6,8,9,10 & 10 & 10 \\ \hline
53 & 71 & 2,3,4,5,6,7,8,9,10 & 10 & 10 & 63 & 67 & 2,3,4,5,6,8,9,10 & 10 & 10 \\ \hline
54 & 55 & 2,4,5,8,10 & 9,10 & 10 & 63 & 71 & 2,3,4,5,6,8,9,10 & 10 & 10 \\ \hline
54 & 59 & 2,4,5,8,10 & 9,10 & 10 & 63 & 73 & 2,3,4,5,6,8,9,10 & 10 & 10 \\ \hline
54 & 61 & 2,4,5,8,10 & 10 & 10 & 63 & 74 & 2,4,5,8,10 & 10 & 10 \\ \hline
54 & 67 & 2,4,5,8,10 & 10 & 10 & 63 & 79 & 2,3,4,5,6,8,9,10 & 10 & 10 \\ \hline
54 & 71 & 2,4,5,8,10 & 10 & 10 & 63 & 80 & 2,4,8 & 10 & - \\ \hline
64 & 71 & 2,4,7,8 & 10 & - & 67 & 72 & 2,4,8 & 10 & 10 \\ \hline
64 & 77 & 2,4,7,8 & 10 & - & 67 & 77 & 2,3,4,5,6,7,8,9,10 & 10 & 10 \\ \hline
65 & 66 & 2,4,5,8,10 & 10 & 10 & 69 & 71 & 2,3,4,5,6,8,9,10 & 10 & 10 \\ \hline
65 & 71 & 2,3,4,5,6,7,8,9,10 & 10 & 10 & 70 & 71 & 2,4,5,7,8,10 & 10 & 10 \\ \hline
65 & 76 & 2,4,7,8 & 10 & - & 70 & 73 & 2,4,5,7,8,10 & 10 & 10 \\ \hline
65 & 77 & 2,3,4,5,6,7,8,9,10 & 10 & 10 & 70 & 79 & 2,4,5,7,8,10 & 10 & 10 \\ \hline
66 & 73 & 2,4,5,8,10 & 10 & 10 & 71 & 74 & 2,4,5,7,8,10 & 10 & 10 \\ \hline
66 & 79 & 2,4,5,8,10 & 10 & 10 & 71 & 72 & 2,4,8 & 10 & - \\ \hline
66 & 83 & 2,4,5,8,10 & 10 & 10 & 71 & 75 & 2,3,4,5,6,8,9,10 & 10 & 10 \\ \hline
67 & 68 & 2,4,7,8 & 10 & - & 71 & 77 & 2,3,4,5,6,7,8,9,10 & 10 & 10 \\ \hline
67 & 69 & 2,3,4,5,6,8,9,10 & 10 & - & 71 & 78 & 2,4,5,8,10 & 10 & 10 \\ \hline
67 & 70 & 2,4,5,7,8,10 & 10 & 10 & & & & & \\ \hline
\end{longtable}
\vspace{-2.5em}
\end{center}


\begin{example}
 Take the same combined PRNG as discussed in Example~\ref{ex:1}. Here, to improve equidistribution, we combine the CA configurations based on the $s$ values. This combined PRNG with components of $k_1$=31 and $k_2$=32 achieves the maximal period when $s$=\{2, 4, 7, 8\}. However, when tested for maximal equidistribution, \autoref{table6} to \autoref{table9} show that, out of the 4 $s$ values, it gives maximally equidistribution for only $s$=7 and 8. So this combined PRNG (31,32) gives a period close to $2^{63}$ and maximal equidistribution for $s$=7 and 8. 


\begin{scriptsize}
\vspace{-1.0em}
\begin{longtable}{|>{\centering\arraybackslash}p{1cm}|
                        >{\centering\arraybackslash}p{3cm}|
                        >{\centering\arraybackslash}p{2cm}|
                        >{\centering\arraybackslash}p{2cm}|
                        >{\centering\arraybackslash}p{4cm}|}
\caption{Equidistribution of combined CA-based PRNG (31,32) for $s$=2\label{table6}}\\
\hline
\textbf{t} & \textbf{$l_t^* = \min(L, \lfloor k/t \rfloor)$} & \textbf{$l_t \ (l_t \le l_t^*)$} & \textbf{Rank} & \textbf{Equidistribution} \\
\hline
\endfirsthead

\hline
\textbf{t} & \textbf{$l_t^* = \min(L, \lfloor k/t \rfloor)$} & \textbf{$l_t \ (l_t \le l_t^*)$} & \textbf{Rank} & \textbf{Equidistribution } \\
\hline
\endhead

\hline 
\endfoot

\hline
\endlastfoot

2  & 31 & 31 & 47 & not $(t,l)$-equi-distributed \\
3  & 21 & 21 & 46 & not $(t,l)$-equi-distributed \\
4  & 15 & 15 & 41 & not $(t,l)$-equi-distributed \\
5  & 12 & 12 & 38 & not $(t,l)$-equi-distributed \\
6  & 10 & 10 & 36 & not $(t,l)$-equi-distributed \\
7  & 9  & 9  & 37 & not $(t,l)$-equi-distributed \\
9  & 7  & 7  & 43 & not $(t,l)$-equi-distributed \\
10 & 6  & 6  & 46 & not $(t,l)$-equi-distributed \\
12 & 5  & 5  & 53 & not $(t,l)$-equi-distributed \\
15 & 4  & 4  & 60 & $(t,l)$-equi-distributed \\
21 & 3  & 3  & 63 & $(t,l)$-equi-distributed \\
31 & 2  & 2  & 62 & $(t,l)$-equi-distributed \\
63 & 1  & 1  & 63 & $(t,l)$-equi-distributed \\

\end{longtable}
\vspace{-1.0em}
\end{scriptsize}

\begin{scriptsize}
\vspace{-1.0em}

\begin{longtable}{|>{\centering\arraybackslash}p{1cm}|
                        >{\centering\arraybackslash}p{3cm}|
                        >{\centering\arraybackslash}p{2cm}|
                        >{\centering\arraybackslash}p{2cm}|
                        >{\centering\arraybackslash}p{4cm}|}
\caption{Equidistribution of combined CA-based PRNG (31,32) for $s$=4\label{table7}}\\
\hline
\textbf{t} & \textbf{$l_t^* = \min(L, \lfloor k/t \rfloor)$} & \textbf{$l_t \ (l_t \le l_t^*)$} & \textbf{Rank} & \textbf{Equidistribution} \\
\hline
\endfirsthead

\hline
\textbf{t} & \textbf{$l_t^* = \min(L, \lfloor k/t \rfloor)$} & \textbf{$l_t \ (l_t \le l_t^*)$} & \textbf{Rank} & \textbf{Equidistribution $(t,l_t)$} \\
\hline
\endhead

\hline 
\endfoot

\hline
\endlastfoot

2  & 31 & 31 & 60 & not $(t,l)$-equi-distributed \\
3  & 21 & 21 & 59 & not $(t,l)$-equi-distributed \\
4  & 15 & 15 & 57 & not $(t,l)$-equi-distributed \\
5  & 12 & 12 & 54 & not $(t,l)$-equi-distributed \\
6  & 10 & 10 & 58 & not $(t,l)$-equi-distributed \\
7  & 9  & 9  & 62 & not $(t,l)$-equi-distributed \\
9  & 7  & 7  & 63 & $(t,l)$-equi-distributed \\
10 & 6  & 6  & 60 & $(t,l)$-equi-distributed \\
12 & 5  & 5  & 60 & $(t,l)$-equi-distributed \\
15 & 4  & 4  & 60 & $(t,l)$-equi-distributed \\
21 & 3  & 3  & 63 & $(t,l)$-equi-distributed \\
31 & 2  & 2  & 62 & $(t,l)$-equi-distributed \\
63 & 1  & 1  & 63 & $(t,l)$-equi-distributed \\

\end{longtable}
\vspace{-1.0em}

\end{scriptsize}

\begin{scriptsize}
\vspace{-1.0em}

\begin{longtable}{|>{\centering\arraybackslash}p{1cm}|
                        >{\centering\arraybackslash}p{3cm}|
                        >{\centering\arraybackslash}p{2cm}|
                        >{\centering\arraybackslash}p{2cm}|
                        >{\centering\arraybackslash}p{4cm}|}
\caption{Equidistribution of combined CA-based PRNG (31,32) for $s$=7\label{table8}}\\
\hline
\textbf{t} & \textbf{$l_t^* = \min(L, \lfloor k/t \rfloor)$} & \textbf{$l_t \ (l_t \le l_t^*)$} & \textbf{Rank} & \textbf{Equidistribution} \\
\hline
\endfirsthead

\hline
\textbf{t} & \textbf{$l_t^* = \min(L, \lfloor k/t \rfloor)$} & \textbf{$l_t \ (l_t \le l_t^*)$} & \textbf{Rank} & \textbf{Equidistribution $(t,l_t)$} \\
\hline
\endhead

\hline 
\endfoot

\hline
\endlastfoot

2  & 31 & 31 & 62 & $(t,l)$-equi-distributed \\
3  & 21 & 21 & 63 & $(t,l)$-equi-distributed \\
4  & 15 & 15 & 60 & $(t,l)$-equi-distributed \\
5  & 12 & 12 & 60 & $(t,l)$-equi-distributed \\
6  & 10 & 10 & 60 & $(t,l)$-equi-distributed \\
7  & 9  & 9  & 63 & $(t,l)$-equi-distributed \\
9  & 7  & 7  & 63 & $(t,l)$-equi-distributed \\
10 & 6  & 6  & 60 & $(t,l)$-equi-distributed \\
12 & 5  & 5  & 60 & $(t,l)$-equi-distributed \\
15 & 4  & 4  & 60 & $(t,l)$-equi-distributed \\
21 & 3  & 3  & 63 & $(t,l)$-equi-distributed \\
31 & 2  & 2  & 62 & $(t,l)$-equi-distributed \\
63 & 1  & 1  & 63 & $(t,l)$-equi-distributed \\

\end{longtable}
\vspace{-1.0em}

\end{scriptsize}

\begin{scriptsize}
\vspace{-1.0em}

\begin{longtable}{|>{\centering\arraybackslash}p{1cm}|
                        >{\centering\arraybackslash}p{3cm}|
                        >{\centering\arraybackslash}p{2cm}|
                        >{\centering\arraybackslash}p{2cm}|
                        >{\centering\arraybackslash}p{4cm}|}
\caption{Equidistribution of combined CA-based PRNG (31,32) for $s$=8\label{table9}}\\
\hline
\textbf{t} & \textbf{$l_t^* = \min(L, \lfloor k/t \rfloor)$} & \textbf{$l_t \ (l_t \le l_t^*)$} & \textbf{Rank} & \textbf{Equidistribution } \\
\hline
\endfirsthead

\multicolumn{5}{c}{{\tablename\ \thetable{} -- }} \\
\hline
\textbf{t} & \textbf{$l_t^* = \min(L, \lfloor k/t \rfloor)$} & \textbf{$l_t \ (l_t \le l_t^*)$} & \textbf{Rank} & \textbf{Equidistribution } \\
\hline
\endhead

\hline 
\endfoot

\hline
\endlastfoot

2  & 31 & 31 & 62 & $(t,l)$-equi-distributed \\
3  & 21 & 21 & 63 & $(t,l)$-equi-distributed \\
4  & 15 & 15 & 60 & $(t,l)$-equi-distributed \\
5  & 12 & 12 & 60 & $(t,l)$-equi-distributed \\
6  & 10 & 10 & 60 & $(t,l)$-equi-distributed \\
7  & 9  & 9  & 63 & $(t,l)$-equi-distributed \\
9  & 7  & 7  & 63 & $(t,l)$-equi-distributed \\
10 & 6  & 6  & 60 & $(t,l)$-equi-distributed \\
12 & 5  & 5  & 60 & $(t,l)$-equi-distributed \\
15 & 4  & 4  & 60 & $(t,l)$-equi-distributed \\
21 & 3  & 3  & 63 & $(t,l)$-equi-distributed \\
31 & 2  & 2  & 62 & $(t,l)$-equi-distributed \\
63 & 1  & 1  & 63 & $(t,l)$-equi-distributed \\

\end{longtable}
\vspace{-1.0em}
\end{scriptsize}
\end{example}
Consider another combined PRNG $(k_1,k_2) = (59,64)$; it achieves close to maximal period for $s$ values 2, 4, 8 and maximal equidistribution for $s$=10. So this combined PRNG (59,64) with $s$=10 is maximally equidistributed but the period is $\frac{\rho}{\gcd(10, \rho)}$ $\approx 2^{121}$ (not close to maximal) because $\gcd(\rho, 10) \neq 1$. Additionally, we found $1399$ combinations which are almost maximally equidistributed as well as have a period close to $2^{k_1+k_2}$. We only experimented for $s$ values from 2 to 10; if we increase the values of $s$, it increases the computational time, but we might get some more maximally equidistributed combined generators. 

\subsection{Considering $CA(90')$ or $CA(150')$}
\label{section4.3}
Next, we combine the $CA(90')$ and $CA(150')$ close to 32, 64, and 128 with time spacing. Here, only one CA has $N_1$ close to half of its degree and the remaining ones are not close to half of the degree. Therefore, we combine all the two-component relatively prime combinations close to 32, 64 and 128 given in \autoref{table4}.




Here also, all the combined PRNG combinations (close to 32, 64, and 128) achieve the maximal period for those $s$ values if $\gcd(s, \rho) = 1$. But, when tested for the equidistribution, the combined CA-based PRNGs in the form of $CA(90')$ with $CA(90')$ and  $CA(150')$ with $CA(150')$ do not satisfy the maximally equidistributed property for CAs close to 64 and 128 but are almost maximally equidistributed with maximal period for CAs close to 32. However, the generator combining $CA(90')$ with $CA(150')$ satisfies the maximal equidistribution and maximal period for CAs close to 32 and 64 and almost ME and as well as maximal period for CAs close to 128. These results are summarized in the  following \autoref{table11}.

\begin{scriptsize}
\renewcommand{\arraystretch}{1.5} 
\setlength{\tabcolsep}{8pt}       
\vspace{-1.0em}

\begin{longtable}{|>{\centering\arraybackslash}p{2cm}|
                >{\raggedright\arraybackslash}p{2cm}|
                >{\raggedright\arraybackslash}p{2cm}|
                >{\raggedright\arraybackslash}p{5.5cm}|}
\caption{Equidistribution of Combined PRNG using $CA(90')$ and $ CA(150')$}
\label{table11}\\
\hline
\multirow{2}{*}{\textbf{CA}} & 
\multicolumn{3}{c|}{\textbf{Combined PRNG with maximal period and  ME (or, almost ME)}} \\ \cline{2-4}
& \textbf{\scriptsize{$(k_1,k_2,s)$ : $(90',90',s)$}} &
\textbf{\scriptsize{$(k_1,k_2,s)$ : $(150',150',s)$}}&
\textbf{$(k_1,k_2,s)$ : $(90',150',s)$} \\ \hline

Close to 32 &
\textbf{almost ME}:
(26,29,10), (26,35,10), (29,35,10) &
\textbf{almost ME}:
(26,29,10), (26,35,10), (29,35,10) &
\textbf{ME:}
(26,29,\{5,7,8,10\}), (29,26,\{5,7,8,10\}), (26,35,\{5,7,8,10\}), (35,26,\{5,7,8,10\}), 
(29,39,\{6,8,9,10\}), (39,29,\{6,8,9,10\}), (35,39,\{6,8,9,10\}), (39,35,\{6,8,9,10\}),
(29,35,\{5,6,7,8,9,10\}), (35,29,\{5,6,7,8,9,10\}) \\ \hline

Close to 64 &
not ME &
not ME &
\textbf{ME:}
(65,69,10), (69,65,10) \\ \hline

Close to 128 &
not ME &
not ME &
\textbf{almost ME}:
(105,113,\{9,10\}), (113,105,\{9,10\}), (113,119,10), (119,113,10) \\ \hline

\end{longtable}
\vspace{-1.0em}

\end{scriptsize}
In the next section, we test these (almost or) maximally equidistributed combined generators in statistical testbeds.

\section{Experimental Results and Analysis}
\label{section5}
This section describes the experimental results for the proposed combined CA-based PRNGs. The analysis focuses on the following aspects: empirical test results, verification with space-time diagrams and performance (speed). 

\subsection{Empirical Test}
\label{section5.1}

Empirical tests are used to assess the randomness quality of PRNGs. There are several empirical tests available for testing PRNGs, including the \emph{Dieharder} \cite{dieharder27}, \emph{SmallCrush}, and \emph{BigCrush} \cite{testu0128} tests.
In this work, we first test the combined maximal length CA-based PRNGs with time spacing, which achieve the maximally equidistributed characteristic. So, our experiment begins by testing the PRNGs in Dieharder by generating pseudo-random numbers for all the combinations and the corresponding $s$ values mentioned in \autoref{table5} and \autoref{table6equi} with a binary file size of 1.5 GB.

We observe that, the combined maximal length CA-based PRNGs where $N_1$ is close to half of the degree (\autoref{table5}) fail most of the tests in Dieharder. Therefore, they are not further tested in SmallCrush and BigCrush tests. So, this type of combined CA-based PRNGs may achieve the maximal period and maximal equidistribution but fail in statistical testbeds.

We also test the combined PRNGs, which are almost ME on Dieharder. There are a total of 1399 combinations that achieve almost maximal equidistribution. From these, we take the combined PRNGs with minimal skip step; those are $(31,33,5), (31,35,5), (31,38,5), (31,39,5), (31,41,5), (31,43,5), (31,\\46,5), (31,49,5), (31,51,5), (33,43,5)$, and $(33,47,5)$. But all these PRNGs failed most of the tests in Dieharder and so are not tested further.

Then we test the combined CA-based PRNGs using $CA(90')$ and $CA(150')$, which satisfy the maximal and almost maximal equidistribution as mentioned in \autoref{table11}. We observe that, even though the combined CA-based PRNG with time spacing using $CA(90')$ and $CA(150')$ can provide period close to the maximal one, satisfies ME (or, almost ME), but these PRNGs do not provide any better statistical test results when we test with Dieharder.

\begin{scriptsize}
\setlength{\tabcolsep}{2pt}
\renewcommand{\arraystretch}{0.9}
\begin{longtable}{|c|c|c|c|c|c|}
\caption{Period Length, Equidistribution, and Statistical test results of Linear PRNGs}
\label{table14} \\
\hline
\textbf{PRNG} & \textbf{Period ($\approx \rho$)} & \textbf{Equidistribution} & \makecell{\textbf{Dieharder}} & \makecell{\textbf{SmallCrush}} & \makecell{\textbf{BigCrush}} \\
\hline
\endfirsthead

\hline
\textbf{PRNGs} & \textbf{Period ($\approx \rho$)} & \textbf{Equidistribution} & \makecell{\textbf{Dieharder}} & \makecell{\textbf{SmallCrush}} & \makecell{\textbf{BigCrush}} \\
\hline
\endhead

\multicolumn{6}{|c|}{\textbf{Combined CAs-based PRNGs $(k_1, k_2, s)$ ($\rho$ close to maximal)}} \\
\hline
 $(31,32,7)$   & $2^{63}$  & ME        & 1 & 1 & 11 \\
 $(31,32,8)$   & $2^{63}$  & ME        & 1 & 1 & 7 \\
 $(31,40,8)$   & $2^{71}$  & ME        & 7 & 3 & 16 \\
 $(35,48,8)$   & $2^{83}$  & ME        & 8 & 4 & 19 \\
 $(41,48,8)$   & $2^{89}$  & ME        & 9 & 2 & 13 \\
 $(43,48,8)$   & $2^{91}$  & ME        & 5 & 2 & 11 \\
 $(47,56,8)$   & $2^{103}$ & ME        & 4 & 2 & 13 \\
\hline

\multicolumn{6}{|c|}{\textbf{Combined CAs-based PRNG $(k_1, k_2, s)$ ($\rho$ not close to maximal)}} \\
\hline
 $(31,32,5)$  &  $2^{61}$  &  ME  & 3 & 3 & 20 \\
 $(31,32,6)$  &  $2^{62}$  &  ME  & 2 & 1 & 15 \\
 $(31,32,9)$  &  $2^{62}$  &  ME  & All passed & 1 & 6 \\
 $(31,32,10)$ &  $2^{61}$  &  ME & All passed & 2 & 6 \\ 
 $(31,40,9)$  &  $2^{70}$  & ME  & All passed & 2 & 16 \\
 $(31,40,10)$ & $2^{69}$   & ME  & All passed & 2 & 13 \\
 $(33,40,9)$ & $2^{72}$ & ME & All passed & 1 & 15 \\
 $(33,40,10)$ & $2^{71}$ & ME & All passed & 1 & 14 \\
 $(33,56,10)$ & $2^{87}$ & ME & 1 & 3 & 12 \\
 $(35,48,7)$   & $2^{81}$  & ME  & 10 & 5 & 20 \\
 $(35,48,9)$   & $2^{80}$  & ME  & 1 & 2 & 13 \\
 $(35,48,10)$   & $2^{81}$  & ME & 1 & 1 & 13 \\
 $(35,64,10)$ & $2^{97}$ & ME & 1 & 3 & 11 \\
 $(39,40,9)$ & $2^{78}$ & ME & 1 & 1 & 13 \\
 $(39,40,10)$ & $2^{77}$ & ME & 1 & 1 & 11  \\
 $(39,56,9)$ & $2^{94}$ & ME & 1 & 2 & 11 \\
 $(39,56,10)$ & $2^{93}$ & ME & 1 & 1 & 10 \\
 $(41,48,10)$   & $2^{87}$  & ME & All passed & 1 & 13 \\
 $(41,64,10)$ & $2^{103}$ & ME & 1 & 3 & 12 \\
 $(43,48,9)$   & $2^{88}$  & ME  & 1 & 1 & 12 \\
 $(43,48,10)$   & $2^{89}$  & ME & All passed  & 1 & 10 \\
 $(43,56,10)$ & $2^{97}$ & ME & 1 & 1 & 10 \\
 $(45,56,9)$ & $2^{100}$ & ME & All passed & 1 & 11 \\
 $(45,56,10)$ & $2^{99}$ & ME & All passed  & 1 & 10 \\
 $(45,64,9)$ & $2^{108}$ & ME & 2 & 3 & 13 \\
 $(45,64,10)$ & $2^{107}$ & ME & 2 & 2 & 10 \\
 $(47,56,9)$   & $2^{102}$ & ME  & All passed  & 1 & 11 \\
 $(47,56,10)$   & $2^{101}$ & ME & All passed  & 1 & 11 \\
 $(47,64,9)$ & $2^{110}$ & ME & 1 & 2 & 12 \\
 $(47,64,10)$ & $2^{109}$ & ME & 1 & 2 & 10 \\
 $(47,72,10)$ & $2^{117}$ & ME & 1 & 2 & 10 \\
 $(51,56,9)$ & $2^{106}$ & ME & All passed & 1 & 12 \\
 $(51,56,10)$ & $2^{105}$ & ME & All passed & 1 & 10  \\
 $(53,56,9)$ & $2^{108}$ & ME & All passed & 1 & 11 \\
 $(53,56,10)$ & $2^{107}$ & ME & All passed  & 1 & 11  \\
 $(55,56,9)$ & $2^{110}$ & ME & All passed & 1 & 7 \\
 $(55,56,10)$ & $2^{109}$ & ME & All passed & 1 & 7  \\
 $(59,64,10)$ & $2^{121}$ & ME & All passed & 1 & 11 \\
 $(63,64,10)$ & $2^{125}$ & ME & 1 & 1 & 13 \\
 $(63,80,10)$ & $2^{141}$ & ME & All passed & 3 & 11\\
 ${(67,72,10)}$ & $2^{137}$ & ME & All passed & All passed & 7 \\
 $(71,72,10)$ & $2^{141}$ & ME & All passed & 2 & 9 \\
\hline
\multicolumn{6}{|c|}{\textbf{Existing Linear PRNGs}} \\
\hline
Tausworthe (combined) & $2^{88}$ & ME & All passed & All passed & 7 \\
Mersenne Twister       & $2^{19937}-1$ & not ME & All passed & All passed & 3 \\
WELL512a               & $2^{512}$ & ME & All passed & All passed & 6 \\
WELL1024a              & $2^{1024}$ & ME & All passed & All passed & 7 \\
\hline

\end{longtable}
\vspace{-1.0em}
\end{scriptsize}

Finally, we test the combined maximal length CA-based PRNGs of \autoref{table6equi} where $N_1$ is not close to half of the degree in Dieharder. There are a total of 368 combinations that achieve the maximal equidistribution. Out of these combinations, the following combined PRNGs passed the majority of tests (above 100 tests) in Dieharder: $(k_1, k_2, s) = (31,32,\{5,6,7,8,9,10\}), (31,40,\\\{8,9,10\}), (33,40,\{9,10\}), (33,56,\{10\}), (35, 48,\{7,8,9,10\}), (35,64,\{10\}), \\(39,40,\{9,10\}), (39,56,\{9,10\}), (41,48,\{8,10\}),(41,64,\{10\}), (43,48,\{8,9,\\10\}), (43,56,\{9,10\}), (45,56,\{9,10\}), (45,64,\{9,10\}), (47,56,\{8,9,10\}), (47,\\64,\{9,10\}), (47,72,\{10\}), (51,56,\{9,10\}), (53,56,\{9,10\}), (55,56,\{9,10\}), \\(59,64,\{10\}), (63,64,\{10\}), (63,80,\{10\}), (67,72,\{10\}), (71,72,\{10\})$. Some of these combined PRNGs achieve the maximal equidistribution and are close to the maximal period of $2^{k_1+k_2}$, and some are maximally equidistributed and have the period of $\frac{\rho}{\gcd(s, \rho)}$. So, these combined PRNGs are tested with SmallCrush and BigCrush tests. The results are shown in \autoref{table14}. These results are then compared with the well-known linear PRNGs like Tausworthe (combined), Mersenne Twister, WELL512a and WELL1024a. It can be observed from \autoref{table14} that, many of our proposed generators have performed at par with these standard generators showing their efficacy as PRNGs.



\begin{figure}[!hbtp]
\vspace{-4.5em}
    \centering
    \begin{subfigure}[b]{0.3\textwidth}
        \centering
        \fbox{%
        \includegraphics[width=\textwidth]{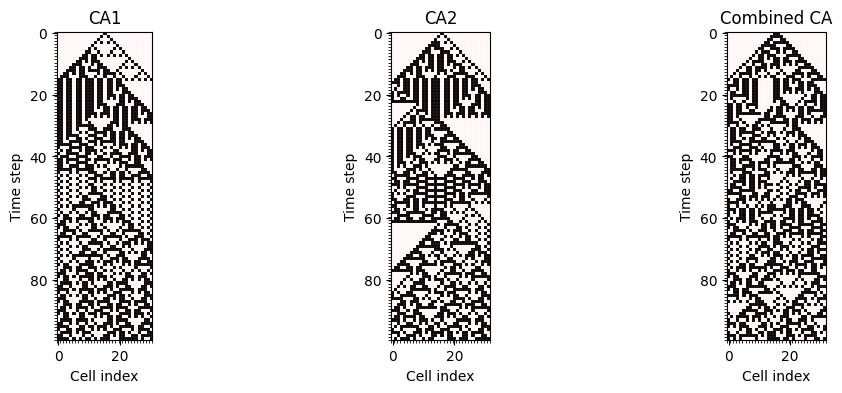}
        }
        \caption{$(k_1,k_2):(31,32)$}
        \label{fig:sub1}
    \end{subfigure}
    \hfill
    \begin{subfigure}[b]{0.3\textwidth}
        \centering
        \fbox{%
        \includegraphics[width=\textwidth]{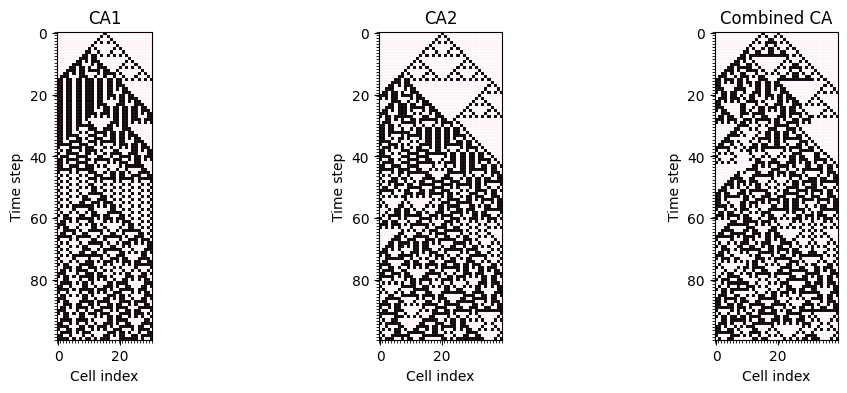}
        }
        \caption{$(k_1,k_2):(31,40)$}
        \label{fig:sub2}
    \end{subfigure}
    \hfill
    \begin{subfigure}[b]{0.3\textwidth}
    \centering
    \fbox{%
    \includegraphics[width=\textwidth]{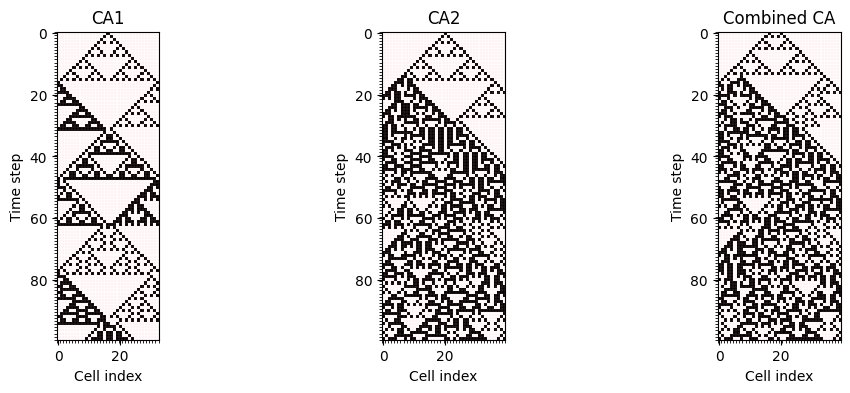}
    }
    \caption{$(k_1,k_2):(33,40)$}
    \label{fig:left}
  \end{subfigure}
  \hfill
   \begin{subfigure}[b]{0.3\textwidth}
    \centering
    \fbox{%
    \includegraphics[width=\textwidth]{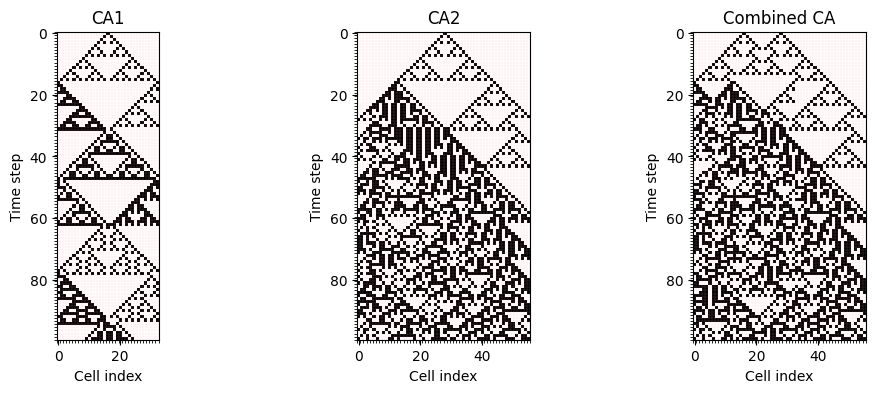}
    }
    \caption{$(k_1,k_2):(33,56)$}
    \label{fig:left}
  \end{subfigure}
  \hfill
     \begin{subfigure}[b]{0.3\textwidth}
    \centering
    \fbox{%
    \includegraphics[width=\textwidth]{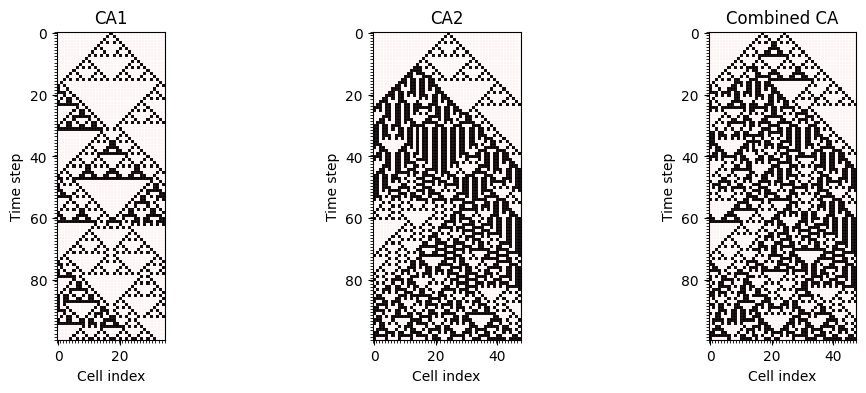}
    }
    \caption{$(k_1,k_2):(35,48)$}
    \label{fig:left}
  \end{subfigure}
  \hfill
   \begin{subfigure}[b]{0.3\textwidth}
    \centering
    \fbox{%
    \includegraphics[width=\textwidth]{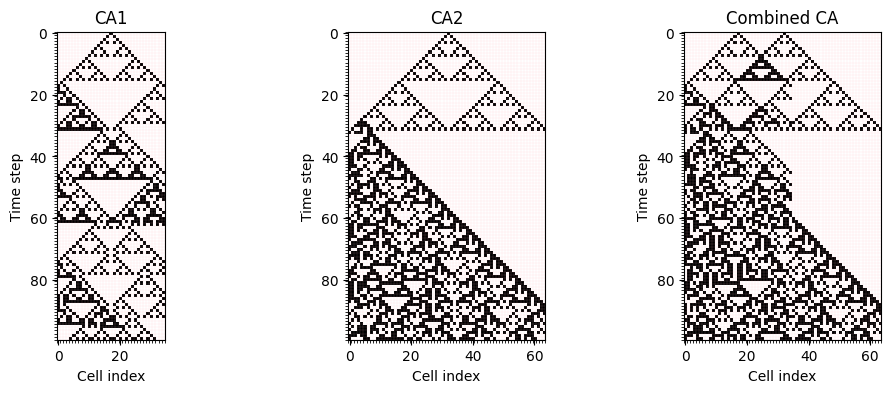}
    }
    \caption{$(k_1,k_2):(35,64)$}
    \label{fig:left}
  \end{subfigure}
  \hfill
  \begin{subfigure}[b]{0.3\textwidth}
    \centering
    \fbox{%
    \includegraphics[width=\textwidth]{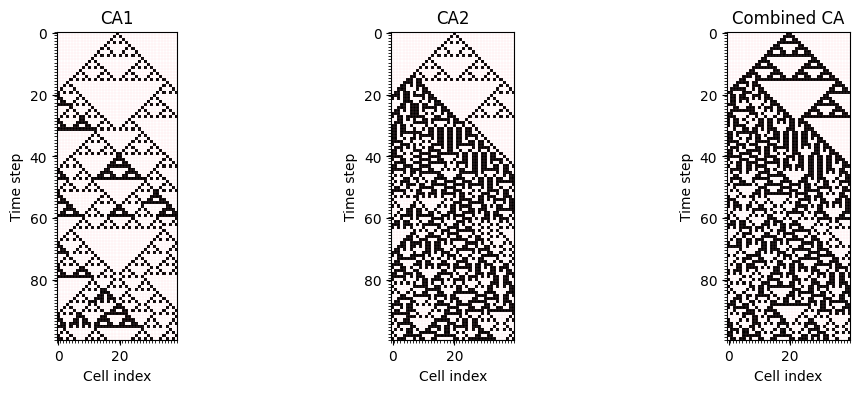}
    }
    \caption{$(k_1,k_2):(39,40)$}
    \label{fig:left}
  \end{subfigure}
  \hfill
  \begin{subfigure}[b]{0.3\textwidth}
    \centering
    \fbox{%
    \includegraphics[width=\textwidth]{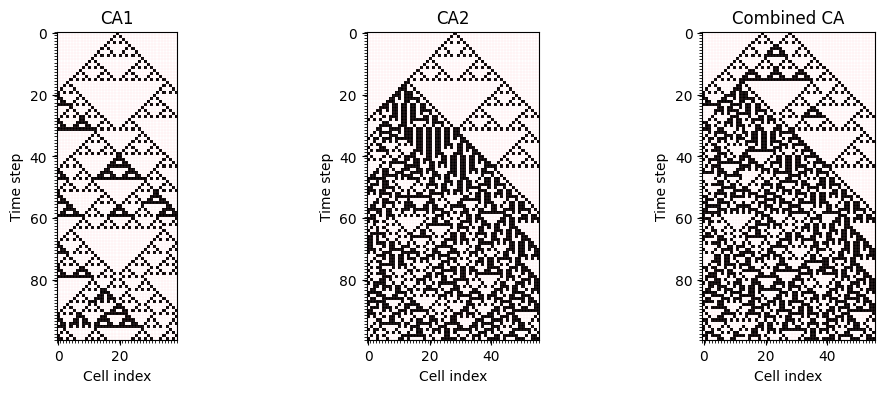}
    }
    \caption{$(k_1,k_2):(39,56)$}
    \label{fig:left}
  \end{subfigure}
  \hfill
  \begin{subfigure}[b]{0.3\textwidth}
    \centering
    \fbox{%
    \includegraphics[width=\textwidth]{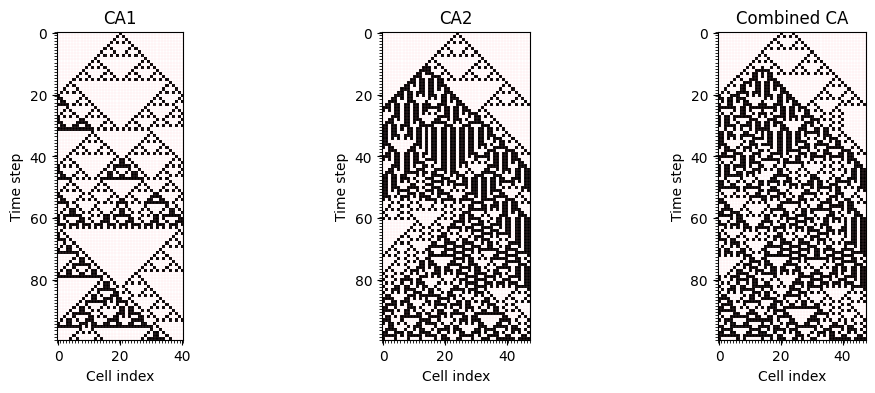}
    }
    \caption{$(k_1,k_2):(41,48)$}
    \label{fig:right}
  \end{subfigure}
   \hfill
    \begin{subfigure}[b]{0.3\textwidth}
    \centering
    \fbox{%
    \includegraphics[width=\textwidth]{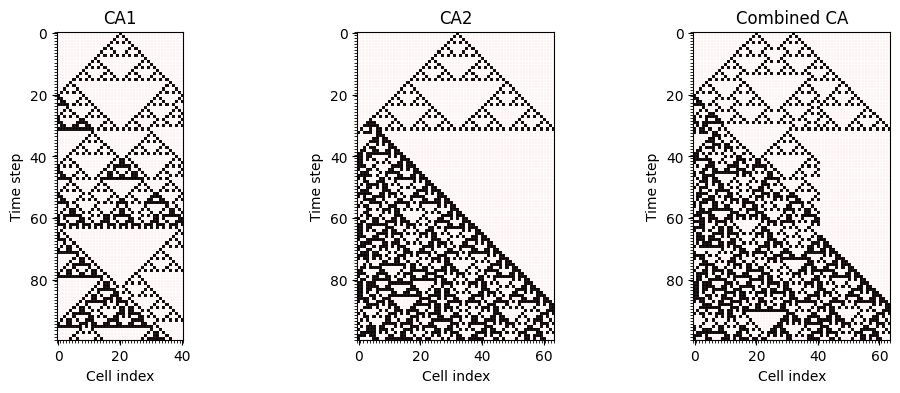}
    }
    \caption{$(k_1,k_2):(41,64)$}
    \label{fig:left}
  \end{subfigure}
  \hfill
   \begin{subfigure}[b]{0.3\textwidth}
    \centering
    \fbox{%
    \includegraphics[width=\textwidth]{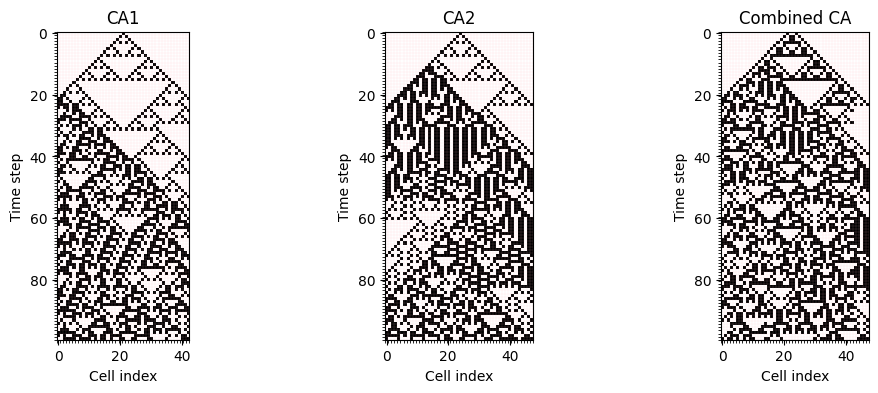}
    }
    \caption{$(k_1,k_2):(43,48)$}
    \label{fig:left}
  \end{subfigure}
  \hfill
   \begin{subfigure}[b]{0.3\textwidth}
    \centering
    \fbox{%
    \includegraphics[width=\textwidth]{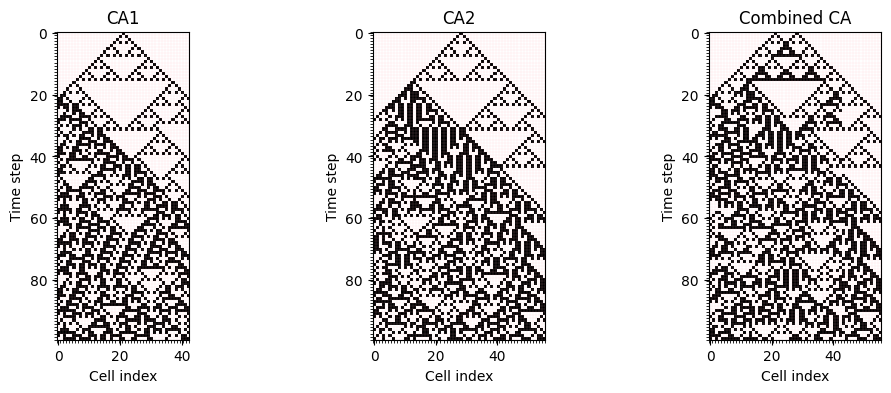}
    }
    \caption{$(k_1,k_2):(43,56)$}
    \label{fig:left}
  \end{subfigure}
  \hfill
   \begin{subfigure}[b]{0.3\textwidth}
    \centering
    \fbox{%
    \includegraphics[width=\textwidth]{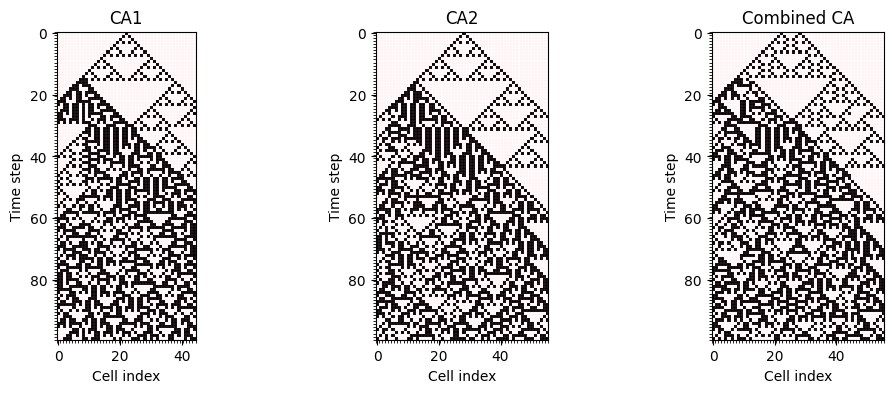}
    }
    \caption{$(k_1,k_2):(45,56)$}
    \label{fig:left}
  \end{subfigure}
  \hfill
  \begin{subfigure}[b]{0.3\textwidth}
    \centering
    \fbox{%
    \includegraphics[width=\textwidth]{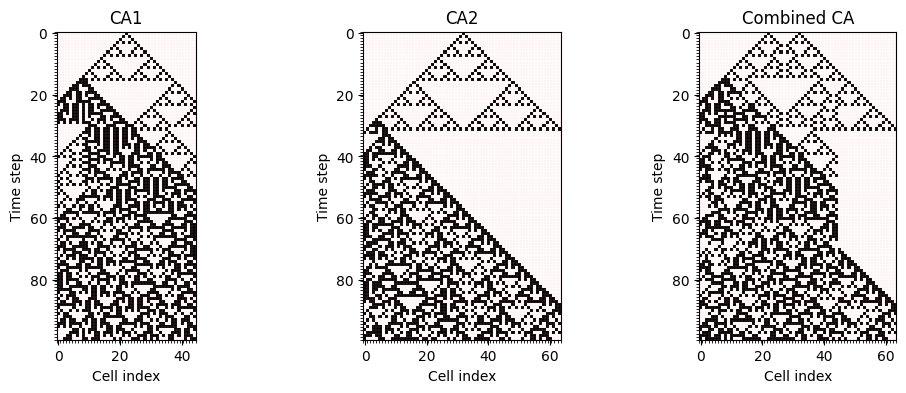}
    }
    \caption{$(k_1,k_2):(45,64)$}
    \label{fig:left}
  \end{subfigure}
   \hfill
  \begin{subfigure}[b]{0.3\textwidth}
    \centering
    \fbox{%
    \includegraphics[width=\textwidth]{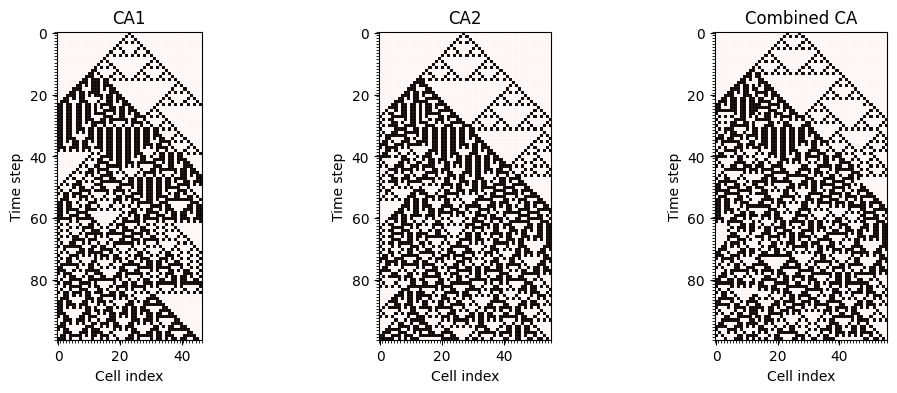}
    }
    \caption{$(k_1,k_2):(47,56)$}
    \label{fig:right}
  \end{subfigure}
   \hfill
   \begin{subfigure}[b]{0.3\textwidth}
    \centering
    \fbox{%
    \includegraphics[width=\textwidth]{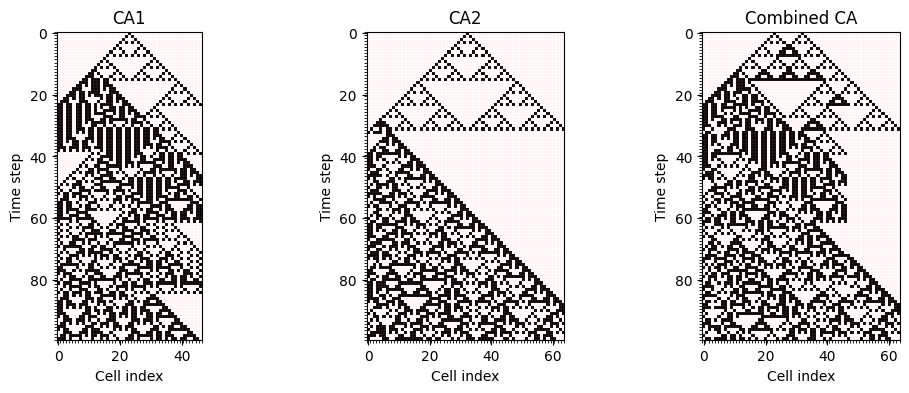}
    }
    \caption{$(k_1,k_2):(47,64)$}
    \label{fig:left}
  \end{subfigure}
  \hfill
  \begin{subfigure}[b]{0.3\textwidth}
    \centering
    \fbox{%
    \includegraphics[width=\textwidth]{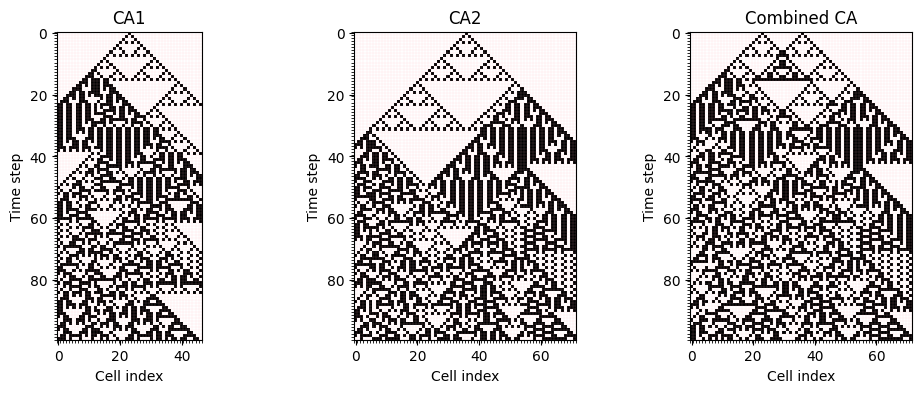}
    }
    \caption{$(k_1,k_2):(47,72)$}
    \label{fig:left}
  \end{subfigure}
  \hfill
  \begin{subfigure}[b]{0.3\textwidth}
    \centering
    \fbox{%
    \includegraphics[width=\textwidth]{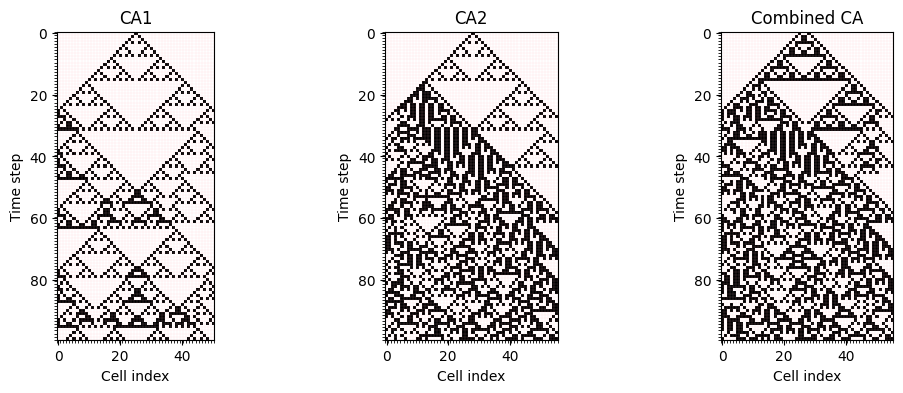}
    }
    \caption{$(k_1,k_2):(51,56)$}
    \label{fig:left}
  \end{subfigure}
  \hfill
  \begin{subfigure}[b]{0.3\textwidth}
    \centering
    \fbox{%
    \includegraphics[width=\textwidth]{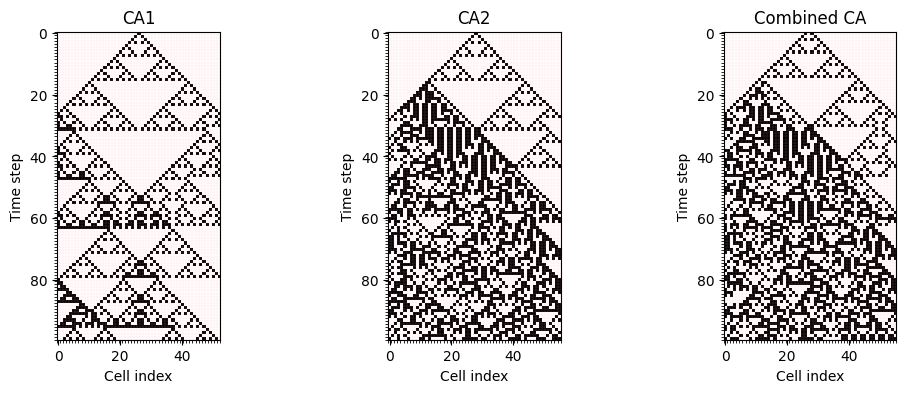}
    }
    \caption{$(k_1,k_2):(53,56)$}
    \label{fig:left}
  \end{subfigure}
  \hfill
  \begin{subfigure}[b]{0.3\textwidth}
    \centering
    \fbox{%
    \includegraphics[width=\textwidth]{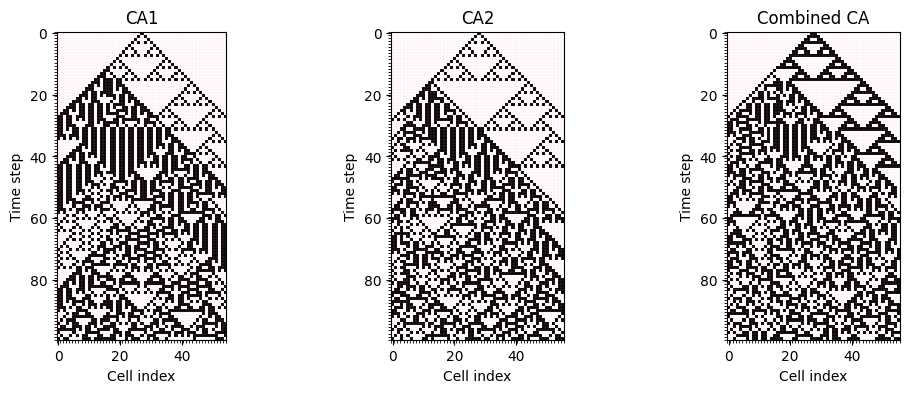}
    }
    \caption{$(k_1,k_2):(55,56)$}
    \label{fig:left}
  \end{subfigure}
   \hfill
   \begin{subfigure}[b]{0.3\textwidth}
    \centering
    \fbox{%
    \includegraphics[width=\textwidth]{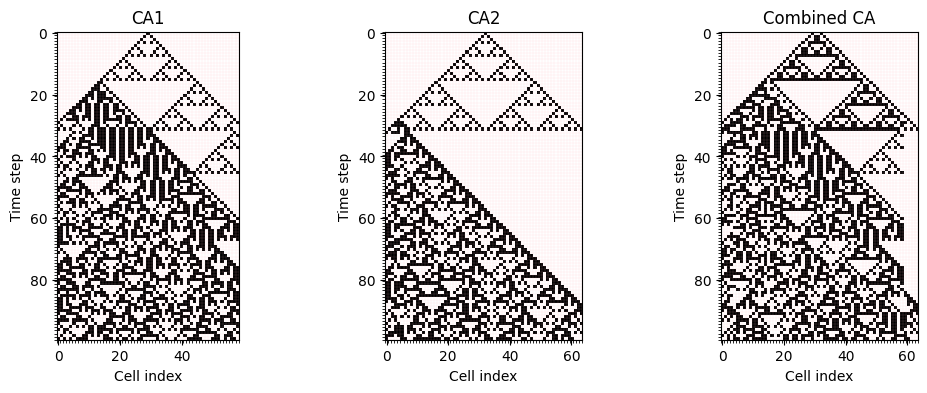}
    }
    \caption{$(k_1,k_2):(59,64)$}
    \label{fig:left}
  \end{subfigure}
  \hfill
  \begin{subfigure}[b]{0.3\textwidth}
    \centering
    \fbox{%
    \includegraphics[width=\textwidth]{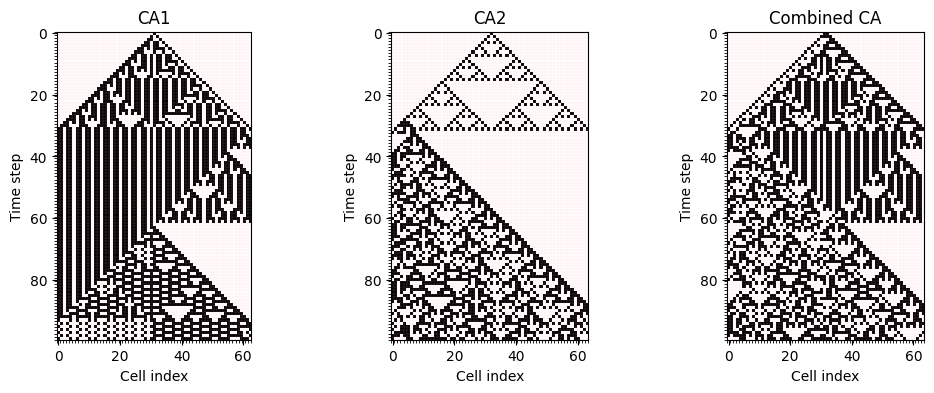}
    }
    \caption{$(k_1,k_2):(63,64)$}
    \label{fig:right}
  \end{subfigure}
  \hfill
   \begin{subfigure}[b]{0.3\textwidth}
    \centering
    \fbox{%
    \includegraphics[width=\textwidth]{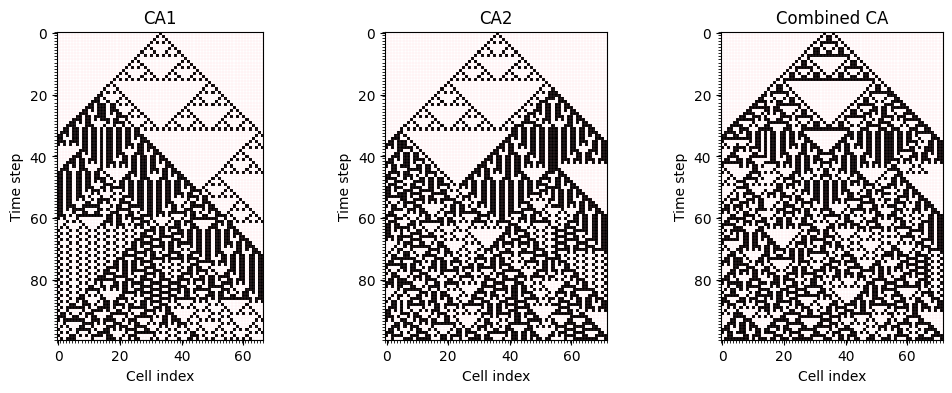}
    }
    \caption{$(k_1,k_2):(67,72)$}
    \label{fig:left}
  \end{subfigure}
  \hfill
   \begin{subfigure}[b]{0.3\textwidth}
    \centering
    \fbox{%
    \includegraphics[width=\textwidth]{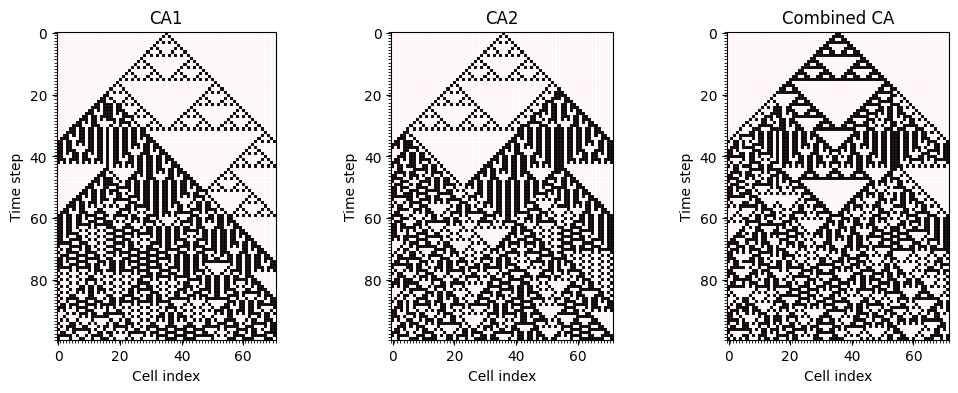}
    }
    \caption{$(k_1,k_2):(71,72)$}
    \label{fig:left}
  \end{subfigure}
  \caption{Components of Combined CA-based PRNGs with $s=1$ from \autoref{table14}}
  \label{fig:space-time1}
\end{figure}

\begin{figure}[!hbtp]
\vspace{-4.5em}
  \centering
  \begin{subfigure}[b]{0.3\textwidth}
    \centering
    \fbox{%
    \includegraphics[width=\textwidth]{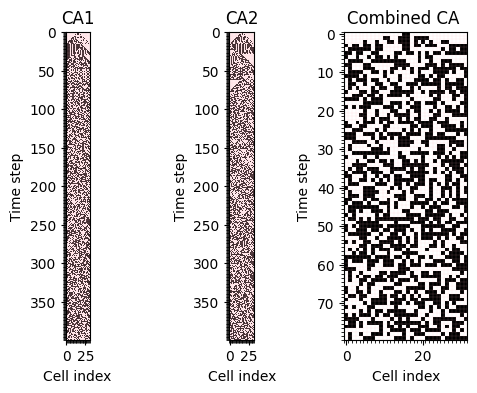}
    }
    \caption{$(k_1,k_2,s):(31,32,5)$}
    \label{fig:left}
  \end{subfigure}
  \hfill
  \begin{subfigure}[b]{0.3\textwidth}
    \centering
    \fbox{%
    \includegraphics[width=\textwidth]{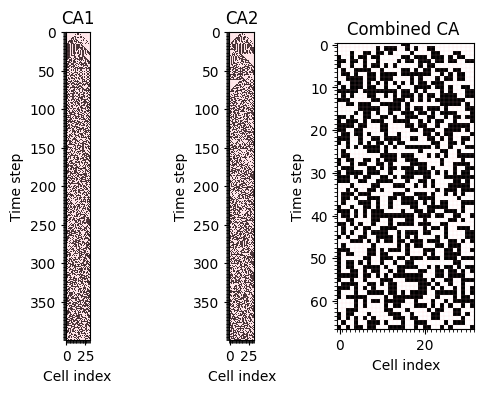}
    }
    \caption{$(k_1,k_2,s):(31,32,6)$}
    \label{fig:right}
  \end{subfigure}
  \hfill
  \begin{subfigure}[b]{0.3\textwidth}
    \centering
    \fbox{%
    \includegraphics[width=\textwidth]{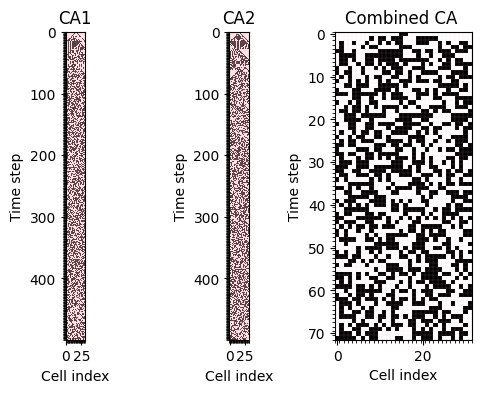}
    }
    \caption{$(k_1,k_2,s):(31,32,7)$}
    \label{fig:left}
  \end{subfigure}
  \hfill
  \begin{subfigure}[b]{0.3\textwidth}
    \centering
    \fbox{%
    \includegraphics[width=\textwidth]{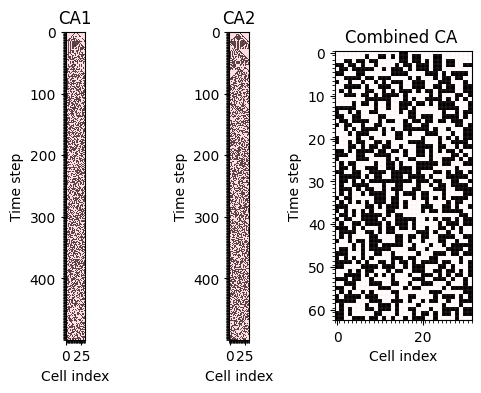}
    }
    \caption{$(k_1,k_2,s):(31,32,8)$}
    \label{fig:right}
  \end{subfigure}
  \hfill
  \begin{subfigure}[b]{0.3\textwidth}
    \centering
    \fbox{%
    \includegraphics[width=\textwidth]{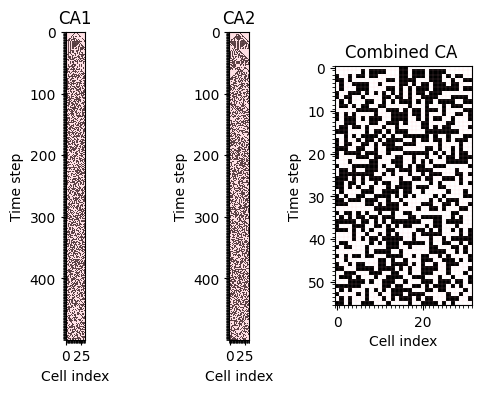}
    }
    \caption{$(k_1,k_2,s):(31,32,9)$}
    \label{fig:left}
  \end{subfigure}
  \hfill
  \begin{subfigure}[b]{0.3\textwidth}
    \centering
    \fbox{%
    \includegraphics[width=\textwidth]{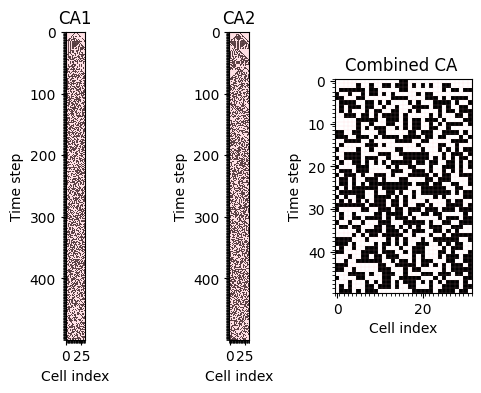}
    }
    \caption{$(k_1,k_2,s):(31,32,10)$}
    \label{fig:right}
  \end{subfigure}
  \hfill
  \begin{subfigure}[b]{0.3\textwidth}
    \centering
    \fbox{%
    \includegraphics[width=\textwidth]{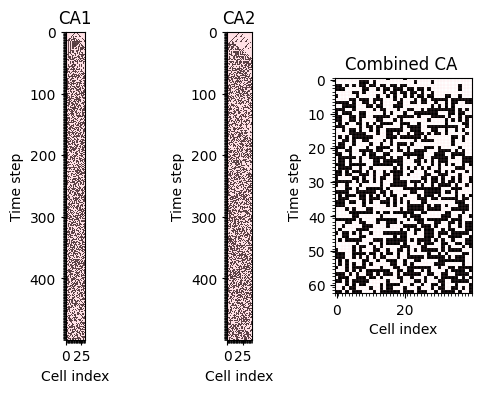}
    }
    \caption{$(k_1,k_2,s):(31,40,8)$}
    \label{fig:left}
  \end{subfigure}
  \hfill
  \begin{subfigure}[b]{0.3\textwidth}
    \centering
    \fbox{%
    \includegraphics[width=\textwidth]{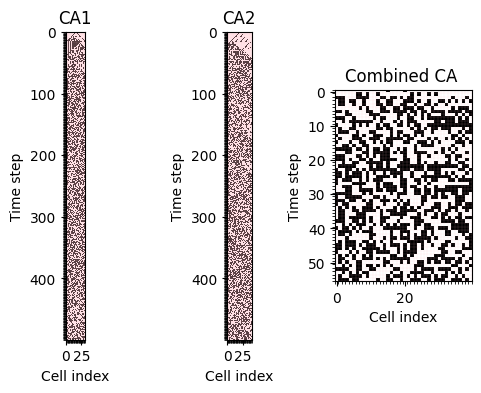}
    }
    \caption{$(k_1,k_2,s):(31,40,9)$}
    \label{fig:right}
  \end{subfigure}
   \hfill
  \begin{subfigure}[b]{0.3\textwidth}
    \centering
    \fbox{%
    \includegraphics[width=\textwidth]{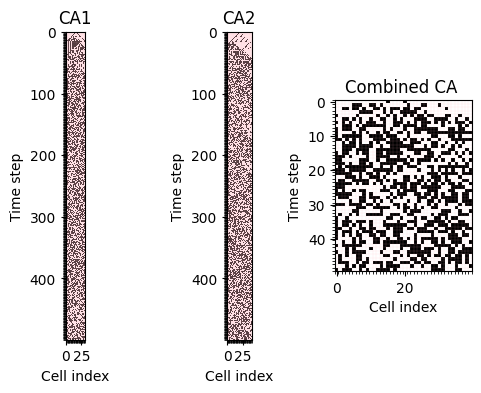}
    }
    \caption{$(k_1,k_2,s):(31,40,10)$}
    \label{fig:left}
  \end{subfigure}
  \hfill
  \begin{subfigure}[b]{0.3\textwidth}
    \centering
    \fbox{%
    \includegraphics[width=\textwidth]{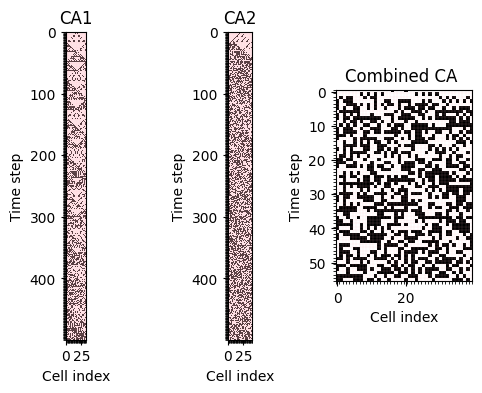}
    }
    \caption{$(k_1,k_2,s):(33,40,9)$}
    \label{fig:left}
  \end{subfigure}
  \hfill
  \begin{subfigure}[b]{0.3\textwidth}
    \centering
    \fbox{%
    \includegraphics[width=\textwidth]{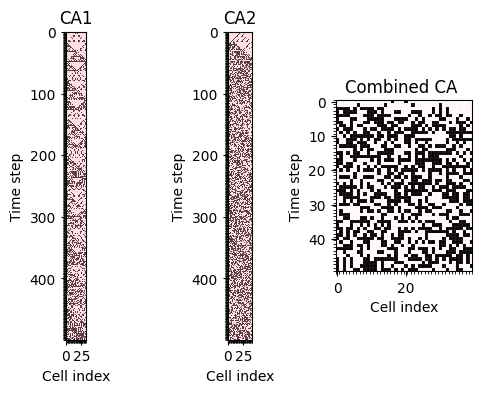}
    }
    \caption{$(k_1,k_2,s):(33,40,10)$}
    \label{fig:left}
  \end{subfigure}
  \hfill
  \begin{subfigure}[b]{0.3\textwidth}
    \centering
    \fbox{%
    \includegraphics[width=\textwidth]{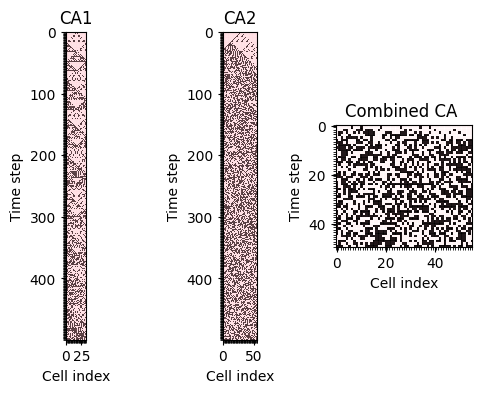}
    }
    \caption{$(k_1,k_2,s):(33,56,10)$}
    \label{fig:left}
  \end{subfigure}
  \begin{subfigure}[b]{0.3\textwidth}
    \centering
    \fbox{%
    \includegraphics[width=\textwidth]{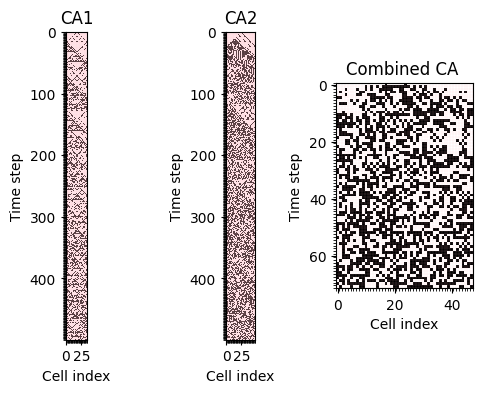}
    }
    \caption{$(k_1,k_2,s):(35,48,7)$}
    \label{fig:right}
  \end{subfigure}
  \hfill
  \begin{subfigure}[b]{0.3\textwidth}
    \centering
    \fbox{%
    \includegraphics[width=\textwidth]{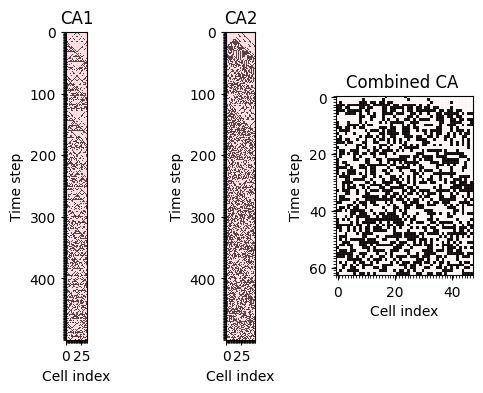}
    }
    \caption{$(k_1,k_2,s):(35,48,8)$}
    \label{fig:left}
  \end{subfigure}
  \hfill
  \begin{subfigure}[b]{0.3\textwidth}
    \centering
    \fbox{%
    \includegraphics[width=\textwidth]{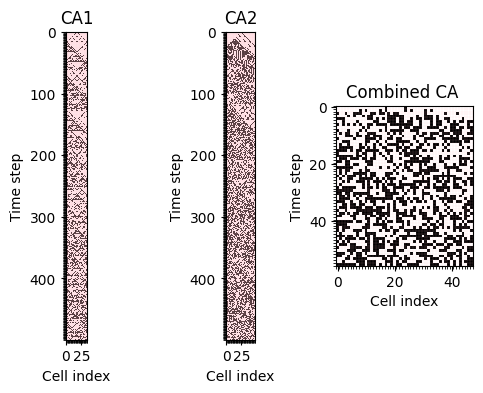}
    }
    \caption{$(k_1,k_2,s):(35,48,9)$}
    \label{fig:right}
  \end{subfigure}
  \caption{Space-time diagram for Combined CA-based PRNGs with time spacing}
  \label{fig:st1}
  \vspace{-1.5em}
\end{figure}

\begin{figure}[!hbtp]
\vspace{-4.5em}
  \centering
  \begin{subfigure}[b]{0.3\textwidth}
    \centering
    \fbox{%
    \includegraphics[width=\textwidth]{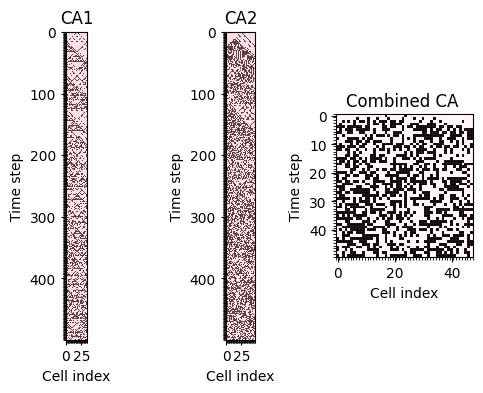}
    }
    \caption{$(k_1,k_2,s):(35,48,10)$}
    \label{fig:left}
  \end{subfigure}
  \hfill
  \begin{subfigure}[b]{0.3\textwidth}
    \centering
    \fbox{%
    \includegraphics[width=\textwidth]{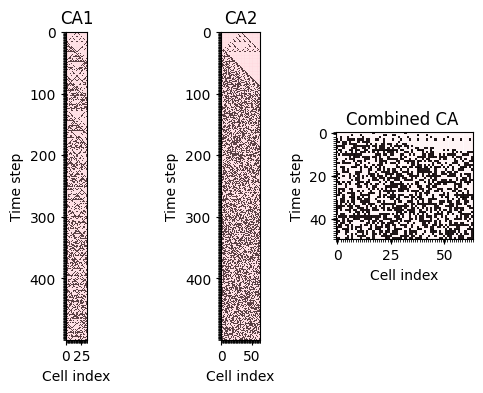}
    }
    \caption{$(k_1,k_2,s):(35,64,10)$}
    \label{fig:left}
  \end{subfigure}
  \hfill
  \begin{subfigure}[b]{0.3\textwidth}
    \centering
    \fbox{%
    \includegraphics[width=\textwidth]{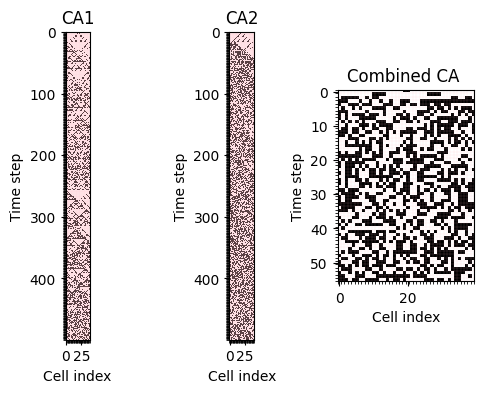}
    }
    \caption{$(k_1,k_2,s):(39,40,9)$}
    \label{fig:left}
  \end{subfigure}
  \hfill
  \begin{subfigure}[b]{0.3\textwidth}
    \centering
    \fbox{%
    \includegraphics[width=\textwidth]{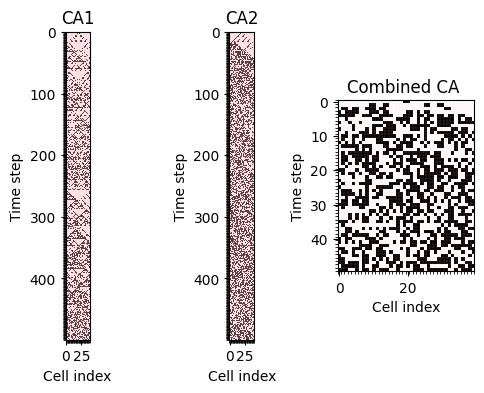}
    }
    \caption{$(k_1,k_2,s):(39,40,10)$}
    \label{fig:left}
  \end{subfigure}
  \hfill
  \begin{subfigure}[b]{0.3\textwidth}
    \centering
    \fbox{%
    \includegraphics[width=\textwidth]{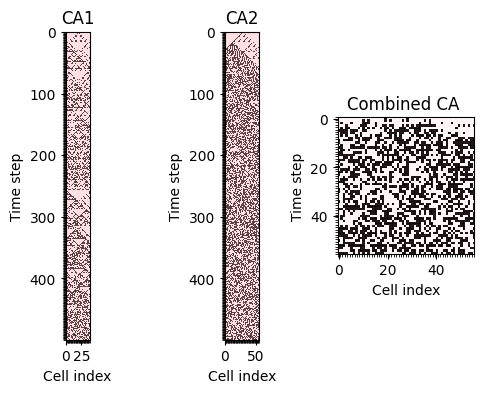}
    }
    \caption{$(k_1,k_2,s):(39,56,9)$}
    \label{fig:left}
  \end{subfigure}
  \hfill
  \begin{subfigure}[b]{0.3\textwidth}
    \centering
    \fbox{%
    \includegraphics[width=\textwidth]{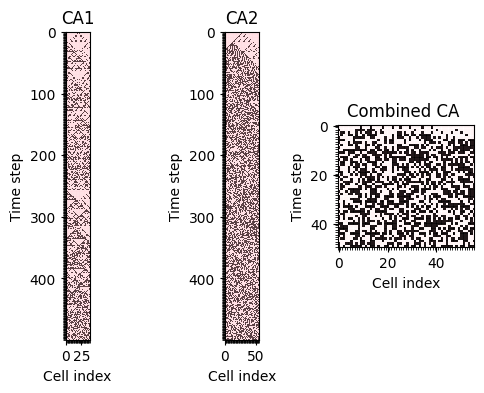}
    }
    \caption{$(k_1,k_2,s):(39,56,10)$}
    \label{fig:left}
  \end{subfigure}
  \hfill
  \begin{subfigure}[b]{0.3\textwidth}
    \centering
    \fbox{%
    \includegraphics[width=\textwidth]{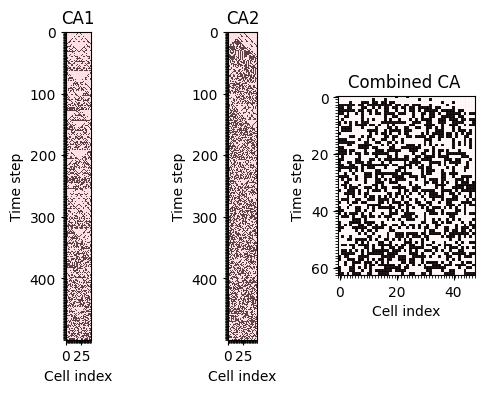}
    }
    \caption{$(k_1,k_2,s):(41,48,8)$}
    \label{fig:right}
  \end{subfigure}
   \hfill
  \begin{subfigure}[b]{0.3\textwidth}
    \centering
    \fbox{%
    \includegraphics[width=\textwidth]{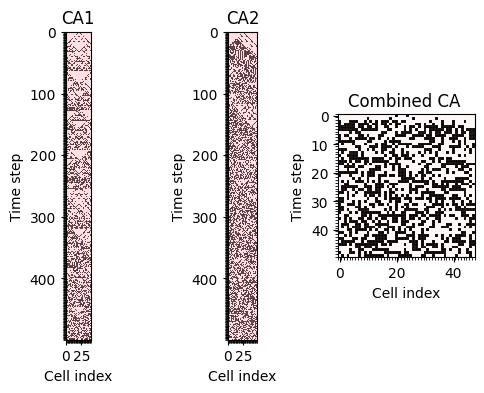}
    }
    \caption{$(k_1,k_2,s):(41,48,10)$}
    \label{fig:left}
  \end{subfigure}
  \hfill
  \begin{subfigure}[b]{0.3\textwidth}
    \centering
    \fbox{%
    \includegraphics[width=\textwidth]{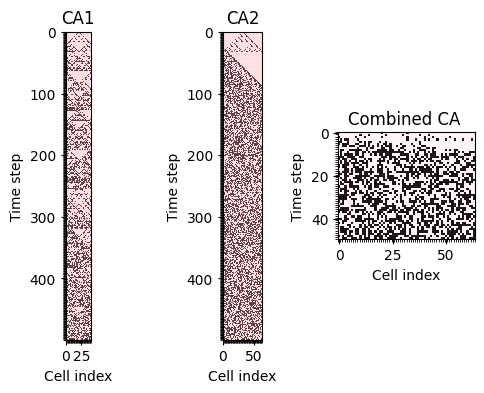}
    }
    \caption{$(k_1,k_2,s):(41,64,10)$}
    \label{fig:left}
  \end{subfigure}
  \hfill
  \begin{subfigure}[b]{0.3\textwidth}
    \centering
    \fbox{%
    \includegraphics[width=\textwidth]{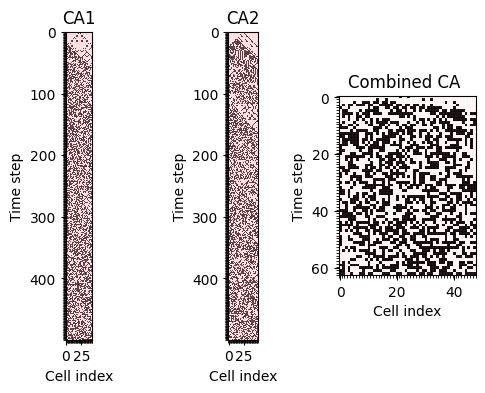}
    }
    \caption{$(k_1,k_2,s):(43,48,8)$}
    \label{fig:right}
  \end{subfigure}
  \hfill
  \begin{subfigure}[b]{0.3\textwidth}
    \centering
    \fbox{%
    \includegraphics[width=\textwidth]{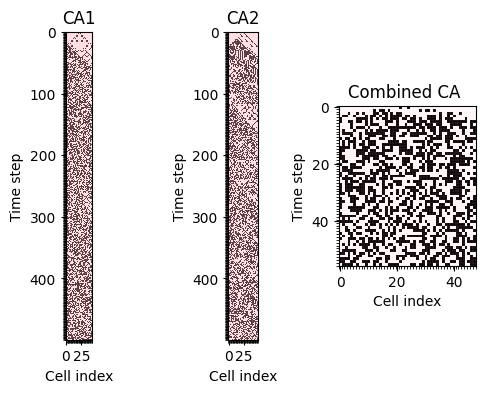}
    }
    \caption{$(k_1,k_2,s):(43,48,9)$}
    \label{fig:left}
  \end{subfigure}
  \hfill
  \begin{subfigure}[b]{0.3\textwidth}
    \centering
    \fbox{%
    \includegraphics[width=\textwidth]{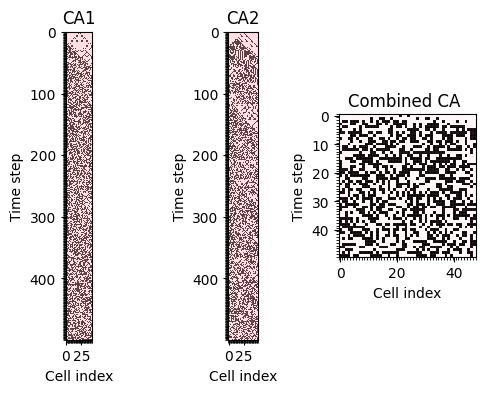}
    }
    \caption{$(k_1,k_2,s):(43,48,10)$}
    \label{fig:right}
  \end{subfigure}
    \hfill
    \begin{subfigure}[b]{0.3\textwidth}
    \centering
    \fbox{%
    \includegraphics[width=\textwidth]{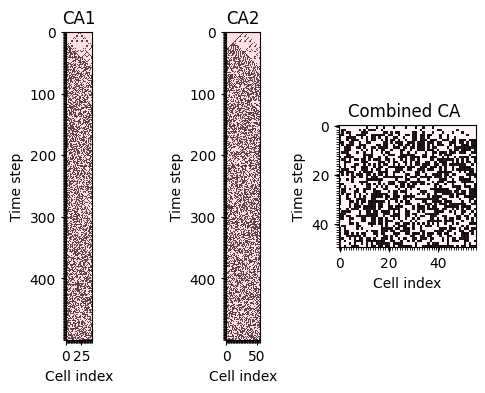}
    }
    \caption{$(k_1,k_2,s):(43,56,10)$}
    \label{fig:right}
  \end{subfigure}
  \hfill
    \begin{subfigure}[b]{0.3\textwidth}
    \centering
    \fbox{%
    \includegraphics[width=\textwidth]{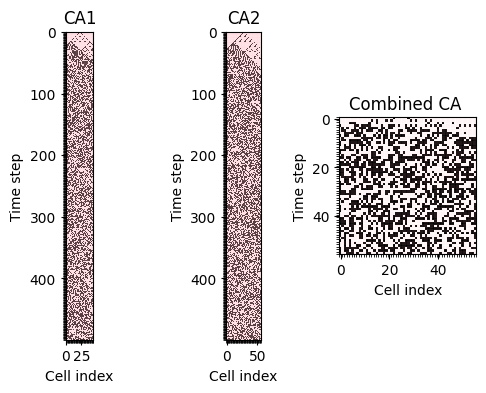}
    }
    \caption{$(k_1,k_2,s):(45,56,9)$}
    \label{fig:right}
  \end{subfigure}
  \hfill
    \begin{subfigure}[b]{0.3\textwidth}
    \centering
    \fbox{%
    \includegraphics[width=\textwidth]{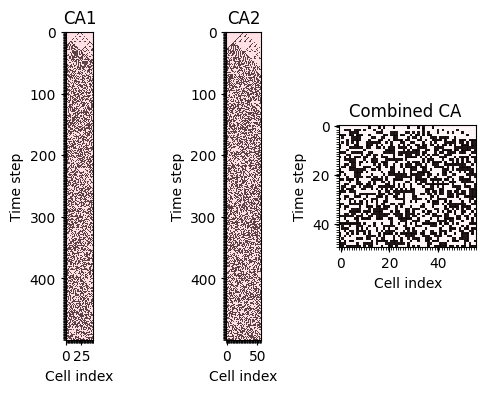}
    }
    \caption{$(k_1,k_2,s):(45,56,10)$}
    \label{fig:right}
  \end{subfigure}
\caption{Space-time diagram for Combined CA-based PRNGs with time spacing}
  \label{fig:st2}
  \vspace{-1.5em}
\end{figure}
\subsection{Verification with Space-time diagram}

Now, we explore the reasons why the combined generators are performing well by using space-time diagrams. The space-time diagram represents the graphical representation of the evolution of CA at each time step. In this space-time diagram, the x-axis denotes the CA configuration (or, the state of the PRNG), and the y-axis denotes the time steps. 
Here, black cells represent state 1, and white represents state 0. So using these space-time diagrams, we can easily analyze the behavior of CA based on the patterns generated from these diagrams \cite{PRNG1}.

We check the space-time diagrams for all combined maximal length CA-based PRNGs with time spacing from \autoref{table14} considering the initial configuration (seed) as a standard non-random pattern of middle bit as 1, and all other bits set to zero. First, we create the space-time diagram for the individual components and the combined generator without any time spacing. These are shown in Figure~\ref{fig:space-time1}. Here we observe that, the self-similar patterns of the component CAs are clearly visible in the combined CA as well when time spacing is not used. It can also be noted that, since XOR is used to combine, the inherent chaotic properties of the component CAs are not destroyed in the combined CA. Then, we create the space-time diagram with time spacing for all the combined PRNGs of \autoref{table14} and compare the behavior of individual components and the combined CA. Here, for each figure, the component CAs are evolved for $500$ time steps and according to the value of $s$, the number of unique random numbers are printed in the combined CA. These are shown in Figure~\ref{fig:st1} to Figure~\ref{fig:st5}. 
From these diagrams, we observe that, for all the combinations which perform very well in statistical tests, the patterns completely vanish in the combined generators' space-time diagrams and the numbers appear noisy in color; that proves their capability to be good source of randomness.

  \begin{figure}[!htbp]
  \vspace{-4.5em}
  \centering
    \begin{subfigure}[b]{0.3\textwidth}
    \centering
    \fbox{%
    \includegraphics[width=\textwidth]{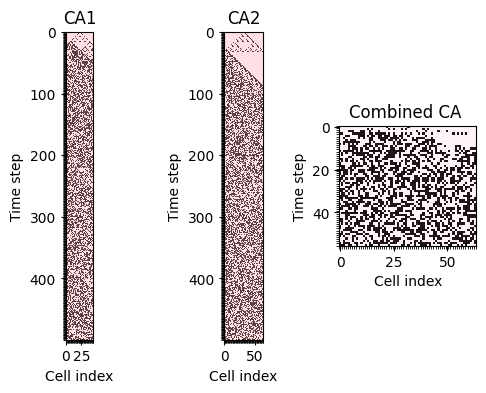}
    }
    \caption{$(k_1,k_2,s):(45,64,9)$}
    \label{fig:right}
  \end{subfigure}
  \hfill
    \begin{subfigure}[b]{0.3\textwidth}
    \centering
    \fbox{%
    \includegraphics[width=\textwidth]{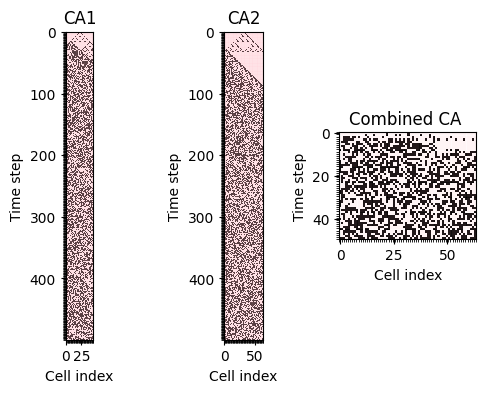}
    }
    \caption{$(k_1,k_2,s):(45,64,10)$}
    \label{fig:right}
  \end{subfigure}
  \hfill
  \begin{subfigure}[b]{0.3\textwidth}
    \centering
    \fbox{%
    \includegraphics[width=\textwidth]{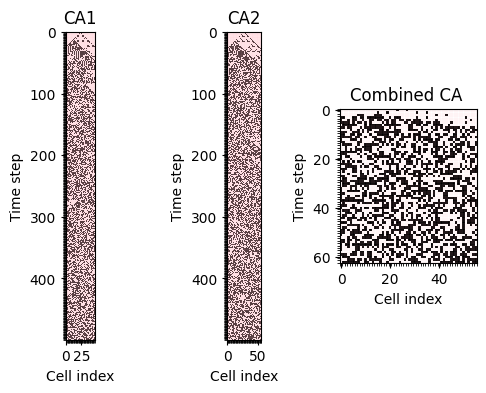}
    }
    \caption{$(k_1,k_2,s):(47,56,8)$}
    \label{fig:left}
  \end{subfigure}
  \hfill
  \begin{subfigure}[b]{0.3\textwidth}
    \centering
    \fbox{%
    \includegraphics[width=\textwidth]{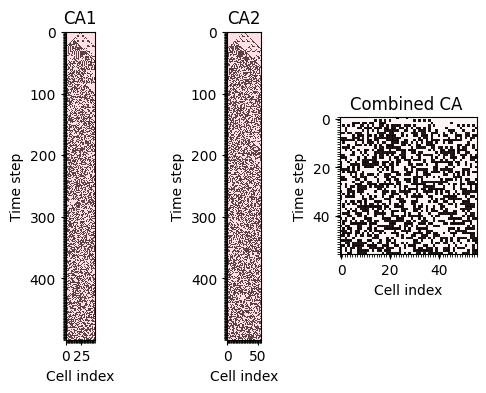}
    }
    \caption{$(k_1,k_2,s):(47,56,9)$}
    \label{fig:right}
  \end{subfigure}
\hfill
  \begin{subfigure}[b]{0.3\textwidth}
    \centering
    \fbox{%
    \includegraphics[width=\textwidth]{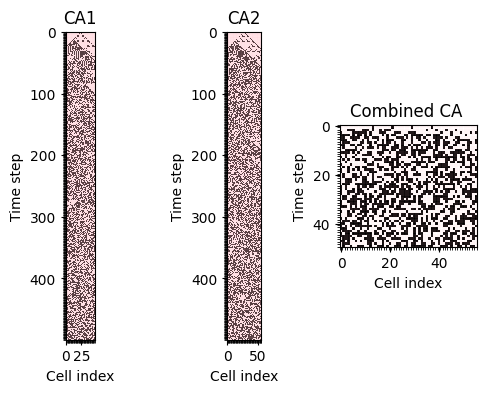}
    }
    \caption{$(k_1,k_2,s):(47,56,10)$}
    \label{fig:left}
  \end{subfigure}
  \hfill
    \begin{subfigure}[b]{0.3\textwidth}
    \centering
    \fbox{%
    \includegraphics[width=\textwidth]{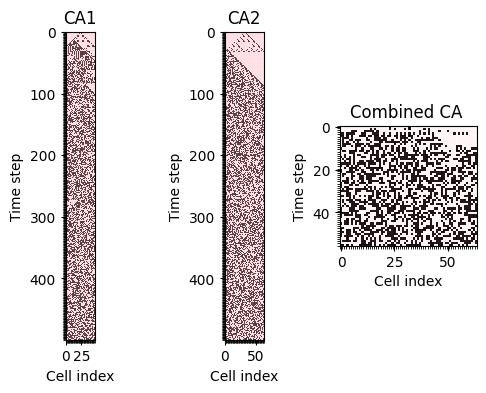}
    }
    \caption{$(k_1,k_2,s):(47,64,9)$}
    \label{fig:right}
  \end{subfigure}
  \hfill
    \begin{subfigure}[b]{0.3\textwidth}
    \centering
    \fbox{%
    \includegraphics[width=\textwidth]{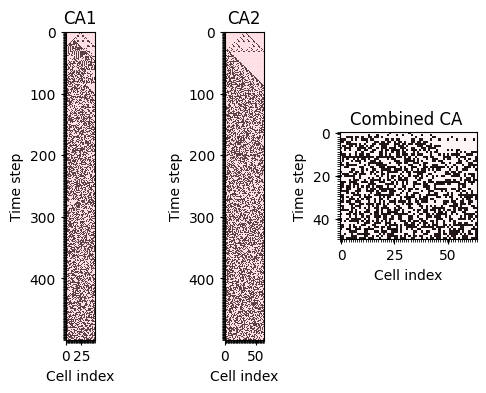}
    }
    \caption{$(k_1,k_2,s):(47,64,10)$}
    \label{fig:right}
  \end{subfigure}
  \hfill
    \begin{subfigure}[b]{0.3\textwidth}
    \centering
    \fbox{%
    \includegraphics[width=\textwidth]{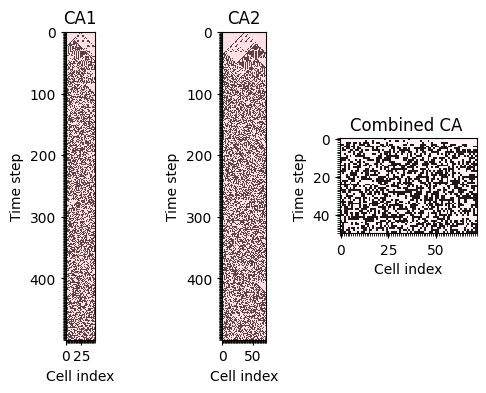}
    }
    \caption{$(k_1,k_2,s):(47,72,10)$}
    \label{fig:right}
  \end{subfigure}
  \hfill
    \begin{subfigure}[b]{0.3\textwidth}
    \centering
    \fbox{%
    \includegraphics[width=\textwidth]{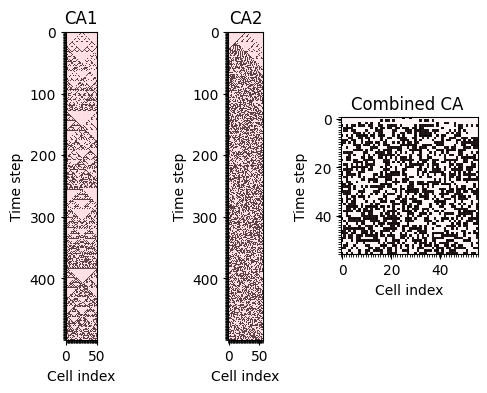}
    }
    \caption{$(k_1,k_2,s):(51,56,9)$}
    \label{fig:right}
  \end{subfigure}
    \begin{subfigure}[b]{0.3\textwidth}
    \centering
    \fbox{%
    \includegraphics[width=\textwidth]{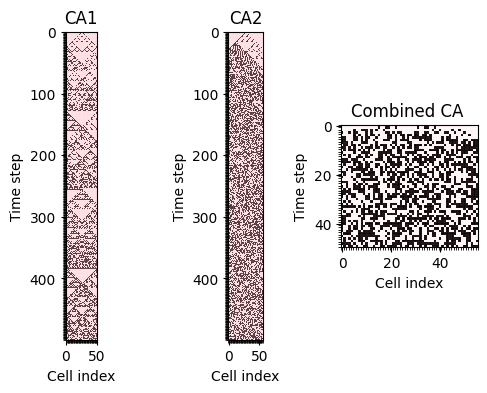}
    }
    \caption{$(k_1,k_2,s):(51,56,10)$}
    \label{fig:right}
    \end{subfigure}
    \hfill
    \begin{subfigure}[b]{0.3\textwidth}
    \centering
    \fbox{%
    \includegraphics[width=\textwidth]{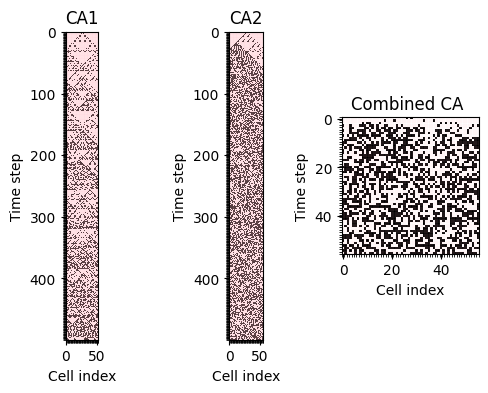}
    }
    \caption{$(k_1,k_2,s):(53,56,9)$}
    \label{fig:right}
  \end{subfigure}
  \hfill
    \begin{subfigure}[b]{0.3\textwidth}
    \centering
    \fbox{%
    \includegraphics[width=\textwidth]{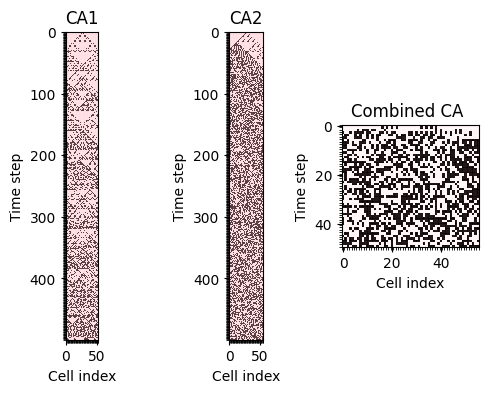}
    }
    \caption{$(k_1,k_2,s):(53,56,10)$}
    \label{fig:right}
  \end{subfigure}
   \hfill
    \begin{subfigure}[b]{0.3\textwidth}
    \centering
    \fbox{%
    \includegraphics[width=\textwidth]{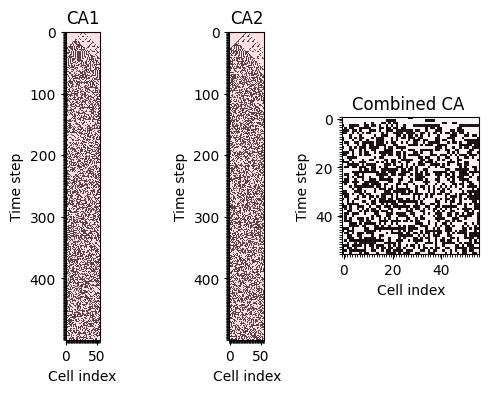}
    }
    \caption{$(k_1,k_2,s):(55,56,9)$}
    \label{fig:right}
  \end{subfigure}
   \hfill
    \begin{subfigure}[b]{0.3\textwidth}
    \centering
    \fbox{%
    \includegraphics[width=\textwidth]{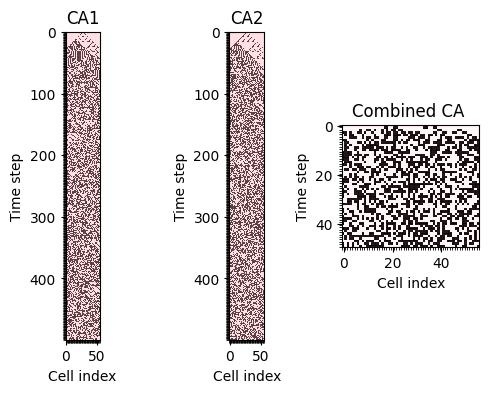}
    }
    \caption{$(k_1,k_2,s):(55,56,10)$}
    \label{fig:right}
  \end{subfigure}
  \hfill
  \begin{subfigure}[b]{0.3\textwidth}
    \centering
    \fbox{%
    \includegraphics[width=\textwidth]{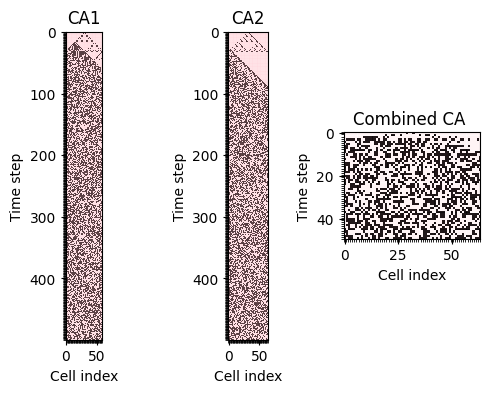}
    }
    \caption{$(k_1,k_2,s):(59,64,10)$}
    \label{fig:right}
  \end{subfigure}
  \hfill
 \caption{Space-time diagram for Combined CA-based PRNGs with time spacing}
  \label{fig:sp4}
  \vspace{-1.5em}
\end{figure}

  \begin{figure}[!htbp]
  \vspace{-2.5em}
  \centering
\begin{subfigure}[b]{0.3\textwidth}
    \centering
    \fbox{%
    \includegraphics[width=\textwidth]{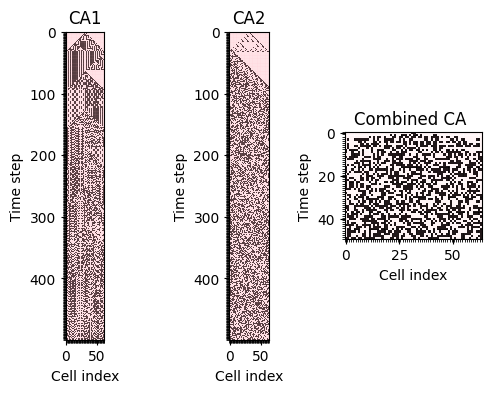}
    }
    \caption{$(k_1,k_2,s):(63,64,10)$}
    \label{fig:left}
  \end{subfigure}
   \hfill
    \begin{subfigure}[b]{0.3\textwidth}
    \centering
    \fbox{%
    \includegraphics[width=\textwidth]{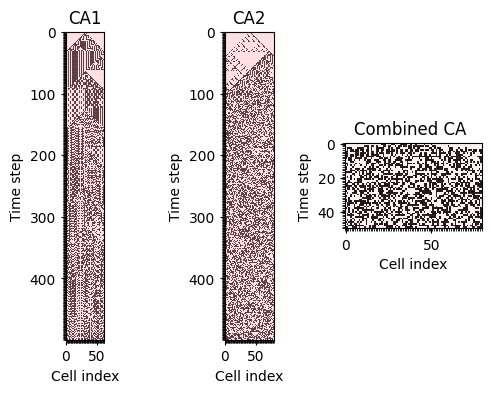}
    }
    \caption{$(k_1,k_2,s):(63,80,10)$}
    \label{fig:right}
  \end{subfigure}
  \hfill
  \begin{subfigure}[b]{0.3\textwidth}
    \centering
    \fbox{%
    \includegraphics[width=\textwidth]{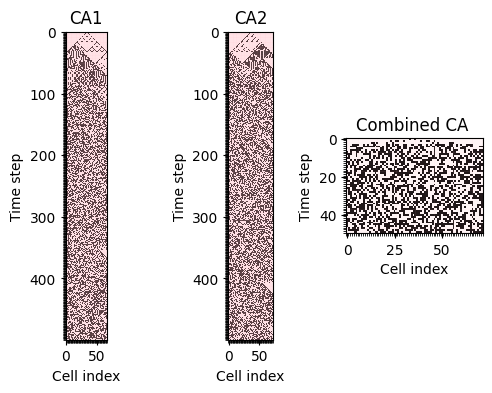}
    }
    \caption{$(k_1,k_2,s):(67,72,10)$}
    \label{fig:right}
  \end{subfigure}
   \hfill
    \begin{subfigure}[b]{0.3\textwidth}
    \centering
    \fbox{%
    \includegraphics[width=\textwidth]{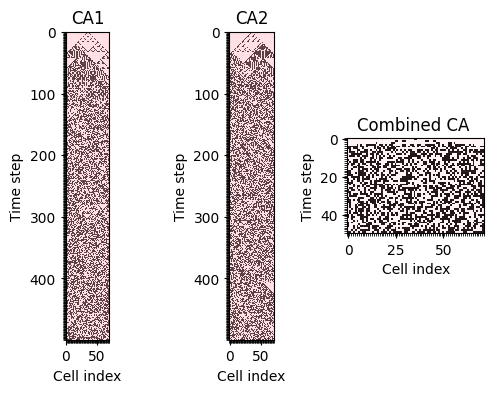}
    }
    \caption{$(k_1,k_2,s):(71,72,10)$}
    \label{fig:right}
  \end{subfigure}
   \caption{Space-time diagram for Combined CA-based PRNGs with time spacing}
  \label{fig:st5}
   \vspace{-1.0em}
\end{figure}

\subsection{Speed Test}
\label{section5.2}
 \autoref{figure2} depicts a sample code of our proposed combined CA-based PRNGs. This code implements the second algorithm described in \autoref{section5.2}. The generator is implemented and tested on both 32-bit and 64-bit computers. The performance evaluation is conducted by measuring the execution time required to generate $10^{9}$ pseudo-random numbers using the proposed combined CA-based PRNG, which satisfies the maximal period and maximal equidistribution with good statistical test results as discussed in \autoref{table14}. Then the speed of these combined CA-based PRNGs with existing linear generators such as Mersenne, GFSR4, WELL, and Tausworthe are compared as shown in \autoref{table13}. We observe that our PRNGs are faster than Mersenne Twister, but slower than the other PRNGs which do not use any time spacing.

\begin{figure}[!h]  
    \centering
    \fbox{%
        \includegraphics[width=0.7\textwidth]{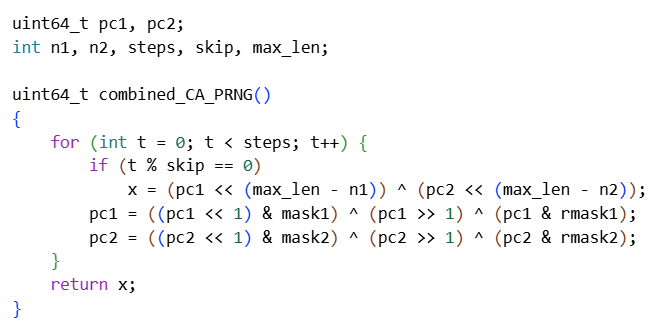}%
    }
\caption{Combined two-component CA-based PRNG }
\label{figure2}
\vspace{-1.0em}
\end{figure}

\begin{scriptsize}
\vspace{-1.0em}

\begin{longtable}{|l|c|c|}
\caption{Speed comparison of PRNGs} 
\label{table13} \\
\hline
\textbf{PRNG} & \textbf{Period Length} ($\boldsymbol{\rho}$) & \textbf{CPU Time (s)} \\
\hline
\endfirsthead
\hline
\textbf{PRNG} & \textbf{Period Length} ($\boldsymbol{\rho}$) & \textbf{CPU Time (s)} \\
\hline
\endhead

CA-PRNG $(31,32,7)$   & $2^{63}$  & 54 \\
CA-PRNG $(31,32,8)$   & $2^{63}$  & 60 \\
CA-PRNG $(31,40,8)$   & $2^{71}$  & 59   \\
CA-PRNG $(35,48,8)$   & $2^{83}$  & 87   \\
CA-PRNG $(41,48,8)$   & $2^{89}$  & 83 \\
CA-PRNG $(43,48,8)$   & $2^{91}$  & 84  \\
CA-PRNG $(47,56,8)$   & $2^{103}$ & 94   \\
Mersenne Twister       & $2^{19937}-1$ & 116 \\
Tausworthe (combined)  & $2^{88}$  & 76 \\
GFSR4                  & --        & 70 \\
WELL512a               & $2^{512}-1$ & 35 \\
WELL1024a              & $2^{1024}-1$ & 42 \\
\hline
\end{longtable}\vspace{-1.5em}

\end{scriptsize}

\section{Conclusion}
\label{section7}
In this work, we have proposed lightweight combined CA-based PRNGs using linear maximal length CAs that can satisfy the theoretical quality criteria of maximal equidistribution. We have taken two sources of maximal length CAs from Ref.~\cite{CA4,CA7} up to degree $128$ which use minimal number of different rules in the rule vector. The basic combined CA-based PRNGs without time spacing achieve close to maximal period length but not the maximally equidistributed characteristic. So, to improve the equidistribution, we apply time spacing over the combined CA-based PRNGs. But to keep them light-weight, we restrict the time spacing to upto 10 steps. These PRNGs achieve the period length close to maximal as well as the maximal equidistribution. Then experiments are carried out to assess the statistical quality of these combined CA-based PRNG with time spacing on benchmark statistical testbeds like Dieharder and TestU01. We identify many combined CA-based PRNGs which satisfy the maximal equidistribution, have almost maximal period and also pass all or majority of the empirical tests in all testbeds. Then we show that the performance of these light-weight combined linear maximal length CA-based PRNGs are comparable to the Mersenne Twister and other several existing linear generators. and even better than the Mersenne Twister in terms of equidistribution and speed. However, since we use time spacing, the computation time is increased. Therefore, future work may involve designing a CA-based PRNG that achieves period comparable to Mersenne Twister or WELL while also ensuring faster performance and maximal equidistribution.

\section*{Acknowledgments}
The authors are grateful to Prof. Sukanta Das and all the mentors of the Indian Summer School on Cellular Automata 2025 for their continuous support, guidance and valuable feedback that shaped this work. This work is partially supported by Visvesvaraya PhD Scheme, Department of Electronics and Information Technology, Ministry of Communication and IT, Govt. of India.

\end{document}